\numberwithin{equation}{section}
\numberwithin{figure}{section}
\numberwithin{table}{section}
\def\RR{\mathbb{R}}
\def\OO{\mathcal O}
\def\PP{\mathbb{P}}
\def\EE{\mathbb E}
\def\GG{\mathcal G}
\def\NN{\mathcal N}
\def\PP{\mathbb{P}}
\def\sgn{\mbox{sgn}}
\def\argmin{\mbox{argmin}}
\def\Pois{\mbox{Pois}}
\def\pperp{\perp\!\!\!\perp}
\DeclareMathAlphabet{\mymathbb}{U}{BOONDOX-ds}{m}{n}
\def\11{\mymathbb 1}
\newtheorem{thm}{Theorem}[section]
\newtheorem{assumption}{Assumption}
\newtheorem{example}[thm]{Example}
\newtheorem{remark}[thm]{Remark}
\newtheorem{lemma}[thm]{Lemma}
\newtheorem{corollary}[thm]{Corollary}
\newtheorem{proposition}[thm]{Proposition}
\numberwithin{equation}{section}
\renewenvironment{proof}{\noindent {\bf Proof.\ }}{\hfill{\rule{2mm}{2mm}}}
\renewenvironment{remark}{\noindent {\bf Remark.\ }} {\hfill{ \rule{2mm}{2mm}}}
\renewenvironment{example}{\noindent {\bf Example.\ }} {\hfill{ \rule{0mm}{0mm}}}
\begin{document}
\bibliographystyle{econometrica}
\title{Invidious Comparisons:\\  Ranking and Selection as Compound Decisions }

\author{Jiaying Gu}
\author{Roger Koenker}
\thanks{Version:  \today . 
This paper was presented as the Walras-Bowley Lecture at the 2020 World Congress of the
Econometric Society, and is dedicated to the memory of Larry Brown who introduced us to
empirical Bayes methods.  We thank Michael Gilraine, Keisuke Hirano, Robert McMillan, Stanislav Volgushev and Sihai Dave Zhao
 for useful discussions. Jiaying Gu acknowledges financial support from 
Social Sciences and Humanities Research Council of Canada. 
}

\begin{abstract}
    There is an innate human tendency, one might call it the ``league table
    mentality,'' to construct rankings.  Schools, 
    hospitals, sports teams, movies, and myriad other objects are ranked even though their inherent 
    multi-dimensionality would suggest that -- at best -- only partial orderings were possible.  
    We consider a large class of elementary ranking problems in which we observe noisy, scalar
    measurements of merit for $n$ objects of potentially heterogeneous precision
    and are asked to select a group of the objects that are ``most meritorious.''
    The problem is naturally formulated in the compound decision framework of
    Robbins's (1956) empirical Bayes theory, but it also exhibits close connections to 
    the recent literature on multiple testing.  The nonparametric maximum likelihood
    estimator for mixture models (Kiefer and Wolfowitz (1956)) is employed to construct
    optimal ranking and selection rules.  Performance  of the rules is evaluated in
    simulations and an application to ranking U.S kidney dialysis centers.
\end{abstract}
\maketitle
%\cofeAm{1}{0.5}{1}{5.5cm}{3cm}
\pagestyle{myheadings}
\markboth{\sc Invidious Comparisons}{\sc Gu and Koenker}

\section{Introduction}

In the wake of Wald's seminal monograph on statistical decision theory there was a growing awareness
that the Neyman-Pearson testing apparatus was inadequate for many important statistical tasks.
Ranking and selection problems featured prominently in this perception.  Motivated by a
suggestion of Harold Hotelling, \citeasnoun{Bahadur50} studied selection of the best of several
Gaussian populations.  Assuming that sample means were observed for each
of $K$ populations with means, $\theta_k$ and common variance, the problem of selecting the 
best population, $\theta^* = \max_i \{\theta_1, \ldots , \theta_K \}$, was formulated as choosing 
weights $z_1, \cdots , z_K$ to minimize, 
\[
L(\theta, z) = \theta^* - \sum_{k=1}^K z_k \theta_k/\sum_{k=1}^K z_k.
\]
Bahadur showed that among ``impartial decision rules,'' i.e. permutation equivariant rules, 
it was uniformly optimal  to select only the population with the largest
sample mean, that is to choose $z_i^* = 1$ if $\bar X_i = \max \{ \bar X_1, \cdots , \bar X_K \} $
and $z_i^* = 0$ otherwise,
%This bold play, put-all-your-money-on-one-horse solution was somewhat undermined by subsequent 
%developments in portfolio theory that brought risk aversion into the picture. 
thereby clearly demonstrating that procedures that did preliminary
tests of equality of means and then chose $z_i > 0$ for several or even all of the populations
when tests failed to reject were inadmissible.  This finding was reinforced in \citeasnoun{BR50}
who focused on the two-sample setting but relaxed the common variance assumption.
In related work, \citeasnoun{Bechhofer} and \citeasnoun{Gupta56} sought to optimize
the number of selected populations as well as their identities, see \citeasnoun{GuptaPan} 
and \citeasnoun{BKS} for extensive reviews of subsequent developments.

\citeasnoun{GoelRubin} pioneered the hierarchical Bayesian approach to selection
that has been adopted by numerous authors in the ensuing decades, early on by
\citeasnoun{BergerDeeley} and \citeasnoun{LairdLouis}. \citeasnoun{Portnoy} showed
that rankings based on best linear predictors were optimal in Gaussian multivariate variance components
models, but cautioned that departures from normality could easily disrupt this optimality.   
A notable feature of the hierarchical model paradigm is
the recognition that sample observations may exhibit heterogeneous precision; this is
typically accounted for by assuming known variances for observed sample means.  As ranking and
selection methods became increasingly relevant in genomic applications there has been renewed
interest in loss functions and linkages to the burgeoning literature on multiple testing.  Our
perspective is informed by recent developments in the nonparametric estimation of mixture models
and its relevance for a variety of compound decision problems.  This approach seeks to reduce
the reliance on Gaussian distributional assumptions that pervades the earlier literature. 
As we have argued elsewhere,
\citeasnoun{ProbRob}, and \citeasnoun{Minimalist} nonparametric empirical Bayes methods offer
powerful complementary methods to more conventional parametric hierarchical Bayes for multiple
testing and compound decision problems.  Our primary objective in this paper is to elaborate
this assertion for ranking and selection applications.  Throughout we try to draw parallels
and contrasts with the literature on multiple testing.  We will restrict our attention to settings
where we observe a scalar estimate of an unobserved latent quality measure accompanied by some 
measure of its precision, thereby evading more complex multivariate settings, as in \citeasnoun{accuracy}
who employ quantile regression methods.

An important motivation for revived interest in ranking and selection problems in econometrics
has been the influential work of \nocite{Chettya,Chettyb,ChettyMob} Chetty and his collaborators on 
teacher evaluation and geographic mobility in the U.S.  This has stimulated the important recent 
work of \citeasnoun{MRSW} proposing  new resampling methods for constructing confidence sets 
for ranking and selection for a finite population.  \citeasnoun{AKPM} propose an innovative
approach to the construction of confidence intervals for classical, linear shrinkage, empirical
Bayes estimators of the type used by Chetty.  Recent work by \citeasnoun{AKM} and \citeasnoun{GuoHe}
propose new confidence interval constructions for highly ranked individuals or treatments 
influenced by recent contributions to the ``inference after model selection'' literature.
In contrast to these inferential approaches we focus instead on the
complementary perspective of compound decision making, constructing decision rules for selecting 
the best, or worst, populations subject to control of the expected number of elements selected and 
among those selected, the expected proportion of false discoveries.  Rather than treating each 
selection decision in isolation, the compound decision framework tries to exploit their common 
structure to produce improved \emph{collective} performance.  Our approach is thus more closely 
aligned to that of \citeasnoun{KlineWalters} who study decision rules for assessing employer
discrimination from experiments involving fictitious job applications using closely related GMM
methods for binomial mixture models.

\citeasnoun{GGM} study teacher value added estimation employing nonparametric maximum likelihood methods
for estimating Gaussian mixture models as we advocate below.
Their analysis of data from both North Caroline and Los Angeles illustrates the advantages of more flexible
mixture models for latent value added.  In contrast to the present work, they focus on Bayes rules for
posterior means that are often used to study teachers' influence on students' future outcomes. 
These more flexible nonparametric empirical Bayes methods improve upon traditional linear shrinkage rules 
especially in the tails of the distribution where policy attention is usually focused.
This is a valuable, complementary perspective to the ranking and selection objectives of the present work.

%demonstrates the importance of a nonparametric specification of the latent teacher quality distribution. They focus on the shrinkage estimator under the mean squared error loss for the teacher value added estimates, which is relevant if our task is to provide an estimates for each individual teachers. These estimates are often needed for further analysis on teachers’ influence on students’ future outcome. However, in many education policy, especially those that designed for incentive payment or school accountability, we are only interested to know who are the top and bottom teachers in order to implement these policy. The tradition approach is to get the shrinkage estimates for all teachers, which then are used to produce a ranking to detrained the top or bottom individuals. Our loss function is designed specifically for such tasks and hence our decision rule more directly targets the relevant population. 

Before proceeding it is important to acknowledge that despite its universal appeal and application
there is something inherently futile about many ranking and selection problems as intimated by our title.
If the latent measure of true quality is Gaussian, as assumed in virtually all of the econometric
applications of the selection problem, and we wish to select the top ten percent of 
individuals given that their true quality is contaminated by Gaussian noise,  accurate
selection can be very challenging when the signal to noise ratio is low.  We will see that conventional 
linear shrinkage as embodied in the classical James-Stein formula can improve performance considerably
over naive maximum likelihood (fixed effects) procedures, and some further improvement is possible
by carefully tailoring the decision rules for tail probability loss. However, we find that even 
oracle decision rules that incorporate complete knowledge of the precise distributional features of the
problem may not be able achieve better than about even odds that selected individuals have latent ability
above the selection thresholds when measurement error is comparable in magnitude to Gaussian variability in
latent ability.  When the latent distribution of ability is heavier tailed then
selection becomes somewhat easier, and more refined selection rules are more advantageous, but as
we will show the selection problem still remains quite challenging. 

Thus, a secondary objective of the paper is to add another cautionary voice to those who have 
already questioned the reliability of existing ranking and selection methods.
A critical overview of the role of ranking and selection in public policy applications is provided
by \citeasnoun{LeagueTables}.  It is widely acknowledged that league tables as currently employed
can be a pernicious influence on policy, a viewpoint underscored in \citeasnoun{GelmanPrice}. 
While much of this criticism can be attributed to inadequate data collection and inherently low
signal to noise ratios, we believe that there is also room for methodological improvements.

Section \ref{sec:Compound} provides a brief overview of compound decision theory
and describes nonparametric methods for estimation of Gaussian mixture models.
Section \ref{sec:HomoVar} introduces a basic framework for our approach to ranking and
selection in a setting with homogeneous precision of the observed measurements. 
In Section \ref{sec:KnownVariances} we introduce heterogeneous precision of 
known form, and Section \ref{sec:UnknownVar} considers settings in which the joint distribution of the 
observed measurements and their precision determines the form of the ranking and selection rules.  
Optimal ranking and selection rules are derived in each of these sections under the assumption that
the form of the mixing distribution of the unobserved, latent quality of the observations is known.
Section \ref{sec:adaptive} introduces feasible ranking and selection rules and conditions under which
they attain the same asymptotic performance as the optimal rules.
Section \ref{sec:Simulation} then compares several {\it{feasible}} ranking and selection methods, some that
ignore the compound decision structure of the problem, some that employ parametric empirical Bayes
methods and some that rely on nonparametric empirical Bayes methods.  
Finally, Section \ref{sec:Dialysis} describes an empirical application on evaluating the performance
of medical dialysis centers in the United States.  Proofs of all formal results are collected in
Appendix A.

\section{The Compound Decision Framework}
\label{sec:Compound}

\citeasnoun{Robbins.51} posed a challenge to the nascent minimax decision theory of 
\citeasnoun{Wald50}:
Suppose we observe independent Gaussian realizations, 
$Y_i \sim \NN(\theta_i , 1), \; i = 1, \cdots, n$
with means $\theta_i$ taking either the value $+1$ or $-1$.  
We are asked to estimate the $n$-vector
$\theta = (\theta_1, \cdots , \theta_n)$ subject to mean absolute error loss,
\[
L(\hat \theta , \theta) = n^{-1} \sum_{i=1}^n | \hat \theta_i - \theta_i |.
\]
When $n = 1$ Robbins shows that the minimax decision rule is $\delta (y) = \sgn (y)$;
in the least favorable variant of the problem malevolent nature chooses $\pm 1$ with
equal probability, and the optimal response is to estimate $\theta_i = +1$ when
$Y_i$ is positive, and $\theta_i = -1$ otherwise.  Robbins goes on to show that when
$n > 1$, this rule remains minimax, each coordinate is treated independently as if
viewed in complete isolation.  This is also the maximum likelihood estimator, and may
be viewed in econometrics terms as a classical fixed-effects estimator.  
But is it at all reasonable?

Doesn't our sample convey information about the relative frequency of $\pm 1$ that
might potentially contradict the pessimistic  presumption of the minimax rule?  If
we happened to know the unconditional probability, $p = \PP (\theta_i = 1)$, then the 
conditional probability that $\theta = 1$ given $Y_i = y$, is given by,
\[
\PP(\theta = 1 | y) = \frac{p \varphi(y-1)}{p \varphi(y-1) + (1-p) \varphi(y + 1)},
\]
where $\varphi$ denotes the standard Gaussian density.
We should guess $\hat \theta_i = 1$ if this probability exceeds 1/2, giving us
the revised decision rule,
\[
\delta_p (y) = \sgn ( y - \tfrac{1}{2} \log ( (1 - p)/p)).
\]
Each observed, $y_i$, is modified by a simple logistic perturbation  before 
computing the sign.  Our observed random sample, $\bm{y} = (y_1, \cdots , y_n)$, is informative about
$p$.  We have the log likelihood,
\[
\ell_n (p | y) = \sum_{i=1}^n \log ( p \varphi(y_i - 1) + (1-p) \varphi(y_i + 1)),
\]
which could be augmented by a prior of some form, if desired, to obtain a posterior
mean for $p$ and a plug-in Bayes rule for estimating each of the $\theta_i$'s.  The
Bayes risk of this procedure is substantially less than the minimax risk when $p \neq 1/2$
and is asymptotically equivalent to the minimax risk when $p = 1/2$.   This is the first
principle of compound decision theory:  borrowing strength across an entire ensemble of
related decision problems yields improved collective performance. 

What happens when we relax the restriction on the support of the $\theta$'s and allow support
on the whole real line?  We now have a general Gaussian mixture setting where the observed
$Y_i$'s have marginal density given by the convolution, $f = \varphi * G$, that is,
\[
f(y) = \int \varphi(y - \theta) dG(\theta),
\]
and instead of merely needing to estimate one probability we need an estimate of an
entire distribution function, $G$.  \citeasnoun{KW}, anticipated by an abstract of \citeasnoun{R50},
established that the nonparametric maximum likelihood estimator (NPMLE),
\[
\hat G = \argmin_{G \in \GG} \{ - \sum_{i=1}^n \log f(y_i) \; | \; 
    f(y_i) = \int \varphi(y_i - \theta)dG(\theta) \}
\]
where $\GG$ is the space of probability measures on $\RR$, is a consistent estimator of $G$.
This is an infinite dimensional convex optimization problem with a strictly convex objective
subject to linear constraints.  See \citeasnoun{Lindsay.95} and \citeasnoun{KM} for
further details on the geometry and computational aspects of the NPMLE problem.  
\citeasnoun{HS} pioneered this approach in econometrics to argue that more flexible models
of heterogeneity were needed to get reliable estimates of duration dependence in survival models.

A powerful consequence of the seemingly innocuous condition that $G$ must be non-decreasing
is that $\hat G$ must be atomic, a discrete distribution with fewer than $n$ atoms.  
A secondary consequence is that the NPMLE is ``self-regularizing,''  that is, the number, locations 
and mass of the atoms are all determined jointly by the optimization without any recourse to
auxiliary tuning parameters.  This is all a consequence of the classical Carath\'eodory theorem, 
but until quite recently little was known about the precise growth rate of the number of atoms 
characterizing the solutions, although empirical experience suggested it was quite slow.
\citeasnoun{PW20} have recently established that for $G$ with sub-Gaussian tails the cardinality of 
its support, i.e. the number of atoms, of $\hat G$,  grows like $\OO(\log n)$.  
Thus, without any further penalization, maximum likelihood automatically selects a highly parsimonious 
$\hat G$.  This is in sharp contrast to the notorious difficulties with maximum likelihood for finite 
dimensional mixture models, or with Gaussian deconvolution employing Fourier methods. 

Having seen that the upper bound on the complexity of the NPMLE $\hat G$ was only $\OO (\log n)$, 
one might wonder whether $\OO (\log n)$ mixtures are ``complex enough'' to adequately represent 
the process that generated our observed data.  \citeasnoun{PW20} also address this concern: they note that 
for any sub-Gaussian $G$, there exists a discrete distribution, $G_k$, with $k = \OO ( \log n)$ atoms,
such that for $f_k = \varphi * G_k$, the total variation distance, $TV(f, f_k) = o (1/n)$, and consequently
there is no statistical justification for considering estimators of $G$ whose complexity grows
more rapidly than $\OO (\log n)$.  This observation is related to recent literature on
generative adversarial networks, e.g. \citeasnoun{AIMM20}, that  target models and estimators 
that, when simulated, successfully mimic observed data.  

Other nonparametric maximum likelihood estimators for $G$ are potentially also of interest.
\citeasnoun{E16} has proposed an elegant log-spline sieve approach
that yields smooth estimates of $G$; this has advantages especially from an inferential
perspective, at the cost of reintroducing the task of selecting tuning parameters.
An early proposal of \citeasnoun{LL91} merged parametric empirical Bayes estimation
of $G$ with an EM step that pulled the parametric estimate back toward the NPMLE.

Given an estimate, $\hat G$, it is straightforward to compute posterior distributions
for each sample observation, or for that matter, for out-of-sample observations.
In effect, we have estimated the prior, as in \citeasnoun{Robbins.51} binary means problem,
but we have ignored the variability of $\hat G$ when we adopt plug-in procedures that
use it.  This may account for the improved performance of \textit{smoothed} estimates of $G$
in certain inferential problems,  as conjectured in \citeasnoun{K20} and studied in more
detail in \citeasnoun{JZ21}.  In the sequel we will compare
ranking and selection procedures based on various functionals of these posterior
distributions.  A leading example is the posterior mean, but ranking and selection problems
suggest other functionals of potential interest.

If we are asked to estimate the $\theta_i$'s subject to quadratic loss, and assuming standard
Gaussian noise, the Bayes rule is given by the posterior mean,
\begin{equation} \label{tweedie} 
\delta (y) = \EE (\theta | y) = y + f'(y)/f(y).
\end{equation}
\citeasnoun{efron.11} refers to this as Tweedie's formula, it appears in \citeasnoun{Robbins.56}
credited to M.C.K. Tweedie.  Appendix A of \citeasnoun{Haydn} provides an elementary derivation.
The nonlinear shrinkage term takes a particularly simple affine form when $G$ happens to 
be Gaussian, since in this case $f$ is itself also Gaussian and the formula reduces to 
well-known linear shrinkage variants of classical Stein rules.

%A striking feature of the Tweedie formula is that the dependence on $G$ of the nonlinear 
%shrinkage term is hidden in the log derivative which depends only on the
%marginal density of the $Y_i$'s.  This creates a temptation to believe that
%estimation of $G$ is really superfluous, that it would suffice to simply estimate $f$ instead.
%This $f$-modeling temptation should be resisted for a variety of reasons elucidated in 
%\citeasnoun{E19} not the least of which is that it is difficult to account for the
%full probabilistic structure of the problem when estimating $f$ directly.  For example,
%it is known that the Bayes rule for the posterior mean in the Gaussian case is monotone
%increasing in $y$, a fact not easily incorporated into the conventional kernel density
%estimation approach.  \citeasnoun{KM} consider this point in more detail. 
%More importantly, estimation of $G$ opens the way to improved methods
%for attacking many practical decision problems including ranking and selection.
%In addition, with the knowledge of $G$, we can also make use of any information on the posterior distribution for ranking individuals, which turns out to be useful for selection.}

We have focused in this brief overview on compound decision problems for Gaussian location
mixtures and posterior means, however the NPMLE is adaptable to a wide variety of other mixture problems
and other loss functions that imply other posterior functionals as we will see in the next section.
\citeasnoun{E19} and the discussion thereof offers a broader perspective on related methods.
Implementation of several NPMLE options are described in \citeasnoun{REBayesVig} and are available 
in the R package REBayes of \citeasnoun{REBayes}.

\section{Homogeneous Variances}
\label{sec:HomoVar}
Suppose that you are given real-valued
%\footnote{When I first wrote this it occurred to me that we might next consider complex-valued 
%measurements, i.e. with an imaginary component!}
measurements, $y_i : i = 1, 2, \cdots , n$ of some attribute like
test score performance for students or their teachers, survival rates
for hospital surgical procedures, etc., and are told that 
the measurements are exchangeable and approximately Gaussian with 
unknown means $\theta_i$ and known variances $\sigma_i^2$ assumed 
provisionally to take the same value $\sigma^2$.
Your task, should you decide to accept it, is to choose a group of size
not to exceed $\alpha n$ of the elements with the largest $\theta_i$'s.
One's first inclination might be to view each $y_i$ as the maximum
likelihood estimate for the corresponding $\theta_i$, and select the
$\alpha n$ largest observed values, but the compound decision framework
suggests that it would be better to treat the problems as an ensemble.
A second natural inclination might be to compute posterior means of the
$\theta$'s with some linear or nonlinear shrinkage rule, rank them and
select the $\alpha$ best, but we will see that this too may be questionable.

\subsection{Posterior Tail Probability} \label{sec: lfdr}
A natural alternative to ranking by the posterior means is to rank by posterior tail probabilities.
Let $\theta_\alpha = G^{-1} (1 - \alpha)$,  and define,
$v_{\alpha}(y):= \mathbb{P}(\theta \geq \theta_\alpha | Y = y)$, 
then ranking by posterior tail probability gives the decision rule, 
\[
\delta(y) = \11\{v_\alpha(y) \geq \lambda_\alpha\}
\]
where the threshold $\lambda_\alpha$ is chosen so that $\mathbb{P}(v_\alpha(Y) \geq \lambda_\alpha) = \alpha$. 
This ranking criterion has been proposed by \citeasnoun{HN} motivated as a ranking device for 
a fixed quantile level $\alpha$. It can be interpreted in multiple testing terms:
$1-v_{\alpha}(y)$  is the local false discovery rate of \citeasnoun{ETST} and 
\citeasnoun{storey02}, for testing the hypothesis 
$H_0: \theta < \theta_\alpha$ vs. $H_A: \theta \geq  \theta_\alpha$.  To see this, let
$h_i$ be a binary random variable $h_i= \11\{\theta_i \geq \theta_\alpha\}$, 
the loss function for observation $i$ is 
\[
L(\delta_i, \theta_i) = \lambda \11\{h_i = 0, \delta_i = 1) + \11\{h_i = 1, \delta_i = 0\},
\]
for a generic Lagrange multiplier, $\lambda$.  The compound Bayes risk is, 
\[
\mathbb{E}[\sum_{i=1}^n L(\delta_i, \theta_i)] = n [ \alpha + \int \delta(y) [(1-\alpha) \lambda f_0(y) - \alpha f_1(y)] dy]
\]
where $f_0(y) = (1-\alpha)^{-1} \int_{-\infty}^{\theta_\alpha} \varphi(y|\theta,\sigma^2) dG(\theta)$ 
and $f_1(y) = \alpha^{-1} \int _{\theta_\alpha}^{+\infty} \varphi(y|\theta, \sigma^2) dG(\theta)$, 
$\varphi(y| \theta, \sigma^2) = \varphi((y - \theta)/\sigma)/\sigma$. 
The Bayes rule for a fixed $\lambda$ is 
\[
\delta(y_i) = \11\Big \{ v_\alpha(y_i) \geq \frac{\lambda}{1+\lambda}\Big \}
\]
where $v_\alpha(y) =\alpha f_1(y)/f(y) = \mathbb{P}(\theta \geq \theta_\alpha | Y = y)$,
and $f(y) = (1 - \alpha) f_0(y) + \alpha f_1(y).$
Provided that $v_\alpha(y)$ is monotone in $y$ a unique $\lambda_\alpha$ can be found 
such that $\mathbb{P}(\delta(Y) = 1) = \mathbb{P}(v_\alpha(Y)\geq \lambda_\alpha/(1+\lambda_\alpha)) = \alpha$. 

\begin{lemma} \label{lem:Nestedness}
For fixed $\alpha$, assuming $\mathbb{E}_{\theta|Y}[\nabla_y \log \varphi(y | \theta, \sigma^2)|Y]< \infty$, $v_\alpha(y)$ is monotone in $y$ and 
the sets $\Omega_{\alpha} := \{Y: v_\alpha(Y) \geq  \lambda_\alpha/(1+\lambda_\alpha)\}$ 
have a nested structure, that is if $\alpha_1 > \alpha_2$, then $\Omega_{\alpha_2} \subseteq \Omega_{\alpha_1}$. 
\end{lemma}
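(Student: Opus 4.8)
The statement decomposes into two claims: that $v_\alpha$ is nondecreasing in $y$, and that the $\Omega_\alpha$ are nested. I would prove the monotonicity first, as the crux, and then obtain nestedness almost for free, since a ranking statistic monotone in $y$ necessarily produces half-line selection regions whose sizes are pinned down by the (fixed) marginal law of $Y$. For the monotonicity, write $v_\alpha(y)=g(y)/f(y)$ with $g(y)=\int_{\theta_\alpha}^{\infty}\varphi(y\mid\theta,\sigma^2)\,dG(\theta)$ and $f(y)=\int\varphi(y\mid\theta,\sigma^2)\,dG(\theta)$. Since $\nabla_y\varphi(y\mid\theta,\sigma^2)=\varphi(y\mid\theta,\sigma^2)\,\nabla_y\log\varphi(y\mid\theta,\sigma^2)=\varphi(y\mid\theta,\sigma^2)\,(\theta-y)/\sigma^2$, the integrability hypothesis licenses differentiation under the integral sign, and the quotient rule gives, after cancellation,
\[
v_\alpha'(y)=\sigma^{-2}\bigl(\mathbb{E}[\theta\,\11\{\theta\ge\theta_\alpha\}\mid Y=y]-\mathbb{E}[\theta\mid Y=y]\,\mathbb{P}(\theta\ge\theta_\alpha\mid Y=y)\bigr)=\sigma^{-2}\operatorname{Cov}\bigl(\theta,\11\{\theta\ge\theta_\alpha\}\mid Y=y\bigr).
\]
As $\theta\mapsto\theta$ and $\theta\mapsto\11\{\theta\ge\theta_\alpha\}$ are both nondecreasing, this conditional covariance is nonnegative, so $v_\alpha$ is nondecreasing. (Alternatively, differentiation can be avoided entirely: the Gaussian location family has monotone likelihood ratio in $y$, so the posterior law of $\theta$ given $Y=y$ is stochastically increasing in $y$, whence $y\mapsto\mathbb{P}(\theta\ge\theta_\alpha\mid Y=y)$ is nondecreasing directly.)

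For the nestedness, monotonicity of $v_\alpha$ is precisely the condition, noted in the text preceding the lemma, under which $\lambda_\alpha$ is well defined, and it forces $\Omega_\alpha=\{y:v_\alpha(y)\ge\lambda_\alpha/(1+\lambda_\alpha)\}$ to be a half-line unbounded above, say $\Omega_\alpha\supseteq(c_\alpha,\infty)$ with $\Omega_\alpha\subseteq[c_\alpha,\infty)$. Since $f=\varphi*G$ is a Gaussian mixture with full support on $\RR$, its c.d.f.\ $F$ is a continuous, strictly increasing bijection onto $(0,1)$ and does not depend on $\alpha$; the calibration $\mathbb{P}(Y\in\Omega_\alpha)=\alpha$ then reads $1-F(c_\alpha)=\alpha$, so $c_\alpha=F^{-1}(1-\alpha)$ is strictly decreasing in $\alpha$. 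Hence $\alpha_1>\alpha_2$ gives $c_{\alpha_1}<c_{\alpha_2}$, and therefore $\Omega_{\alpha_2}\subseteq[c_{\alpha_2},\infty)\subseteq(c_{\alpha_1},\infty)\subseteq\Omega_{\alpha_1}$.

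I do not expect a genuine obstacle. The one step requiring care is the interchange of differentiation and integration together with the conditional-expectation bookkeeping that yields the covariance formula; the hypothesis $\mathbb{E}_{\theta\mid Y}[\nabla_y\log\varphi(y\mid\theta,\sigma^2)\mid Y]<\infty$ is exactly what legitimizes it, and in the Gaussian location case it holds automatically, since $|\theta|\,\varphi(y\mid\theta,\sigma^2)$ is bounded in $\theta$ and dominated convergence applies. The monotone-likelihood-ratio route dispenses with this altogether. The substantive content of the lemma is just the structural observation that any ranking rule monotone in $y$ selects on half-lines calibrated to the marginal quantiles of $Y$, and such sets are automatically nested.
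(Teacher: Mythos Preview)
Your proposal is correct and follows essentially the same route as the paper: both compute $v_\alpha'(y)$ as a conditional covariance between the indicator $\11\{\theta\ge\theta_\alpha\}$ and the score $\nabla_y\log\varphi(y\mid\theta,\sigma^2)=(\theta-y)/\sigma^2$, invoke nonnegativity of the covariance of two comonotone functions of $\theta$, and then deduce nestedness by translating the thresholding rule on $v_\alpha$ into a thresholding rule on $y$ with cutoffs pinned down by the marginal quantiles of $Y$. The only cosmetic difference is that the paper leaves the covariance in the abstract form $\text{Cov}[\11\{\theta\ge\theta_\alpha\},\nabla_y\log\varphi\mid Y]$ and cites the comonotone-covariance inequality, whereas you substitute the Gaussian score explicitly to get $\sigma^{-2}\text{Cov}(\theta,\11\{\theta\ge\theta_\alpha\}\mid Y)$; your parenthetical MLR argument is a clean alternative not in the paper.
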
 

Any implementation of such a Bayes rule requires an estimate of the mixing distribution, $G$, or 
something essentially equivalent
that would enable us to compute the local false discovery rates 
$v_\alpha (y)$ and the cut-off $\theta_\alpha$.  The NPMLE, or perhaps a smoothed version of it,
will provide a natural $\hat G$ for this task. 

\subsection{Posterior Tail Expectation and Other Losses}
Rather than assessing loss by simply counting misclassifications  
we might consider weighting such misclassifications by the magnitude of $\theta$, for example,
\[
L(\delta_i , \theta_i) = \sum_{i=1}^n (1 - \delta_i ) \11 (\theta_i \geq \theta_\alpha) \theta_i.
\]
This presumes, of course, that we have centered the distribution $G$ in some reasonable way,
perhaps by forcing the mean or median to be zero.
Minimizing with respect to $\delta$ subject to the constraint that $\PP(\delta(Y) = 1) = \alpha$ leads
to the Lagrangian,
\[
\underset{\delta}{\min} \int \int (1-\delta(y)) \11\{\theta \geq \theta_\alpha\} 
\theta \varphi(y|\theta, \sigma^2) dG(\theta) dy + \lambda \Big[\int \int  \delta(y) \varphi(y|\theta, \sigma^2) dG(\theta)dy - \alpha\Big]
\]
which is equivalent to
\begin{align*}
\underset{\delta}{\min} & \int \int \11\{\theta \geq \theta_\alpha\} 
(\theta - \lambda) \varphi(y|\theta, \sigma^2) dG(\theta) dy\\
&  - \int \delta(y) \Big[ \int \11\{\theta \geq \theta_\alpha\} 
(\theta - \lambda) \varphi(y|\theta,\sigma^2) dG(\theta) - \int \lambda \11\{\theta < \theta_\alpha\} \varphi(y|\theta,\sigma^2) dG(\theta)\Big] dy. 
\end{align*}
Ignoring the first term since it doesn't depend upon $\delta$, the Oracle Bayes rule, becomes,
choose $\delta (y) = 1$ if,
\[
\frac{\int \11\{\theta \geq \theta_\alpha \} \theta \varphi(y|\theta,\sigma^2) 
dG(\theta)}{\int \varphi(y|\theta,\sigma^2) dG(\theta)} \geq \lambda,
\]
with $\lambda$ chosen so that $\PP (\delta(Y) = 1) = \alpha$.  Such criteria are closely
related to expected shortfall criteria appearing in the literature on risk assessment.
Again, the NPMLE can be employed to construct feasible posterior ranking criteria.

Several other loss functions are considered by \citeasnoun{LLPR06} including  some based 
on global alignment of the ranks.  While intuitively appealing, such loss functions are
considerably less tractable than those we consider in the remainder of the paper.
%However, if the objective of the ranking exercise is eventually to select the best, or worst, 
%elements it seems difficult to rationalize global loss functions of this type.

\subsection{False Discovery and the $\alpha$-Level}
Although our loss functions yield distinct criteria for ranking, 
their decision rules lead to the same selections when the precision of the
measurements is homogeneous.  When variances are homogeneous there is a global cut-off, 
$\eta_\alpha$, and a decision rule, $\delta_\alpha(Y) = 1(Y \geq \eta_\alpha)$, 
determining a common selection for all decision rules.
\begin{lemma} \label{lem: mon1}
    For fixed $\alpha$ and homogeneous variance, posterior mean, posterior tail probability 
    and posterior tail expectation all yield the same ranking and therefore the same selection.
\end{lemma}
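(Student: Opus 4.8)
The plan is to show that each of the three ranking functionals is a monotone increasing transformation of $y$ when $\sigma^2$ is constant across observations, so that they all induce the same ordering and hence, under the common constraint $\PP(\delta(Y)=1)=\alpha$, the same cut-off $\eta_\alpha$ and the same selected set. The key observation is that with homogeneous variance the marginal density $f(y)=\int\varphi(y\mid\theta,\sigma^2)\,dG(\theta)$ and all the posterior quantities are functions of the single scalar $y$, so ``same ranking'' literally means ``each functional is monotone in $y$.''

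First I would handle the posterior tail probability $v_\alpha(y)=\PP(\theta\ge\theta_\alpha\mid Y=y)$: this is exactly the monotonicity already delivered by Lemma~\ref{lem:Nestedness}, so nothing new is needed there. Second, for the posterior mean $\EE(\theta\mid y)$, I would invoke Tweedie's formula \eqref{tweedie}, $\EE(\theta\mid y)=y+\sigma^2 f'(y)/f(y)$, together with the standard fact (noted in the discussion after \eqref{tweedie}, and attributable to the monotone-likelihood-ratio property of the Gaussian location family) that the posterior mean in a Gaussian location mixture is monotone increasing in $y$; alternatively one derives $\frac{d}{dy}\EE(\theta\mid y)=\sigma^{-2}\mathrm{Var}(\theta\mid y)\ge 0$ directly from differentiating the posterior. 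Third, for the posterior tail expectation $T_\alpha(y):=\EE\big[\theta\,\11\{\theta\ge\theta_\alpha\}\mid Y=y\big]=\int_{\theta_\alpha}^{\infty}\theta\,\varphi(y\mid\theta,\sigma^2)\,dG(\theta)\big/ f(y)$, I would show $T_\alpha$ is monotone in $y$ by the same likelihood-ratio argument: for $y_1<y_2$ the ratio $\varphi(y_2\mid\theta,\sigma^2)/\varphi(y_1\mid\theta,\sigma^2)=\exp\big((y_2-y_1)(\theta-(y_1+y_2)/2)/\sigma^2\big)$ is increasing in $\theta$, so the tilted measure $d\tilde G_{y_2}$ stochastically dominates $d\tilde G_{y_1}$ on the truncated support $[\theta_\alpha,\infty)$; since $\theta\mapsto\theta\,\11\{\theta\ge\theta_\alpha\}$ is nondecreasing, the conditional expectation increases. (A minor wrinkle is that the truncation constant $\PP(\theta\ge\theta_\alpha\mid Y=y)$ also varies with $y$ — but that factor is itself monotone by step one, and one can argue either via the joint monotone-likelihood-ratio structure or by writing $T_\alpha(y)=v_\alpha(y)\cdot\EE[\theta\mid\theta\ge\theta_\alpha,Y=y]$ and checking both factors are nondecreasing.)

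Having established that $v_\alpha$, $\EE(\theta\mid\cdot)$ and $T_\alpha$ are all nondecreasing in $y$, the conclusion is immediate: each decision rule thresholds its functional at the level making the selected mass equal $\alpha$, and since the functionals are monotone in $y$ each such threshold pulls back to a half-line $\{y\ge\eta_\alpha\}$ with $1-F(\eta_\alpha)=\alpha$ for the common marginal CDF $F$; the same $\eta_\alpha$ therefore serves all three, so the selections coincide.

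The main obstacle I anticipate is the posterior tail expectation: unlike $v_\alpha$ and the posterior mean, it is a truncated first moment, and one must be a little careful that combining the monotone-truncation-mass factor with the monotone-conditional-mean-on-the-tail factor actually yields a monotone product rather than merely two monotone pieces — the cleanest route is probably to avoid the product decomposition and instead argue directly from the joint monotone likelihood ratio property, i.e. that for $y_1<y_2$ the density ratio is monotone in $\theta$, which makes $\EE[\psi(\theta)\mid Y=y]$ nondecreasing in $y$ for every nondecreasing $\psi$, and apply it to $\psi(\theta)=\theta\,\11\{\theta\ge\theta_\alpha\}$. Minor integrability caveats (finiteness of the tail expectation, which holds under the assumption $\EE_{\theta\mid Y}[\nabla_y\log\varphi]<\infty$ already imposed in Lemma~\ref{lem:Nestedness}) should be noted in passing but do not affect the argument.
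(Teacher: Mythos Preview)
Your proposal is correct. The paper takes a more unified route: rather than handling the three criteria with separate tools, it computes directly that for any functional of the form $v_\psi(y)=\EE[\psi(\theta)\mid Y=y]$ one has
\[
\nabla_y v_\psi(y)=\text{Cov}\big(\psi(\theta),\,\nabla_y\log\varphi(y\mid\theta,\sigma^2)\,\big|\,Y=y\big),
\]
and then applies this with $\psi(\theta)=\theta$, $\psi(\theta)=\11\{\theta\ge\theta_\alpha\}$, and $\psi(\theta)=\theta\,\11\{\theta\ge\theta_\alpha\}$ in turn, invoking the fact that the conditional covariance of two nondecreasing functions of $\theta$ is nonnegative. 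This is precisely the differential form of the MLR argument you land on in your final paragraph, so your ``cleanest route'' and the paper's proof are the same idea expressed two ways. Your treatment is slightly more circuitous---Lemma~\ref{lem:Nestedness} for the tail probability, Tweedie plus the variance identity for the posterior mean, and MLR/stochastic dominance for the tail expectation---but perfectly sound; the paper's version is simply more economical, dispatching all three in one stroke. One small point worth flagging: both your MLR step and the paper's covariance step require $\psi(\theta)=\theta\,\11\{\theta\ge\theta_\alpha\}$ to be nondecreasing, which fails when $\theta_\alpha<0$; the paper covers this implicitly with its earlier remark that the tail-expectation loss presumes $G$ has been centered.
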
 

The marginal false discovery rate for selection in our Gaussian mixture setting is,
\[
mFDR  = \PP (\theta < \theta_\alpha | \delta_\alpha (Y) = 1)
% = \alpha^{-1} \PP (\theta <  \theta_\alpha , \theta + \sigma Z \geq \eta_\alpha)
 = \alpha^{-1} \int_{- \infty}^{\theta_\alpha} \Phi ((\theta - \eta_\alpha)/\sigma) dG(\theta),
\]
%where $Z$ is standard Gaussian.  
The marginal false non-discovery rate is,
\[
mFNR  = \PP (\theta \geq \theta_\alpha | \delta_\alpha (Y) = 0)
 = (1-\alpha)^{-1} \int_{\theta_\alpha}^\infty 
    \Phi ((\eta_\alpha - \theta)/\sigma) dG(\theta),
\]

Figure \ref{fig.FDRFNR} shows the false discovery rate, and false non-discovery rate for 
a range of capacity constraints, $\alpha$, when the mixing distribution, $G$, is standard Gaussian and 
$\sigma^2 = 1$.
%\footnote{Denote the event $\theta \leq G^{-1}(1-\alpha)$ as $H_0$ and the event $\theta \geq  G^{-1}(1-\alpha)$ as $H_A$. Given the fact that $\mathbb{P}(\delta_\alpha(Y) = 1) = \alpha$ and $\mathbb{P}(H_A) = \alpha$ for any continuous distribution $G$, we have $FDR = \alpha^{-1} \mathbb{P}(\delta_\alpha(Y) = 1, H_0) = \alpha^{-1} (\alpha-\mathbb{P}(\delta_\alpha(Y) = 1, H_A))$ and $FNR = (1-\alpha)^{-1} \mathbb{P}(\delta_\alpha(Y) = 0, H_A) = (1-\alpha)^{-1}(\alpha - \mathbb{P}(\delta_\alpha(Y) = 1, H_A))$.  Therefore there is a deterministic relationship for fixed $\alpha$ that $\alpha FDR = (1-\alpha)FNR$.} 
In this low signal to noise ratio case, the cut-off value $\eta_\alpha$ is the $(1-\alpha)$ quantile of 
$\mathcal{N}(0, 2)$, and it is very difficult to distinguish the meritorious from the merely lucky. 
For selecting individuals at the top $\alpha$ quantile, the false 
discovery rate is alarmingly high especially for smaller $\alpha$, implying 
that the selected set may consist of a very high proportion of false discoveries.  
When $\alpha = 0.10$ the proportion of selected observations 
with $\theta$ below the threshold $\theta_\alpha$ is slightly greater than 50 percent.
	
\begin{figure}[bp]
	\begin{center}
		\resizebox{.65\textwidth}{!}{\includegraphics{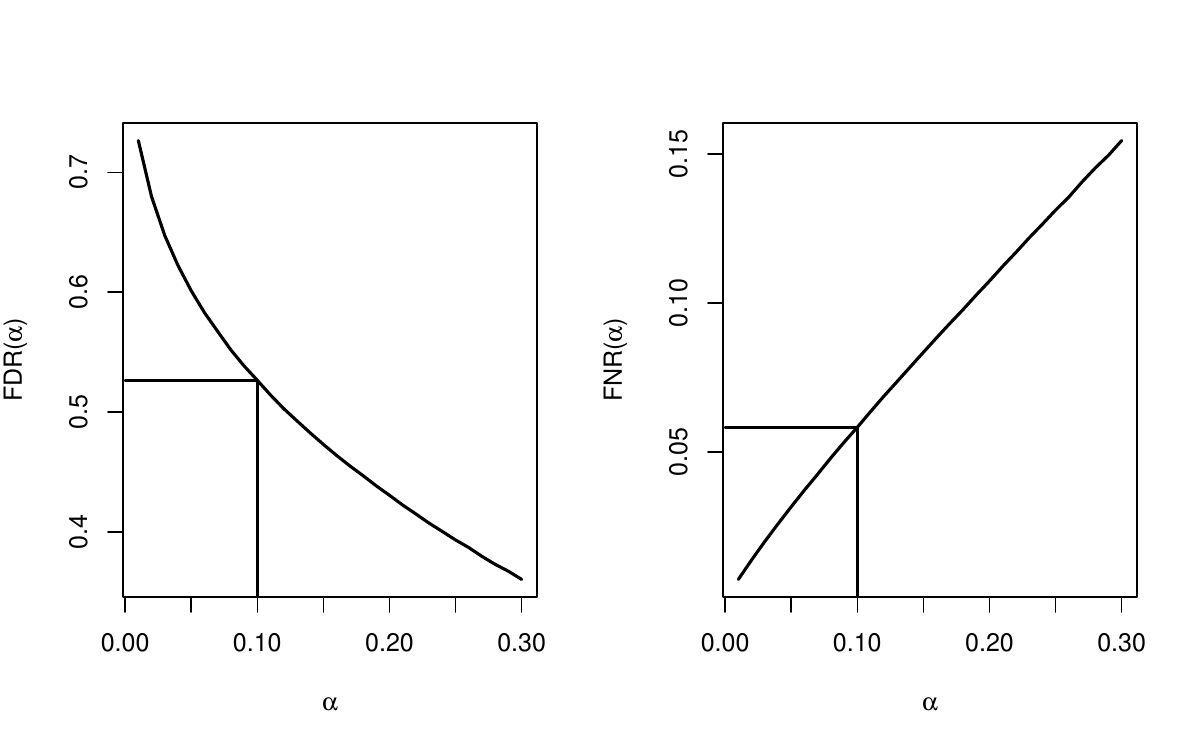}}
	\end{center}
	\caption{\small False Discovery Rates and False Non-Discovery Rates for a 
	Standard Gaussian Mixing Distribution.  }
	\label{fig.FDRFNR}
\end{figure}

{\tiny{
%latex.default(dev, file = "tab3.tex", cgroup = c("FDR", "Wrongly Selected",     "Correctly Selected"), n.cgroup = c(1, 5, 5), caption.loc = "bottom",     caption = cap, rowlabel = "", label = "tab3")%
\begin{table}[!tbp]
\begin{center}
\begin{tabular}{lrcrrrrrcrrrrr}
\hline\hline
\multicolumn{1}{l}{\bfseries }&\multicolumn{1}{c}{\bfseries FDR}&\multicolumn{1}{c}{\bfseries }&\multicolumn{5}{c}{\bfseries Wrongly Selected}&\multicolumn{1}{c}{\bfseries }&\multicolumn{5}{c}{\bfseries Correctly Selected}\tabularnewline
\cline{2-2} \cline{4-8} \cline{10-14}
\multicolumn{1}{l}{}&\multicolumn{1}{c}{}&\multicolumn{1}{c}{}&\multicolumn{1}{c}{20\%}&\multicolumn{1}{c}{40\%}&\multicolumn{1}{c}{50\%}&\multicolumn{1}{c}{60\%}&\multicolumn{1}{c}{80\%}&\multicolumn{1}{c}{}&\multicolumn{1}{c}{20\%}&\multicolumn{1}{c}{40\%}&\multicolumn{1}{c}{50\%}&\multicolumn{1}{c}{60\%}&\multicolumn{1}{c}{80\%}\tabularnewline
\hline
$\sigma^2=1$&$0.526$&&$0.205$&$0.414$&$0.531$&$0.658$&$0.990$&&$0.188$&$0.391$&$0.504$&$0.632$&$0.966$\tabularnewline
$\sigma^2=2$&$0.421$&&$0.146$&$0.302$&$0.391$&$0.486$&$0.743$&&$0.197$&$0.394$&$0.503$&$0.625$&$0.951$\tabularnewline
$\sigma^2=3$&$0.361$&&$0.120$&$0.248$&$0.320$&$0.402$&$0.620$&&$0.197$&$0.393$&$0.500$&$0.618$&$0.931$\tabularnewline
$\sigma^2=4$&$0.319$&&$0.103$&$0.212$&$0.275$&$0.347$&$0.540$&&$0.198$&$0.392$&$0.495$&$0.609$&$0.915$\tabularnewline
$\sigma^2=5$&$0.296$&&$0.093$&$0.192$&$0.247$&$0.313$&$0.487$&&$0.196$&$0.383$&$0.484$&$0.596$&$0.897$\tabularnewline
\hline
\end{tabular}
\caption{FDR improves as the signal becomes more dispersed.  In settings with standard Gaussian measurement
error and Gaussian distribution, G, for the $\theta$'s, the variance of $G$ can be interpreted as a 
signal-to-noise ratio.  As the variance of $G$ increases selection becomes easier and FDR is reduced.
The wrongly selected units become more concentrated near the selection threshold.  Columns 2-6 of
the table report quantiles of the wrongly selected units measured in standard deviations from the threshold.
Columns 7-11 report corresponding quantiles for the correctly selected units.\label{tab3}}\end{center}
\end{table}

}}

When the variance of the $\theta$'s, the signal-to-noise ratio, increases from 1 to 5, 
the selection problem becomes somewhat easier. This is reflected not only by the false discovery 
rate decreasing from above one half to about a third, but is also reflected in results in 
Table \ref{tab3} that show that the wrongly selected individuals have their true values of 
$\theta$ clustered closer to the thresholding value $\theta_\alpha$ measured in terms of standard deviation. 

When $\sigma^2 = 1$, we have about 50\% falsely selected, and the $\theta$'s among the 
correctly-selected and the wrongly selected individuals are roughly symmetrically distributed around 
the thresholding value. In such a case, even oracle decision rules that incorporate complete knowledge 
of the precise distributional features of the problem may not be able to achieve better than about 
even odds that selected individuals have latent ability above the selection thresholds when 
measurement error is comparable in magnitude to Gaussian variability in latent ability. 
As variances of alpha increases to 5 then selection becomes somewhat easier, only 1/3 among the 
selected are falsely selected, and most (80\%) of these $\theta$ values are within 0.5 standard 
deviations away from the selection threshold. 

It is perhaps worth stressing that at the margin, near the decision boundary, it will always 
be difficult to distinguish true from false discoveries, but FDR measures the proportion of all 
selections that are incorrect, not just those near the threshold.  Other loss functions that penalize 
in a more continuous way may be considered to reflect information in Table \ref{tab3}. For example, losses that
weight the classification error by the magnitude of the discrepancy between the latent effect
and the threshold could be considered.  Such losses, however, make it more difficult to incorporate 
conventional forms of error control.

Thus far we have implicitly assumed that the size of the selected set is predetermined by the
parameter $\alpha$.  Having established a ranking based on a particular loss function, we simply 
select a subset of size $\lceil {\alpha n} \rceil$ consisting of the highest ranked observations.  
In the next subsection we begin to consider modifying this strategy by constraining the probability 
of false discoveries.  This will allow the size of the selected set to adapt to the difficulty 
of the  selection task.

\subsection{Guarding against false discovery} 
%we propose expanding our decision problem to minimize the number of over-looked discoveries while 
%respecting both the capacity constraint at a predetermined 
%size $\lceil {\alpha n} \rceil$ and at the same time controlling the proportion of false discoveries.
%When the data exhibits a strong signal so it is easy to identify the top elements then we can be 
%confident to fill the selection list with the highest ranked observations until the capacity is reached.  
%However if the data is noisy, as in the Gaussian case,
%then controlling the false discovery proportion among the selected individuals will force us to 
%be more selective and hence may convince us to not fill the list to its full capacity.  Signal strength in 
%the present context clearly depends crucially on the form of the mixing distribution, $G$, and the choice 
%of the %capacity proportion $\alpha$.

Recognizing the risk of false ``discoveries'' among those selected, 
we will consider an expanded loss function, 
\begin{equation} \label{loss}
L(\bm{\delta}, \bm{\theta}) = \sum_{i=1}^n h_i  (1-\delta_i) + 
\tau_1 \Big(\sum_{i=1}^n \Big \{ (1-h_i)\delta_i - \gamma \delta_i\Big\}\Big)  + 
\tau_2 \Big(\sum_{i=1}^n \delta_i  - \alpha n \Big) 
\end{equation}
where $h_i = \11\{\theta_i \geq  \theta_\alpha\}$.
If we set $\tau_1$ to zero, then minimizing the expected loss leads to the Bayes rule discussed in Section \ref{sec: lfdr}. 
On the other hand, if we set $\tau_2$ to zero, then minimizing expected loss leads to a decision rule that 
is equivalent to a multiple testing problem with null hypothesis $H_{0i}: \theta_i \leq  \theta_\alpha$; 
the goal is to minimize the expected number of over-looked discoveries subject to the constraint that the 
marginal FDR rate is controlled at level $\gamma$, that is,
$\mathbb{E}[\sum_{i=1}^n (1-h_i) \delta_i]/ \mathbb{E}[\sum_{i=1}^n \delta_i] \leq \gamma.$ 
When $\tau_1 = 0$, the risk can be expressed as,
\[
\mathbb{E}_{\bm{\theta}|Y}\Big [L(\bm{\delta}, \bm{\theta})\Big ] = \sum_{i=1}^n (1-\delta_i) v_\alpha(Y_i) + 
\tau_2 \Big( \sum_{i=1}^n \delta_i - \alpha n\Big)
\]
where $v_\alpha(y_i) = \mathbb{P}(\theta_i \geq \theta_\alpha|Y_i = y_i)$. 
Taking another expectation over $Y$, and minimizing over both $\bm{\delta}$ and $\tau_2$, leads to the decision rule, 
\[
\delta_i^* = \begin{cases}
1  , & \text{if  } v_\alpha(y_i) \geq \tau_2^* \\
0  , &\text{if } v_\alpha(y_i) < \tau_2^*.
\end{cases}
\]
The Lagrange multiplier is chosen so that the constraint 
$\mathbb{P}(\delta_i = 1) \leq  \alpha$ holds with equality:
\[
\tau_2^* = \text{min}\{\tau_2: \mathbb{P}(v_\alpha(y_i) \geq \tau_2) \leq \alpha\}
\]
Each selection improves the objective function by $v_\alpha(y_i)$, 
but incurs  a cost of $\tau_2$. Since all selections incur the same cost, 
we may rank according to $v_\alpha(y_i)$, 
selecting units until the capacity constraint $\alpha n$ is achieved. 
Selection of the last unit may need to be randomized to exactly satisfy the constraint,
as we note below.

When $\tau_2 = 0$ the focus shifts to the marginal FDR, the ratio of the expected number of false discoveries, 
to the expected number of selections. This is slightly different from the original FDR as defined in \citeasnoun{BH}. 
However, when $n$ is large the two concepts are asymptotically equivalent as shown by \citeasnoun{GenoWass}. 
Our objective becomes, 
\[
\mathbb{E}_{\bm{\theta}|\bm{Y}}\Big[ L(\bm{\delta}, \bm{\theta})\Big] = 
\sum_{i=1}^n (1-\delta_i) v_\alpha(Y_i) + \tau_1 \Big(\sum_{i=1}^n \{ \delta_i (1- v_\alpha(Y_i)) - 
\gamma \delta_i \}\Big). 
\]
Taking expectations again over $Y$ and minimizing over both $\bm{\delta}$ and $\tau_1$ yields,
\[
\delta_i^* = \begin{cases}
1  , & \text{if  } v_\alpha(y_i) > \tau_1^* ( 1- v_\alpha(y_i) - \gamma)  \\
0  , &\text{if }v_\alpha(y_i) \leq \tau_1^* ( 1- v_\alpha(y_i) - \gamma)
\end{cases}
\]
and the Lagrange multiplier takes a value $\tau_1^*$ to make the marginal FDR constraint hold with equality. 

When both constraints are incorporated we must balance the power gain from more selections 
and the cost that occurs from both the capacity constraint and FDR control. The Bayes rule solves,
\[
\min_{\bm{\delta}} \mathbb{E}\Big [\sum_{i=1}^{n} (1-\delta_i) v_{\alpha}(y_i) \Big] + 
\tau_1 \Big ( \mathbb{E}\Big[ \sum_{i=1}^n \Big \{(1- v_\alpha(y_i)) \delta_i - 
\gamma \delta_i \Big\}\Big] \Big) + \tau_2 \Big( \mathbb{E}\Big[ \sum_{i=1}^n \delta_i \Big] - \alpha n \Big).
\]
Given the discrete nature of the decision function, this problem appears to take the form of a
classical knapsack problem, however following the approach of \citeasnoun{Basu} we will consider a relaxed version 
of the problem in which units are selected sequentially until one or the other constraint would be violated, with the
final selection randomized to satisfy the constraint exactly.

\begin{remark}\label{remark: power}
Given the Lagrangian form of our loss function it is natural to consider an optimization perspective
for the selection problem.  Minimizing the expectation of the loss defined in \eqref{loss} is 
equivalent to minimizing $\mathbb{P}[\delta_i = 0, \theta_i \geq \theta_\alpha]$ subject to
the constraint that $\mathbb{P}[\delta_i = 1, \theta_i < \theta_\alpha]/\mathbb{P}[\delta_i = 1] \leq  \gamma$ 
and $\mathbb{P}[\delta_i = 1] \leq \alpha$.  So we are looking for a thresholding rule that minimizes 
the expected number of missed discoveries subject to the capacity constraint and the constraint that 
the marginal FDR rate of the decision rule is below level $\gamma$. This minimization problem 
is also easily seen, from a testing perspective, to be equivalent to maximizing power of the decision rule 
$\delta$, $\mathbb{P}[\delta_i = 1| \theta_i \geq \theta_\alpha]$, subject to the same two constraints. 
\end{remark}

\begin{proposition} \label{prop: homo} 
For any pair, $(\alpha, \gamma)$ such that $\gamma < 1- \alpha$, the optimal Bayes rule takes the form, 
$\delta_i^* = \11\{v_\alpha(y_i) \geq \lambda^*(\alpha, \gamma) \}$
where $\lambda^*(\alpha, \gamma) = v_\alpha(t^*)$ with $t^* = \max\{t_1^*, t_2^*\}$, 
\begin{align*}
t_1^* & = \min\Big \{t:  \frac{\int_{-\infty}^{\theta_\alpha} \tilde{\Phi}((t - \theta)/\sigma) dG(\theta)}{\int_{-\infty}^{+\infty} \tilde{\Phi}((t - \theta)/\sigma)dG(\theta)} - \gamma \leq0 \Big \}, \\
t_2^*  & = \min \Big \{t:  \int_{-\infty}^{+\infty} \tilde{\Phi}((t - \theta)/\sigma)dG(\theta)-\alpha \leq 0 \Big \}
\end{align*}
and $\tilde \Phi$ denoting the survival function of a standard normal random variable.
%Also $t_1^* = t_2^* = t^*$ when there exists $\lambda^*$ such that $\mathbb{E}[v_\alpha(Y) \11\{v_\alpha(Y) > \lambda^*\}] = \alpha(1-\gamma)$. 
\end{proposition}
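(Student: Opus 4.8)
The plan is to show that the optimal rule is necessarily a threshold rule in $Y$, then to reduce the problem to a one–dimensional search over the threshold and to identify the optimum as the smallest threshold compatible with both the capacity and the marginal–FDR constraints.

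\emph{Reduction to threshold rules.} By Remark \ref{remark: power} the Bayes problem is to minimize the expected number of missed discoveries $\mathbb{E}[\sum_i h_i(1-\delta_i)]$ subject to $\mathbb{P}(\delta_i=1)\le\alpha$ and $\mathbb{E}[\sum_i(1-h_i)\delta_i]\le\gamma\,\mathbb{E}[\sum_i\delta_i]$. Writing $\delta_i=\delta(Y_i)$ and conditioning on $Y_i$, with $\mathbb{E}[h_i\mid Y_i=y]=v_\alpha(y)$, the expected number of missed discoveries equals $n\big(\mathbb{P}(\theta\ge\theta_\alpha)-\mathbb{E}[v_\alpha(Y)\delta(Y)]\big)$ and the marginal FDR equals $1-\mathbb{E}[v_\alpha(Y)\delta(Y)]/\mathbb{P}(\delta(Y)=1)$. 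Hence, among all (possibly randomized) rules with a prescribed selection probability $p=\mathbb{P}(\delta(Y)=1)$, the rule $\delta(y)=\11\{v_\alpha(y)\ge c_p\}$ with $\mathbb{P}(v_\alpha(Y)\ge c_p)=p$ simultaneously minimizes the number of missed discoveries and the marginal FDR, by a Neyman--Pearson argument: each selected $y$ contributes $v_\alpha(y)$ to $\mathbb{E}[v_\alpha(Y)\delta(Y)]$, which both objectives are decreasing in, so the selection mass should be placed where $v_\alpha$ is largest. Since $v_\alpha$ is continuous and strictly increasing by Lemma \ref{lem:Nestedness}, this rule is the threshold rule $\delta(y)=\11\{y\ge t\}$ with $v_\alpha(t)=c_p$; and $S(t):=\mathbb{P}(Y\ge t)=\int\tilde\Phi((t-\theta)/\sigma)\,dG(\theta)$ is continuous and strictly decreasing from $1$ to $0$, so threshold rules realize every selection probability. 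It therefore suffices to optimize over the scalar $t$.

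\emph{Monotone reformulation.} For the threshold rule $\11\{Y\ge t\}$ the selection probability is $S(t)$ above, the normalized expected number of missed discoveries is $M(t):=\int_{\theta_\alpha}^{\infty}\Phi((t-\theta)/\sigma)\,dG(\theta)$, and the marginal FDR is $R(t):=\mathbb{P}(\theta<\theta_\alpha\mid Y\ge t)=\big(\int_{-\infty}^{\theta_\alpha}\tilde\Phi((t-\theta)/\sigma)\,dG(\theta)\big)\big/S(t)$. Clearly $M$ is strictly increasing and $S$ strictly decreasing in $t$. The crucial point is that $R$ is also strictly decreasing: for $t_1<t_2$ one writes $\mathbb{P}(\theta<\theta_\alpha\mid Y\ge t_1)$ as a convex combination of $\mathbb{P}(\theta<\theta_\alpha\mid t_1\le Y<t_2)$ and $\mathbb{P}(\theta<\theta_\alpha\mid Y\ge t_2)$, and since $y\mapsto\mathbb{P}(\theta<\theta_\alpha\mid Y=y)=1-v_\alpha(y)$ is strictly decreasing by Lemma \ref{lem:Nestedness}, averaging over the lower range $[t_1,t_2)$ gives a value at least $1-v_\alpha(t_2)$, which is in turn at least $\mathbb{P}(\theta<\theta_\alpha\mid Y\ge t_2)$; hence $R(t_1)>R(t_2)$. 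Consequently $\{t:S(t)\le\alpha\}=[t_2^*,\infty)$ and $\{t:R(t)\le\gamma\}=[t_1^*,\infty)$ with $t_1^*,t_2^*$ exactly as in the statement (both minima are attained by continuity), and the hypothesis $\gamma<1-\alpha$ ensures $t_1^*$ is finite (otherwise $R(t)\le\gamma$ for all $t$, the FDR constraint is slack everywhere, and $t^*=t_2^*$).

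\emph{Identifying the optimum, and the main obstacle.} A threshold $t$ is feasible iff $t\ge t_1^*$ and $t\ge t_2^*$, i.e. iff $t\ge t^*:=\max\{t_1^*,t_2^*\}$; since $M$ is strictly increasing it is minimized over this set at $t=t^*$. The optimal rule is therefore $\delta_i^*=\11\{Y_i\ge t^*\}=\11\{v_\alpha(Y_i)\ge v_\alpha(t^*)\}$, i.e. the claimed rule with $\lambda^*(\alpha,\gamma)=v_\alpha(t^*)$. That this is genuinely a Bayes rule for the Lagrangian \eqref{loss} is confirmed by exhibiting nonnegative multipliers through complementary slackness: take $\tau_1=0,\ \tau_2=v_\alpha(t_2^*)$ when the capacity constraint binds, and $\tau_2=0,\ \tau_1=v_\alpha(t_1^*)/(1-\gamma-v_\alpha(t_1^*))$ when the FDR constraint binds, the latter being positive because strict monotonicity of $1-v_\alpha$ forces $v_\alpha(t_1^*)<1-\mathbb{P}(\theta<\theta_\alpha\mid Y\ge t_1^*)=1-\gamma$. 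The one genuinely non-routine step is the monotonicity of $R$ in the threshold; this is precisely where Lemma \ref{lem:Nestedness} — the monotone likelihood ratio structure of the Gaussian family — is used, via the convex-combination argument above. Everything else is bookkeeping.
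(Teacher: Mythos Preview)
Your proof is correct and follows essentially the same route as the paper's: reduce to threshold rules on $Y$ via the monotonicity of $v_\alpha$ (Lemma~\ref{lem:Nestedness}), then identify the optimal threshold as the larger of the two constraint thresholds. The main differences are of emphasis rather than substance. The paper enters through the Lagrangian and KKT conditions and then observes that the resulting rule is a threshold on $v_\alpha(y)$, hence on $y$; you instead argue directly via a Neyman--Pearson step that threshold rules dominate among all rules with a given selection probability, and only afterwards exhibit multipliers to confirm the Lagrangian characterization. Your convex-combination argument for the monotonicity of $R(t)=\mathbb{P}(\theta<\theta_\alpha\mid Y\ge t)$ is more explicit than the paper's treatment in this proof (the paper simply asserts that the mFDR constraint becomes less likely to bind as $t$ increases, deferring a formal derivative calculation to the proof of Lemma~\ref{lem: nest1}); this is a nice self-contained touch. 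One small point: your claim that $\gamma<1-\alpha$ ``ensures $t_1^*$ is finite'' is slightly misstated---the condition actually guarantees the FDR constraint is not vacuous at $t=-\infty$ (where $R\to 1-\alpha$), so that $t_1^*>-\infty$; finiteness from above is automatic since $R(t)\to 0$.
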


%\textcolor{red}{N.B. feasible region is impossible to be empty. This is saying that no matter what we choose for $\gamma$ or $\alpha$, we will always be able to find a feasible region. This should have something to do with the monotonicity of \text{mFDR} as a function of $t$. }. 

\begin{remark}	
The optimal cutoff $t^*$ depends on the data generating process and also the choice of $\alpha$ and $\gamma$. 
When data is noisy, the FDR control constraint may be binding before the capacity constraint is reached, 
and consequently the selected set may be strictly smaller than the pre-specified $\alpha$ proportion. 
On the other hand, when the signal is strong, the FDR control constraint is unlikely to bind before 
the capacity constraint is reached.
\end{remark}

We have seen that when variances are homogeneous, the optimal selection rule thresholds on $Y$, 
so it is clear then that any ranking that is based on a monotone transformation of $Y$ will lead to 
an equivalent selected set.  We should also stress that we have focused on a null hypothesis that 
depends on $\alpha$, while the multiple testing literature, for example \citeasnoun{ETST}, 
\citeasnoun{SunCai} and \citeasnoun{Basu}, typically focuses on the null hypothesis of 
$H_{0i}: \theta_i = 0$.
%\footnote{If only one tail of the effect, say the right tail, should be focused, then the null 
%hypothesis is $H_{0i}: \theta_i \leq 0$.} 
When variances are homogeneous, it doesn't matter whether we use an $\alpha$ dependent null 
or the conventional zero null, because the transformation based on the conventional null, 
$\mathbb{P}(\theta > 0|Y=y)$, is also a monotone function of $Y$, and therefore yields an 
equivalent decision rule.  However, when variances are heterogeneous, this invariance no longer holds; 
different transformations of the pair $(y, \sigma)$ lead to distinct decision rules that lead to 
distinct performance, and using the conventional null hypothesis is no longer advisable for the 
ranking and selection problem as we will show in the next section.

\section{heterogeneous known variances}
\label{sec:KnownVariances}
The homogeneous variance assumption of the preceding section is unsustainable in
most applications.  Batting averages are accompanied by a number of ``at bats'' and
mean test score performances are accompanied by student sample sizes.
In this section we will consider the expanded model, 
\[
Y_i \sim \mathcal{N}(\theta_i, \sigma_i^2), \quad \text{and   } 
\quad  \theta_i \sim G, \quad \sigma_i \sim H, \quad \sigma_i \pperp \theta_i 
\]
We will assume that we observe $\sigma_i$, an assumption that will be relaxed in the next section.

\subsection{Posterior Tail Probability} 
With the same alternative hypothesis as $H_{A}: \theta \geq \theta_\alpha$, 
it is natural to consider the posterior tail probability again, now as a function of the pair, $(y_i, \sigma_i)$, 
\[
v_\alpha(y_i,\sigma_i) = \mathbb{P}(\theta_i \geq \theta_\alpha | y_i, \sigma_i) = \frac{\int_{\theta_\alpha}^{+\infty} \varphi(y_i | \theta,\sigma_i^2) dG(\theta)}{\int_{-\infty}^{+\infty} \varphi(y_i | \theta,\sigma_i^2)dG(\theta)}.
\]

Solving the same decision problem with the loss function specified in (\ref{loss}), 
we have the conditional risk, 
\[ 
\mathbb{E}_{\bm{\theta}|\bm{Y}, \bm{\sigma}}\Big[ L(\bm{\delta}, \bm{\theta})\Big] = 
\sum_{i=1}^n (1- \delta_i) v_\alpha(Y_i, \sigma_i) + 
\tau_1 \Big( \sum_{i=1}^n \{\delta_i(1- v_\alpha(Y_i, \sigma_i)) - \gamma \delta_i \}\Big) + 
\tau_2 \Big( \sum_{i=1}^n \delta_i - \alpha n\Big). 
\]
Taking another expectation with respect to the joint distribution of the $(Y_i, \sigma_i)$'s, 
the Bayes rule solves 
\[
\min_{\bm{\delta}} \mathbb{E}\Big [\sum_{i=1}^{n} (1-\delta_i) v_{\alpha}(y_i, \sigma_i ) \Big] + \tau_1 \Big ( \mathbb{E}\Big[ \sum_{i=1}^n \Big \{(1- v_\alpha(y_i, \sigma_i )) \delta_i - \gamma \delta_i \Big\}\Big] \Big) + \tau_2 \Big( \mathbb{E}\Big[ \sum_{i=1}^n \delta_i \Big] - \alpha n \Big)
\]
The optimal selection rule can again be characterized as a thresholding rule on $v_\alpha(y_i, \sigma_i)$. 

\begin{proposition} \label{prop: rule_ysigma}
For a pre-specified pair $(\alpha, \gamma)$ such that $\gamma < 1-\alpha$, the Bayes rule takes the form, 
$\delta^* (y, \sigma) = \11\{v_\alpha(y, \sigma) \geq  \lambda^*(\alpha,\gamma)\}$
where $\lambda^*(\alpha,\gamma) = \max\{\lambda_1^*(\alpha,\gamma), \lambda_2^*(\alpha)\}$, 
		\begin{align*}
\lambda_1^*(\alpha,\gamma) &= \min\Big \{\lambda:  \frac{\int \int_{-\infty}^{\theta_\alpha} \tilde{\Phi}((t_\alpha(\lambda,\sigma) - \theta)/\sigma) dG(\theta)dH(\sigma)}{\int \int_{-\infty}^{+\infty} \tilde{\Phi}((t_\alpha(\lambda,\sigma) - \theta)/\sigma)dG(\theta)dH(\sigma)} - \gamma \leq0 \Big \}\\
\lambda_2^*(\alpha) & = \min \Big \{\lambda:  \int \int_{-\infty}^{+\infty} 
\tilde{\Phi}((t_\alpha(\lambda, \sigma) - \theta)/\sigma)dG(\theta)dH(\sigma)-\alpha \leq 0 \Big \}
	\end{align*}
and $t_\alpha(\lambda,\sigma)$ defined as 
$v_\alpha(t_\alpha(\lambda,\sigma), \sigma) = \lambda$ for all $\lambda \in [0,1]$. 
\end{proposition}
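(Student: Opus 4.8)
The plan is to follow the template of the proof of Proposition \ref{prop: homo}, the only new feature being that the optimal threshold now lives on the scale of $v_\alpha(y,\sigma)$ rather than of $y$, so the last step must be carried out separately on each $\sigma$-fibre. \emph{Step (i):} reduce the constrained ensemble problem to a pointwise minimization. Since the $(Y_i,\sigma_i)$ are i.i.d.\ and a symmetric rule sets $\delta_i=\delta(Y_i,\sigma_i)$, the Lagrangian displayed just before the proposition equals, up to terms free of $\delta$, $n$ times
\[
\mathbb{E}\big[\delta(Y,\sigma)\big(\tau_1(1-v_\alpha(Y,\sigma)-\gamma)+\tau_2-v_\alpha(Y,\sigma)\big)\big].
\]
Minimizing pointwise over $\delta(y,\sigma)\in\{0,1\}$ gives $\delta^*(y,\sigma)=\11\{v_\alpha(y,\sigma)\geq\lambda^*\}$ with $\lambda^*=(\tau_1(1-\gamma)+\tau_2)/(1+\tau_1)$, the asserted form; under the maintained regularity the marginal law of $v_\alpha(Y,\sigma)$ is atomless, so the tie set is null and the ``randomize the last selection'' caveat is immaterial at the population level.

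\emph{Step (ii):} identify the optimal threshold. Writing $C(\lambda):=\mathbb{P}(v_\alpha(Y,\sigma)\geq\lambda)$ for the capacity of the threshold-$\lambda$ rule and, using $\mathbb{E}[\11\{\theta<\theta_\alpha\}\mid Y,\sigma]=1-v_\alpha(Y,\sigma)$,
\[
F(\lambda):=\frac{\mathbb{E}\big[(1-v_\alpha(Y,\sigma))\11\{v_\alpha(Y,\sigma)\geq\lambda\}\big]}{C(\lambda)}=\frac{\mathbb{P}\big(\theta<\theta_\alpha,\ v_\alpha(Y,\sigma)\geq\lambda\big)}{C(\lambda)}
\]
for its marginal FDR, one checks that both are nonincreasing in $\lambda$: $C$ trivially, and $F$ because raising the threshold deletes exactly those retained units with the smallest $v_\alpha$, hence the largest $1-v_\alpha$, so the excised block has above-average $1-v_\alpha$ within the retained set and dropping it lowers the ratio. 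Thus $\lambda_1^*(\alpha,\gamma)=\min\{\lambda:F(\lambda)\leq\gamma\}$ and $\lambda_2^*(\alpha)=\min\{\lambda:C(\lambda)\leq\alpha\}$ are well defined, and $\lambda^*:=\max\{\lambda_1^*,\lambda_2^*\}$ is the smallest $\lambda$ for which the threshold rule obeys both constraints. Since the unpenalized objective $\mathbb{E}[(1-\delta)v_\alpha]$ is nondecreasing in $\lambda$, $\lambda^*$ should be optimal; I would confirm this by Lagrangian sufficiency. If $\lambda_2^*\geq\lambda_1^*$, take $\tau_1^*=0,\ \tau_2^*=\lambda_2^*$: by Step (i) the minimizer of the Lagrangian is the threshold-$\lambda^*$ rule, the capacity constraint binds by continuity of $C$, and the slack FDR constraint is consistent with $\tau_1^*=0$. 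If $\lambda_1^*>\lambda_2^*$, take $\tau_2^*=0$ and $\tau_1^*=\lambda_1^*/(1-\gamma-\lambda_1^*)$, which is nonnegative and finite because $F(\lambda)\leq1-\lambda$ forces $\lambda_1^*\leq1-\gamma$, strictly so under non-degeneracy of the law of $v_\alpha(Y,\sigma)$; here the FDR constraint binds by continuity of $F$ and the slack capacity constraint is consistent with $\tau_2^*=0$. In either case the threshold-$\lambda^*$ rule is feasible and minimizes the Lagrangian at multipliers obeying complementary slackness, hence is Bayes. The hypothesis $\gamma<1-\alpha$ enters exactly here, via $F(0)=1-\alpha>\gamma$, which keeps the FDR constraint active so that the ``$\max$'' is genuine rather than always equal to $\lambda_2^*$.

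\emph{Step (iii):} convert to integrals. For each fixed $\sigma$, Lemma \ref{lem:Nestedness} applied conditionally on $\sigma$ (its hypothesis is the conditional integrability assumed there) makes $y\mapsto v_\alpha(y,\sigma)$ monotone, so $\{y:v_\alpha(y,\sigma)\geq\lambda\}=[t_\alpha(\lambda,\sigma),\infty)$ with $v_\alpha(t_\alpha(\lambda,\sigma),\sigma)=\lambda$, the inverse being defined on all of $[0,1]$ when $G$ has full support, as the statement presumes. Conditioning on $(\theta,\sigma)$, using $Y\mid\theta,\sigma\sim\mathcal N(\theta,\sigma^2)$ so that $\mathbb{P}(Y\geq t\mid\theta,\sigma)=\tilde\Phi((t-\theta)/\sigma)$, and then integrating against $dG(\theta)\,dH(\sigma)$ by independence of $\theta$ and $\sigma$,
\[
C(\lambda)=\int\!\!\int_{-\infty}^{+\infty}\tilde\Phi\big((t_\alpha(\lambda,\sigma)-\theta)/\sigma\big)\,dG(\theta)\,dH(\sigma),
\]
while restricting the inner integral to $\theta<\theta_\alpha$ gives the numerator of $F(\lambda)$. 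Substituting these into $C(\lambda)\leq\alpha$ and $F(\lambda)\leq\gamma$ reproduces the defining relations for $\lambda_2^*(\alpha)$ and $\lambda_1^*(\alpha,\gamma)$, finishing the argument.

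The step I expect to be the main obstacle is (ii): the monotonicity of the marginal FDR $F(\lambda)$ in the threshold --- which is where the nestedness of the selection regions does the essential work --- together with the boundary bookkeeping needed to rationalize the two separately-required thresholds by a single admissible multiplier pair, where $\gamma<1-\alpha$ and the non-degeneracy of the law of $v_\alpha(Y,\sigma)$ come in. Steps (i) and (iii) are the same pointwise-minimization and change-of-variables computations already used in the homogeneous case.
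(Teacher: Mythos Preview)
Your proposal is correct and follows essentially the same route as the paper: reduce the Lagrangian to a pointwise threshold on $v_\alpha$, use the fibrewise monotonicity of $y\mapsto v_\alpha(y,\sigma)$ to rewrite the selection region as $\{y\geq t_\alpha(\lambda,\sigma)\}$, and then express the two constraints as the displayed integrals; the paper's proof says exactly this, deferring the Lagrangian/KKT bookkeeping to its proof of Proposition~\ref{prop: homo}. Your Step~(ii) is in fact more explicit than the paper's, since you construct the multiplier pair $(\tau_1^*,\tau_2^*)$ in each of the two cases and verify complementary slackness directly, and you also give the monotonicity-of-$F(\lambda)$ argument that the paper only spells out later (in the proof of Lemma~\ref{lem: nest1} and in Section~\ref{sec:adaptive}); the one cosmetic difference is that the paper cites Lemma~\ref{lem: mon1} for the fibrewise monotonicity while you cite Lemma~\ref{lem:Nestedness}, but the relevant covariance computation is the same.
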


%\textcolor{red}{N.B. Again, feasible region is impossible to be empty. This is saying that no matter what we choose for $\gamma$ or $\alpha$, we will always be able to find a feasible region. This should have something to do with the monotonicity of \text{mFDR} as a function of $\lambda_2$. The value $\lambda_1^*(\alpha, \gamma)$ depends on $\alpha$ because as $\alpha$ changes, both $\theta_\alpha$ and the posterior tail probability $v_\alpha(y, \sigma)$ changes. }. 

\begin{remark}
Note that although the thresholding value $\lambda^*$ does not depend on the value of $\sigma$, 
the ranking does depend on $\sigma$. One way to see this is that since $v_\alpha(y,\sigma)$ is monotone 
in $y$ for all $\sigma>0$, the optimal rule is equivalent to $\11\{y_i > t_\alpha(\lambda^*,\sigma)\}$, 
where $t_\alpha(\lambda,\sigma)$ is a function of $\sigma$. For a fixed value of $\lambda^*$, 
the selection region for $Y$ depends on $\sigma$ in a nonlinear way. 
Comparing individuals $i$ and $j$, it may be the case that $y_i > y_j$, but $y_j$ belongs to 
the selection region while $y_i$ does not.  An example to illustrate this appears below.
It should also be emphasized that when variances are heterogeneous, different loss functions
need not lead to equivalent selections. 
\end{remark}

\subsection{The Conventional Null Hypothesis} \label{sec.convnull}
The posterior tail probability criterion is motivated by viewing the ranking and 
selection problems as hypothesis testing while allowing the null hypothesis to be 
$\alpha$ dependent. The particular construction of the null hypothesis turns out 
to be critical for the ranking exercise. In this subsection we present a simple example to 
illustrate that tail probability based on the conventional null hypothesis of zero 
effect does not lead to a powerful ranking device. Consider data generated from 
a three component normal mixture model, 
\begin{equation} \label{eg: zeronull}
Y_i |\sigma_i \sim 0.85 \mathcal{N}(-1, \sigma_i^2) + 0.1 \mathcal{N}(0.5, \sigma_i^2) + 0.05 \mathcal{N}(5, \sigma_i^2), \quad \sigma_i \sim U[0.5, 4]
\end{equation}
Instead of transforming the data by $v_\alpha$, we consider the transformation,
\[
\texttt{T}(y_i,\sigma_i^2) = \mathbb{P}(\theta_i > 0| y_i ,\sigma_i^2) = \frac{\int_0^{+\infty} \varphi(y_i | \theta,\sigma_i^2) dG(\theta)}{\int_{-\infty}^{+\infty}  \varphi(y_i | \theta,\sigma_i^2) dG(\theta)}
\]
and rank individuals accordingly.  This transformation corresponds to the procedure 
proposed in \citeasnoun{SM}, and is motivated for multiple testing problems under the conventional 
null hypothesis $H_{0}: \theta \leq 0$. 
The decision rule $\delta_i^{\texttt{T}} = \11\{\texttt{T}(y_i,\sigma_i) \geq \lambda\}$ then 
chooses the cutoff value $\lambda$ that respects both the capacity constraint and the FDR control 
constraint for selecting the top $\alpha$ proportion. 
%Compared to the decision rule 
%$\delta^*$ proposed in the last subsection, here the difference is that we use a different 
%null hypothesis to reflect the selection purpose which then induces a different transformation of the data. 

\begin{figure}[h!]
	\begin{center}
		\resizebox{.5\textwidth}{!}{\includegraphics{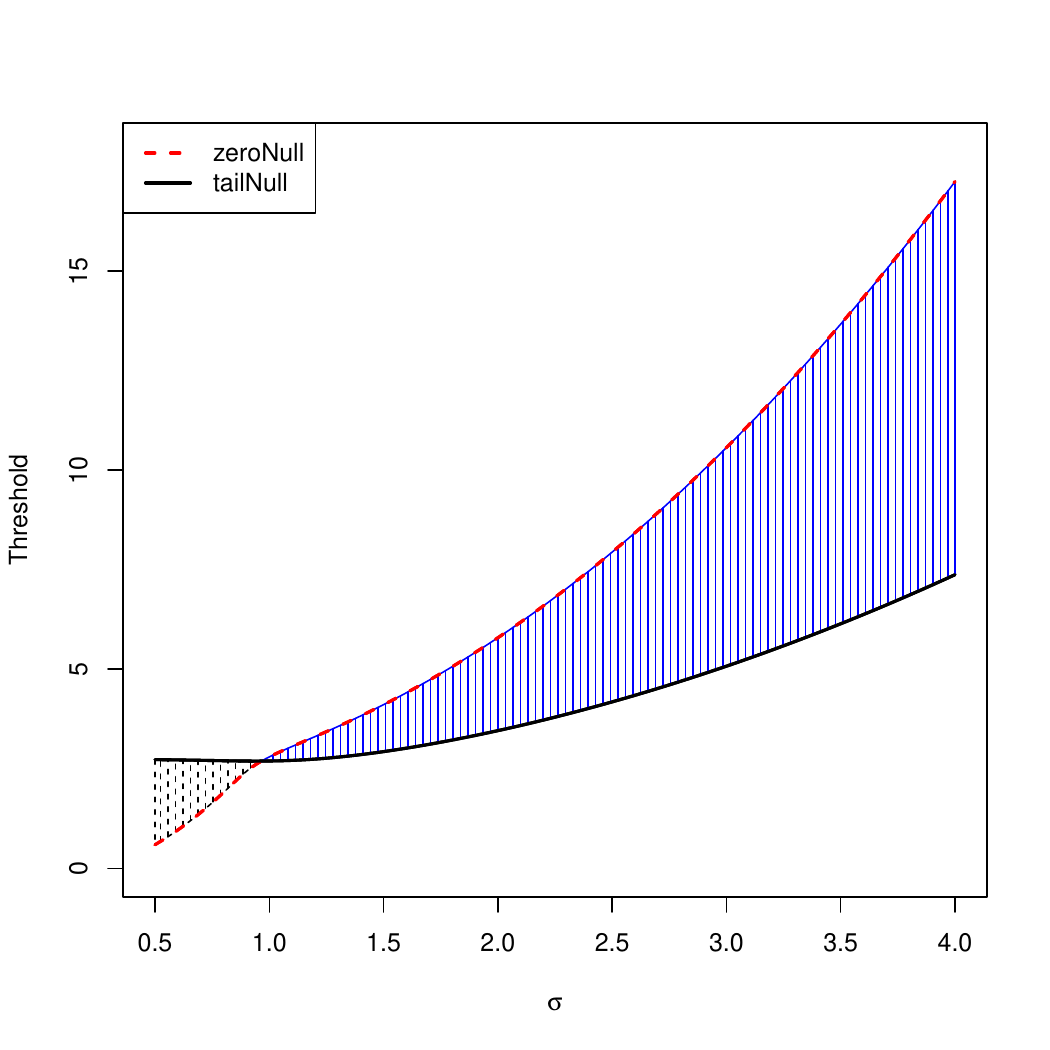}}
	\end{center}
	\caption{\small Selection boundaries based on the model (\ref{eg: zeronull}) with 
	$\alpha = 0.05$ and $\gamma = 0.1$. The solid black curve corresponds to the boundary of the 
	selection region based on transformation $v_\alpha$. The dash red curve corresponds to 
	the boundary of the selection region based on transformation $T$. Density of $\sigma$ is 
	assumed to be uniform on the interval $[0.5, 4]$.}
	\label{fig: zeronull}
\end{figure}

Figure \ref{fig: zeronull} compares the selection region for the two ranking procedures 
with $\alpha = 5\%$ and marginal FDR control at level $10\%$. The solid black line 
corresponds to the selection boundary using ranking based on transformation $v_\alpha$ 
and the dashed red line corresponds to the selection boundary using ranking based on the 
transformation $\texttt{T}$. The black highlighted area below the black selection boundary
corresponds to a region where the ranking method based on $\texttt{T}$ will select but 
the ranking method based on $v_\alpha$ does not. On the other hand, the blue highlighted 
area corresponds to a region selected by $v_\alpha$, but not for $\texttt{T}$. 
The transformation $\texttt{T}$ ranks those in the black region higher than those in the 
blue region because although they have a relatively smaller mean effect $y$, their associated 
variances are also smaller indicating stronger evidence that such individuals have a 
positive $\theta$ than those located in the blue area.  However,
our task is to find individuals with true effects, $\theta_i$, in the upper tail. 
For $\alpha = 5\%$, we aim to select all individuals with $\theta=5$, individuals in 
the black region present strong evidence that their true effect \textit{can not} be too large 
because their observed effect $y$ is small and their associated variance is also small, 
while those in the blue region, although their observed mean effects are associated with larger 
variances offer reasonable evidence that their associated true effect $\theta$ may be large. 
This evidence is not apparent in transformation $\texttt{T}$, but is captured in the 
transformation $v_\alpha$. 

Indeed, the average power of ranking based on the two different transformation 
$v_\alpha$ and $\texttt{T}$ differ significantly.  Defining the power of the selection
rule as $\beta(\delta) := 
\mathbb{P}(\theta_i \geq \theta_\alpha, \delta_i = 1)/\mathbb{P}(\theta_i \geq \theta_\alpha)$, 
the proportion of true top $\alpha$ cases selected based on decision rule $\bm{\delta}$, then 
$\beta(\bm{\delta^{\texttt{T}}}) = 39\%$ and $\beta(\bm{\delta^*}) = 69\%$.  Thus, although
much of the literature relies on ranking and selection rules based on some form of
posterior means and conventional hypothesis testing apparatus we would caution that such
methods can be quite misleading and inefficient.

\subsection{Nestedness of Selection Sets}
If we were to relax the capacity constraint to allow a larger proportion, $\alpha_1 > \alpha_0$
be be selected, while maintaining our initial false discovery control, we would expect that
members selected under the more stringent capacity constraint should remain selected under 
the relaxed constraint.  We now discuss sufficient conditions under which we obtain this
nestedness of the selection sets when using the posterior tail probability rule.
This is a natural condition in applications like our analysis of ranking and selection
of dialysis centers especially because we would like to assign ``letter grades'' to several 
several subgroups of the centers.

The optimal Bayes rule defines the selection set for each pair of $(\alpha, \gamma)$ as
\[
\Omega_{\alpha, \gamma} := \{(y, \sigma): v_\alpha(y, \sigma) \geq  \lambda^*(\alpha,\gamma)\}
\]
and when $\sigma$ is known, $v_\alpha(y,\sigma)$ is monotone in $y$ as shown in Lemma \ref{lem: mon1} for each fixed $\sigma$, hence the selection set can also be represented as 
\[
\Omega_{\alpha, \gamma} =\{(y,\sigma): y \geq  t_{\alpha}(\lambda^*(\alpha,\gamma),\sigma)\}
\]
It is also convenient for later discussion to define 
\begin{align*}
	\Omega_{\alpha,\gamma}^{FDR}: &= \{(y, \sigma): v_\alpha(y, \sigma) \geq \lambda_1^*(\alpha,\gamma)\} = \{(y, \sigma): y \geq t_\alpha(\lambda_1^*(\alpha,\gamma),\sigma)\}\\
	\Omega_{\alpha}^{C} &:= \{(y, \sigma): v_\alpha(y, \sigma) \geq \lambda_2^*(\alpha)\} = \{(y, \sigma): y \geq t_\alpha(\lambda_2^*(\alpha), \sigma)\}
\end{align*}
which are respectively the selection sets when the false discovery rate constraint or the capacity constraint is binding. It is easy to see that $\Omega_{\alpha,\gamma} = \Omega_{\alpha,\gamma}^{FDR} \cap \Omega_{\alpha}^C$.
\begin{lemma} \label{lem: nest1}
	Let the density function of $v_\alpha(y_i, \sigma_i)$ be denoted as $f_v(v; \alpha)$ and let 
	\[
	\lambda_1^*(\alpha,\gamma) = \min\Big \{\lambda:  \frac{\int \int_{-\infty}^{\theta_\alpha} \tilde{\Phi}((t_\alpha(\lambda,\sigma) - \theta)/\sigma) dG(\theta)dH(\sigma)}{\int \int_{-\infty}^{+\infty} \tilde{\Phi}((t_\alpha(\lambda,\sigma) - \theta)/\sigma)dG(\theta)dH(\sigma)} - \gamma \leq0 \Big \}
	\]
	with $t_\alpha(\lambda, \sigma)$ defined as $v_\alpha(t_\alpha(\lambda,\sigma), \sigma) = \lambda$ and $\tilde \Phi$ be the survival function of the standard normal random variable. 
	If $\nabla_{\alpha} \log f_v(v; \alpha)$ is non-decreasing in $v$, then for fixed $\gamma$, if $\alpha_1 > \alpha_2$, we have $\lambda_1^*(\alpha_1, \gamma) \leq  \lambda_1^*(\alpha_2,\gamma)$.
	
	%To show the other inequality, note that the condition $\nabla_\alpha \log f_v(v;\alpha)$ being non-decreasing in $v$ implies that the monotone likelihood ratio condition holds, that is, $f_v(v; \alpha_1)/f_v(v; \alpha_2)$ is non-decreasing in $v$ if $\alpha_1 > \alpha_2$. This implies that the distribution of 
	
\end{lemma}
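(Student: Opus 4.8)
The plan is to rewrite the ratio defining $\lambda_1^*(\alpha,\gamma)$ as one minus a truncated conditional mean of the posterior tail probability, and then to exploit the monotone‑likelihood‑ratio hypothesis on $f_v(\cdot;\alpha)$ to make that conditional mean nondecreasing in $\alpha$; nestedness of the thresholds then follows at once. First I would identify the ratio as a marginal false discovery rate. Writing $W:=v_\alpha(Y,\sigma)$ and noting $\tilde\Phi((t-\theta)/\sigma)=\mathbb{P}(\theta+\sigma Z\ge t\mid\theta,\sigma)$, the numerator in the definition of $\lambda_1^*$ is $\mathbb{P}(\theta<\theta_\alpha,\ Y\ge t_\alpha(\lambda,\sigma))$ and the denominator is $\mathbb{P}(Y\ge t_\alpha(\lambda,\sigma))$, with a $\sigma$‑dependent threshold. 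Since $v_\alpha(\cdot,\sigma)$ is increasing in $y$ (Lemma~\ref{lem: mon1}) and $v_\alpha(t_\alpha(\lambda,\sigma),\sigma)=\lambda$, the event $\{Y\ge t_\alpha(\lambda,\sigma)\}$ equals $\{W\ge\lambda\}$, so the ratio is exactly $\mathbb{P}(\theta<\theta_\alpha\mid W\ge\lambda)$, the marginal FDR of the rule that selects when $W\ge\lambda$.

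Next I would use the self‑consistency of the posterior probability. Because $W$ is a version of $\mathbb{P}(\theta\ge\theta_\alpha\mid Y,\sigma)$ and $\sigma(W)\subseteq\sigma(Y,\sigma)$, the tower property gives $\mathbb{P}(\theta\ge\theta_\alpha\mid W=v)=\mathbb{E}[W\mid W=v]=v$ almost surely, hence $\mathbb{P}(\theta<\theta_\alpha,\ W\ge\lambda)=\int_\lambda^1(1-v)f_v(v;\alpha)\,dv$ and $\mathbb{P}(W\ge\lambda)=\int_\lambda^1 f_v(v;\alpha)\,dv$. Thus the marginal FDR at threshold $\lambda$ is $1-m(\lambda;\alpha)$, where
\[
m(\lambda;\alpha):=\mathbb{E}[W\mid W\ge\lambda]=\frac{\int_\lambda^1 v\,f_v(v;\alpha)\,dv}{\int_\lambda^1 f_v(v;\alpha)\,dv},
\]
and consequently $\lambda_1^*(\alpha,\gamma)=\min\{\lambda:\ m(\lambda;\alpha)\ge 1-\gamma\}$. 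This set is nonempty, since $m(\lambda;\alpha)\ge\lambda\to 1$ as $\lambda\to1$, and $m(\cdot;\alpha)$ is nondecreasing in $\lambda$ for any law of $W$, so it is an upper interval and the minimum is attained.

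The heart of the argument is to show that ``$\nabla_\alpha\log f_v(v;\alpha)$ nondecreasing in $v$'' forces $m(\lambda;\alpha)$ to be nondecreasing in $\alpha$ for each fixed $\lambda$. Setting $\psi(v;\alpha):=\nabla_\alpha\log f_v(v;\alpha)$, so that $\partial_\alpha f_v=\psi f_v$, a short computation with the quotient rule, normalizing each integral by $\int_\lambda^1 f_v(v;\alpha)\,dv$, gives
\[
\partial_\alpha m(\lambda;\alpha)=\mathbb{E}_\lambda[W\,\psi(W;\alpha)]-\mathbb{E}_\lambda[W]\,\mathbb{E}_\lambda[\psi(W;\alpha)]=\mathrm{Cov}_\lambda\big(W,\psi(W;\alpha)\big),
\]
where $\mathbb{E}_\lambda$ denotes expectation under $f_v(\cdot;\alpha)$ truncated to $[\lambda,1]$. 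Since $w\mapsto w$ and $w\mapsto\psi(w;\alpha)$ are monotone in the same direction, this covariance is nonnegative (Chebyshev's association inequality), so $m(\lambda;\alpha)$ is nondecreasing in $\alpha$. Finally, for $\alpha_1>\alpha_2$ and fixed $\gamma$ this yields $m(\lambda;\alpha_1)\ge m(\lambda;\alpha_2)$ for every $\lambda$, hence $\{\lambda:\ m(\lambda;\alpha_1)\ge1-\gamma\}\supseteq\{\lambda:\ m(\lambda;\alpha_2)\ge1-\gamma\}$, and taking minima gives $\lambda_1^*(\alpha_1,\gamma)\le\lambda_1^*(\alpha_2,\gamma)$, as claimed.

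The main obstacle is not the covariance inequality, which is elementary, but the regularity bookkeeping behind the differentiation in the third step: justifying the interchange of $\partial_\alpha$ with the $\lambda$‑truncated integrals — bearing in mind that $\alpha$ enters $f_v(\cdot;\alpha)$ both through $\theta_\alpha=G^{-1}(1-\alpha)$ and through the $\alpha$‑dependent reshaping of the selection region — and checking that $\psi(\cdot;\alpha)$ is integrable against the truncated law so that $\mathrm{Cov}_\lambda(W,\psi(W;\alpha))$ is finite. These are consequences of the smoothness and integrability conditions already assumed (cf.\ the hypothesis of Lemma~\ref{lem:Nestedness}), but they are what make the clean reduction rigorous.
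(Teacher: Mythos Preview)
Your proposal is correct and follows essentially the same route as the paper: both rewrite the ratio as the marginal FDR $\mathbb{E}[(1-v_\alpha)\mathbf{1}\{v_\alpha\ge\lambda\}]/\mathbb{P}(v_\alpha\ge\lambda)$, show it is nonincreasing in $\lambda$, and then differentiate in $\alpha$ to obtain a conditional covariance between a monotone function of $v$ and $\nabla_\alpha\log f_v(v;\alpha)$, whose sign is fixed by the hypothesis. Your framing via $m(\lambda;\alpha)=\mathbb{E}[W\mid W\ge\lambda]=1-\text{mFDR}$ merely flips the sign convention; the covariance step and the concluding threshold comparison are identical to the paper's.
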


\begin{remark}
	The density function $f_{ v}(v; \alpha)$ can be viewed as a function of $v$ 
	indexed by the parameter $\alpha$. An explicit form for $f_{ v}(v;\alpha )$ 
	appears in Section \ref{sec:egNN} for the normal-normal model). 
	The condition imposed in Lemma \ref{lem: nest1} is equivalent to a monotone 
	likelihood ratio condition, that is that the likelihood ratio 
	$f_v(v; \alpha_1) / f_v(v; \alpha_2)$ is non-decreasing in $v$ 
	if $\alpha_1 > \alpha_2$. 
	%The monotone likelihood ratio condition is 
	%stronger in that it implies stochastic dominance between the distributions
	%of  the random variables $v_{\alpha_1}$ and $v_{\alpha_2}$, which holds by construction.
\end{remark}

\begin{corollary}\label{cor: nest1}
	If the condition in Lemma \ref{lem: nest1} holds, then $\Omega_{\alpha_2,\gamma}^{FDR} \subseteq \Omega_{\alpha_1,\gamma}^{FDR}$ for any $\alpha_1 > \alpha_2$. 
\end{corollary}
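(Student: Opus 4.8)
The plan is to reduce the claimed set inclusion to a short chain of three inequalities, only one of which requires real work -- and that work has already been done in Lemma \ref{lem: nest1}. Fix $\alpha_1 > \alpha_2$ and take an arbitrary point $(y,\sigma) \in \Omega_{\alpha_2,\gamma}^{FDR}$, so that by the definition of this set $v_{\alpha_2}(y,\sigma) \geq \lambda_1^*(\alpha_2,\gamma)$. The goal is to verify $v_{\alpha_1}(y,\sigma) \geq \lambda_1^*(\alpha_1,\gamma)$, which is exactly membership of $(y,\sigma)$ in $\Omega_{\alpha_1,\gamma}^{FDR}$.

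The first step is to establish that the posterior tail probability is monotone in $\alpha$ for each fixed $(y,\sigma)$. Since $\theta_\alpha = G^{-1}(1-\alpha)$ and the generalized inverse $G^{-1}$ is nondecreasing, $\alpha_1 > \alpha_2$ forces $\theta_{\alpha_1} \leq \theta_{\alpha_2}$. The posterior law of $\theta$ given $(y,\sigma)$ is the fixed measure proportional to $\varphi(y|\theta,\sigma^2)\,dG(\theta)$ and does not itself depend on $\alpha$, so
\[
v_{\alpha_1}(y,\sigma) = \mathbb{P}(\theta \geq \theta_{\alpha_1}\mid y,\sigma) \geq \mathbb{P}(\theta \geq \theta_{\alpha_2}\mid y,\sigma) = v_{\alpha_2}(y,\sigma),
\]
because $c \mapsto \mathbb{P}(\theta \geq c\mid y,\sigma)$ is nonincreasing and we are evaluating it at a smaller threshold.

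The second step is simply to invoke Lemma \ref{lem: nest1}, whose hypothesis we are assuming: for fixed $\gamma$ and $\alpha_1 > \alpha_2$ it gives $\lambda_1^*(\alpha_1,\gamma) \leq \lambda_1^*(\alpha_2,\gamma)$. Stringing the two facts together with the membership inequality yields
\[
v_{\alpha_1}(y,\sigma) \geq v_{\alpha_2}(y,\sigma) \geq \lambda_1^*(\alpha_2,\gamma) \geq \lambda_1^*(\alpha_1,\gamma),
\]
which is precisely the condition defining $\Omega_{\alpha_1,\gamma}^{FDR}$. As $(y,\sigma)$ was arbitrary, $\Omega_{\alpha_2,\gamma}^{FDR} \subseteq \Omega_{\alpha_1,\gamma}^{FDR}$, as required.

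I do not anticipate a genuine obstacle here: the content is all in Lemma \ref{lem: nest1}, and within the corollary the only point meriting a line of justification is the pointwise inequality $v_{\alpha_1} \geq v_{\alpha_2}$, which follows at once from the opposite ordering of $\alpha$ and $\theta_\alpha$ together with the $\alpha$-independence of the posterior. One could alternatively argue through the representation $\Omega_{\alpha,\gamma}^{FDR} = \{(y,\sigma): y \geq t_\alpha(\lambda_1^*(\alpha,\gamma),\sigma)\}$ and show that the cutoff functions $t_{\alpha}(\lambda_1^*(\alpha,\gamma),\cdot)$ are ordered, but that route repackages the same monotonicity in a less transparent way and I would not take it.
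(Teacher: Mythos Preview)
Your proof is correct and follows essentially the same route as the paper: both establish the pointwise monotonicity $v_{\alpha_1}(y,\sigma)\geq v_{\alpha_2}(y,\sigma)$ from $\theta_{\alpha_1}\leq\theta_{\alpha_2}$, invoke Lemma~\ref{lem: nest1} for $\lambda_1^*(\alpha_1,\gamma)\leq\lambda_1^*(\alpha_2,\gamma)$, and chain the two inequalities. Your exposition is somewhat more explicit about why $v_\alpha$ is monotone in $\alpha$, but the argument is the same.
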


\begin{remark}
	The condition in Lemma \ref{lem: nest1} is sufficient but not necessary for nestedness of $\Omega_{\alpha,\gamma}^{FDR}$.  Even when $\lambda_1^*(\alpha_1,\gamma) > \lambda_1^*(\alpha_2,\gamma)$, we can still have $t_{\alpha_1}(\lambda_1^*(\alpha_1,\gamma),\sigma) < t_{\alpha_2}(\lambda^*_1(\alpha_2,\gamma),\sigma)$ because the function $v_{\alpha}(Y,\sigma)$ depends on $\alpha$, as does its inverse function $t_{\alpha}$.
\end{remark}

\begin{lemma} \label{lem: nest2}
	Let $\lambda_2^*(\alpha)$ be defined as in Proposition \ref{prop: rule_ysigma}.
If for any $\alpha_1 > \alpha_2$, $t_{\alpha_1}(\lambda_2^*(\alpha_1), \sigma) \leq t_{\alpha_2}(\lambda_2^*(\alpha_2),\sigma)$ for each $\sigma$, then $\Omega_{\alpha_2}^C \subseteq \Omega_{\alpha_1}^C$.
\end{lemma}

\begin{remark}
	The monotonicity here coincides with the condition in Theorem 3 of \citeasnoun{HN},
	who demonstrate that it holds when $G$ is Gaussian.
	However, it need not hold as shown in our counter-example in Section \ref{sec: S3eg}. 
\end{remark}

\begin{lemma}\label{lem: nest3}
	If $\nabla_\alpha \log f_v(v; \alpha)$ is non-decreasing in $v$ and the condition in Lemma \ref{lem: nest2} holds, then for a fixed $\gamma$, the selection region has a nested structure: if $\alpha_1 > \alpha_2$ then $\Omega_{\alpha_2, \gamma} \subseteq \Omega_{\alpha_1,\gamma}$.  
\end{lemma}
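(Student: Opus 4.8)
The plan is to exploit the decomposition $\Omega_{\alpha,\gamma} = \Omega_{\alpha,\gamma}^{FDR} \cap \Omega_{\alpha}^{C}$ recorded just before Lemma \ref{lem: nest1}, and then to apply the two preceding nestedness results to each factor separately. So the proof has three moves: (i) re-establish the set identity at each fixed $\alpha$; (ii) invoke Corollary \ref{cor: nest1} for the FDR factor and Lemma \ref{lem: nest2} for the capacity factor; (iii) intersect.

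For (i): by Proposition \ref{prop: rule_ysigma} the optimal threshold is $\lambda^*(\alpha,\gamma) = \max\{\lambda_1^*(\alpha,\gamma),\lambda_2^*(\alpha)\}$, hence for a fixed $\alpha$ the event $v_\alpha(y,\sigma)\geq \lambda^*(\alpha,\gamma)$ holds iff both $v_\alpha(y,\sigma)\geq \lambda_1^*(\alpha,\gamma)$ and $v_\alpha(y,\sigma)\geq \lambda_2^*(\alpha)$. Using the monotonicity of $v_\alpha(\cdot,\sigma)$ in $y$ from Lemma \ref{lem: mon1} and writing $t_\alpha(\lambda,\sigma)$ for its inverse, this reads $\Omega_{\alpha,\gamma} = \{(y,\sigma): y \geq \max(t_\alpha(\lambda_1^*(\alpha,\gamma),\sigma),\, t_\alpha(\lambda_2^*(\alpha),\sigma))\} = \Omega_{\alpha,\gamma}^{FDR}\cap\Omega_\alpha^C$, since $t_\alpha(\cdot,\sigma)$ is nondecreasing.

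For (ii) and (iii): under the hypothesis that $\nabla_\alpha \log f_v(v;\alpha)$ is non-decreasing in $v$, Corollary \ref{cor: nest1} gives $\Omega_{\alpha_2,\gamma}^{FDR}\subseteq\Omega_{\alpha_1,\gamma}^{FDR}$ whenever $\alpha_1 > \alpha_2$; and under the monotonicity condition of Lemma \ref{lem: nest2}, that lemma gives $\Omega_{\alpha_2}^C\subseteq\Omega_{\alpha_1}^C$. Since intersection preserves inclusion, $\Omega_{\alpha_2,\gamma} = \Omega_{\alpha_2,\gamma}^{FDR}\cap\Omega_{\alpha_2}^C \subseteq \Omega_{\alpha_1,\gamma}^{FDR}\cap\Omega_{\alpha_1}^C = \Omega_{\alpha_1,\gamma}$, which is the claim.

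There is no substantial analytical obstacle, as the real content has been isolated in Corollary \ref{cor: nest1} and Lemma \ref{lem: nest2}; the only point requiring care is step (i). Because $v_\alpha$, $t_\alpha$, and both multipliers $\lambda_1^*,\lambda_2^*$ depend on $\alpha$, one must verify the identity $\Omega_{\alpha,\gamma}=\Omega_{\alpha,\gamma}^{FDR}\cap\Omega_\alpha^C$ at each single $\alpha$ before combining ingredients that compare the two different levels $\alpha_1,\alpha_2$; once that bookkeeping is in place, the conclusion is a one-line consequence of monotonicity of intersection under inclusion.
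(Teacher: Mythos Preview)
Your proof is correct and follows essentially the same route as the paper: decompose $\Omega_{\alpha,\gamma} = \Omega_{\alpha,\gamma}^{FDR}\cap\Omega_\alpha^C$, then apply the nestedness of each factor (via Corollary \ref{cor: nest1}/Lemma \ref{lem: nest1} and Lemma \ref{lem: nest2}) and intersect. Your write-up is more explicit than the paper's one-line version, but the strategy is identical.
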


\subsection{Examples} \label{sec:egNN}
In this section we consider several examples beginning with the simplest classical case 
in which the $\theta_i$ constitute a random sample from the standard Gaussian distribution.
This Gaussian assumption on the form of the mixing distribution $G$ underlies almost
all of the empirical Bayes literature in applied economics; it is precisely what justifies
the linear shrinkage rules that are typically employed.

\begin{example}[{\bf{Gaussian $G$}}]
	Consider the normal-normal model, where $y|\theta, \sigma^2 \sim \mathcal{N}(\theta, \sigma^2)$ and $\theta \sim \mathcal{N}(0,\sigma_\theta^2)$ and $\sigma \sim H$ with density function $h(\sigma)$.  
	The marginal distribution of $y$ given $\sigma^2$ is $\mathcal{N}(0, \sigma^2 + \sigma_\theta^2)$ 
	and the joint density of $(y,\sigma)$ takes the form, 
	\[
	f(y, \sigma) = \frac{1}{\sqrt{2\pi (\sigma^2 + \sigma_\theta^2)}} \exp\Big\{ -\frac{y^2}{2(\sigma^2 + \sigma_\theta^2)}\Big\} h(\sigma).
	\]
	Given the normal conjugacy, the posterior distribution of $\theta | y, \sigma^2$ 
	follows $\mathcal{N} (\rho y, \rho \sigma^2)$ 
	where $\rho = \sigma_\theta^2/( \sigma_\theta^2 + \sigma^2)$.
	The random variable $v$ is thus a transformation of the pair $(Y, \sigma^2)$, 
	defined as, 
	\[
	 v = \psi(y, \sigma^2) := \mathbb{P}(\theta \geq  \theta_\alpha|y, \sigma^2) = 
	\Phi ( ( \rho y - \theta_\alpha ) /\sqrt{\rho \sigma^2}). 
	\]
	For fixed $\sigma^2$, $\psi$ is monotone increasing in $y$ and 
	$\psi^{-1}( v) = \theta_\alpha / \rho  + \sqrt{\sigma^2/\rho }\Phi^{-1}(v)$ with 
	$\nabla_{ v} \psi^{-1}( v) = \sqrt{\sigma^2/\rho}/\varphi(\Phi^{-1}( v))$. 
	The joint density of $ v$ and $\sigma$ is thus,
	\begin{align*}
		g( v, \sigma) &= f(\psi^{-1}( v), \sigma) 
		    |\nabla_{ v} \psi^{-1}( v)|\\
		& =  \frac{1}{\sqrt{2\pi (\sigma^2 + \sigma_\theta^2)}} 
		\exp \Big \{ -\frac{( \theta_\alpha /\rho +\sqrt{\sigma^2/\rho} 
		\Phi^{-1}( v))^2}{2(\sigma^2 + \sigma_\theta^2)}\Big\} 
		\frac{ \sqrt{\sigma^2/\rho}}{\varphi(\Phi^{-1}( v))}h(\sigma).
	\end{align*}
	Integrating out $\sigma$ we have the marginal density of $ v$, 
	\[
	f_{ v} (v; \alpha) = \int  
		\frac{1}{\sqrt{2\pi (\sigma^2 + \sigma_\theta^2)}} 
		\exp \Big \{ -\frac{( \theta_\alpha /\rho + \sqrt{\sigma^2/\rho} 
		\Phi^{-1}( v))^2}{2(\sigma^2 + \sigma_\theta^2)}\Big\} 
		\frac{ \sqrt{\sigma^2/\rho}}{\varphi(\Phi^{-1}(\tilde v))}
	dH(\sigma)
	\]
	The capacity constraint is $\mathbb{P}( v \geq \lambda^*_2) = \alpha$, 
	with cut-off value, $\lambda^*_2$, satisfying,
	\[
		\alpha = \mathbb{P}( v \geq  \lambda_2^*)
		 = 1- \int \Phi\Big(\theta_\alpha \frac{\sqrt{\sigma^2+\sigma_\theta^2}}{\sigma_\theta^2}   
		- \Phi^{-1}(1-\lambda_2^*) \sqrt{\sigma^2/\sigma_\theta^2} \Big) dH(\sigma).
	\]
To find $\lambda_1^*$, we can use the formula provided in Proposition \ref{prop: rule_ysigma}. 
A more direct approach is to recognize, see Section \ref{sec:adaptive}, that 
the FDR control constraint is defined as $\gamma = \mathbb{E}[(1- v) \11\{ v \geq 
\lambda_1^*\}]/\mathbb{P}( v \geq  \lambda_1^*)$, where the cut-off value $\lambda_1^*$ is defined through 
	\[
	\gamma = \int_{\lambda_1^*}^{1}(1-v) f_{ v}(v; \alpha) dv /\int_{\lambda_1^*}^{1} f_{ v}(v;\alpha) dv.
	\]
	Let $\lambda^* = \max\{\lambda_1^*, \lambda_2^*\}$, the selection region is then $\{(y, \sigma): y \geq t_\alpha(\lambda^*,\sigma)\}$ with 
	\[
	t_\alpha(\lambda^*,\sigma) = \theta_\alpha / \rho - 
	\Phi^{-1} (1-\lambda^*) \sqrt{\sigma^2/ \rho }.
	\]

	Suppose we use the posterior mean of $\theta$ as a ranking device, so 
	$\delta_i^{\texttt{PM}} = \11\{ y \rho  \geq \omega^*\}$ for some suitably chosen $\omega^*$ 
	that guarantees both capacity and FDR control. For the capacity constraint, 
	the thresholding value solves, 
	\begin{align*}
		1- \alpha &= \int \mathbb{P}(y \rho < \omega_2^*)dH(\sigma) \\
		& = \int \Phi \Big( \omega_2^* / (\sigma_\theta^2\sqrt{\sigma_\theta^2 + \sigma^2})\Big) dH(\sigma),
	\end{align*}
	while FDR control requires a thresholding value that solves, 
	\begin{align*}
	\gamma &= \int \mathbb{P}(y \geq \omega_1^* / \rho , \theta< \theta_\alpha)dH(\sigma)
	/\int \mathbb{P}(y \geq \omega_1^* / \rho )dH(\sigma)\\
	& =\int  \int _{[\omega_1^* / \rho, + \infty)} (1-\alpha) f_0(y) dy dH(\sigma)
	/\int 1- \Phi \Big ( \omega_1^* /  (\sigma_\theta^2\sqrt{\sigma_\theta^2 + \sigma^2}\Big)dH(\sigma),
	\end{align*}
	with 
	\[
	f_0(y) = \frac{1}{1-\alpha} \frac{1}{\sqrt{2\pi (\sigma_\theta^2+\sigma^2)}} 
	\exp\Big\{ - \frac{y^2}{2(\sigma_\theta^2+\sigma^2)}\Big\}
	\Phi\Big( \frac{(\theta_\alpha - y \rho)}{\sqrt{\rho \sigma^2} }\Big),
	\]
	denoting the density of $y$ under the null $\theta< \theta_\alpha$.
	Setting $\omega^* = \max\{\omega_1^*, \omega_2^*\}$, 
	the selection region is then $ \{(y,\sigma): y \geq  \omega^* / \rho \}$.

\begin{figure}[h!]
\begin{center}
\resizebox{.9\textwidth}{!}{\includegraphics{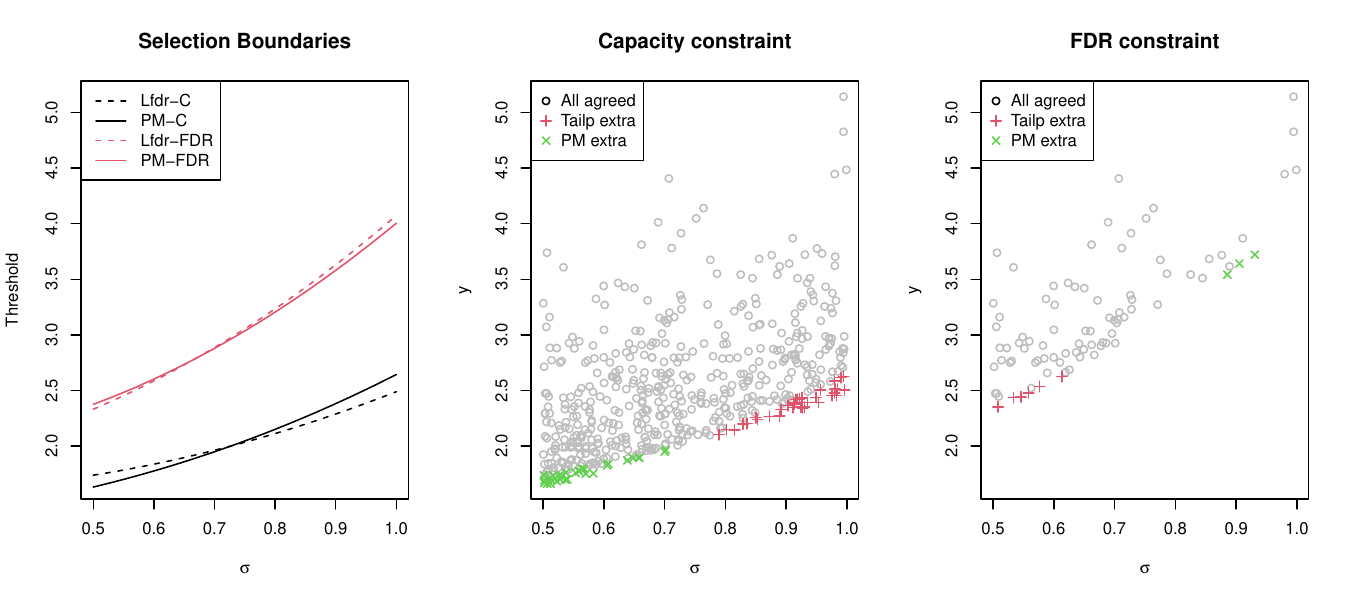}}
\end{center}
\caption{\small The left panel plots the selection boundaries for the normal-normal model with 
$\sigma_\theta^2 = 1$ and $\alpha = 0.05$ and $\gamma = 0.2$.  
The density of $\sigma$ is assumed to be uniform on the range $[0.5, 1]$.
Selected units must have $(y_i , \sigma_i )$ above the curves.  The red curves correspond 
to the selection region boundaries with FDR controlled at level $0.2$; 
solid lines for posterior mean ranking and dash line for posterior tail probability ranking. 
The black curves correspond to the selection boundaries with capacity control at level $0.05$.  
The middle and right panels illustrate the selected set from a realized sample 
of size 10,000. The grey circles correspond to individuals selected by both the posterior tail 
probability rule and the posterior mean rule. The green crosses depict individuals 
selected by the posterior mean rule but not the tail probability and the red crosses indicate 
individuals selected by the tail probability rule, but not by the posterior mean rule. }
\label{fig:nn}
\end{figure}

Figure \ref{fig:nn} plots the selection boundaries for both constraints with 
$\theta \sim \mathcal{N}(0,1)$ and $\sigma \sim U[0.5, 1]$. 
With $\alpha = 0.05$ and $\gamma = 0.2$, the FDR constraint is binding, 
but not the capacity constraint. In this example, if we only impose the capacity constraint to be 5 percent, 
even an Oracle totally aware of $G$, will face a false discovery rate of nearly 52 percent.  
In other words more than half of those selected to be in the right tail will be individuals with 
$\theta < \theta_\alpha$ rather than from the intended $\theta \geq \theta_\alpha$ group.  
This fact motivates our more explicit incorporation of FDR into the selection constraints.  
We may recall that in the homogeneous variance Gaussian setting we saw in Figure \ref{fig.FDRFNR} 
that FDR was also very high when $\alpha$ is set at 0.05. 
Figure \ref{fig:nn} also depicts the selected set with a realized sample of
10,000 from the normal-normal model. With capacity constraint alone, the posterior mean 
criteria favours individuals with smaller variances. When the FDR constraint is implemented, 
with $\gamma = 0.2$, it becomes the binding constraint in this setting, both criteria 
become more stringent and only a much smaller set of individuals are selected, and there
is less conflict in the selections. The corresponding selected sets are plotted in the 
right panel of Figure \ref{fig:nn}. 
When the variance parameter $\sigma_\theta^2$ in $G$ is not observed, we can estimate it via the MLE based on the marginal likelihood of $Y$. This leads to a generalized James-Stein estimator of the type
proposed in \citeasnoun{EfronMorris}. 
\end{example}
%In particular, the MLE of $\sigma_\theta^2$ solves the equation
%\[
%\hat \sigma_\theta^2 = \frac{\sum_i (y_i^2-\sigma_i^2)/2(\hat \sigma_\theta^2 + \sigma_i^2)}{\sum_i 1/2(\hat \sigma_\theta^2 + \sigma_i^2)}
%\]

\begin{example}[{\bf{Discrete $G$}}] \label{eg: discreteS3}
Suppose $\theta \sim 0.85 \delta_{-1} + 0.1 \delta_{2} + 0.05 \delta_5$. Then the marginal density of $y$ given $\sigma^2$ takes the form, 
\begin{align*}
f(y|\sigma^2) & = \int \varphi(y|\theta,\sigma^2) dG(\theta)\\ 
& = \Big( 0.85 \varphi(y|1,\sigma^2) + 0.1 \varphi(y|2,\sigma^2) + 0.05 \varphi(y|5,\sigma^2)).
\end{align*}
And the random variable $ v$ is a transformation of the pair $(y, \sigma^2)$, defined as, 
\begin{align*}
v &= \psi(y, \sigma^2) := \mathbb{P}(\theta \geq  \theta_\alpha|y, \sigma^2) = \frac{\int_{\theta_\alpha}^{+\infty} \varphi(y|\theta,\sigma^2)dG(\theta)}{\int_{-\infty}^{\infty} \varphi(y|\theta,\sigma^2)dG(\theta)}.
\end{align*}
The capacity constraint leads to a thresholding rule on $v$ such that 
$\mathbb{P}(v \geq \lambda_2^*) = \alpha$, while the FDR control leads to a cutoff value, 
$\lambda_1^*$ defined through $\gamma = \mathbb{E}[(1- v )\11\{ v \geq \lambda_1^*\}]/\mathbb{P}( v \geq \lambda_1^*)$. Let $\lambda^* = \max\{\lambda_1^*, \lambda_2^*\}$, 
the selection region is then defined by $\{(y,\sigma): y \geq t_\alpha(\lambda^*, \sigma)\},$ 
and can be found easily numerically. 

\begin{figure}[h!]
	\begin{center}
		\resizebox{.9\textwidth}{!}{\includegraphics{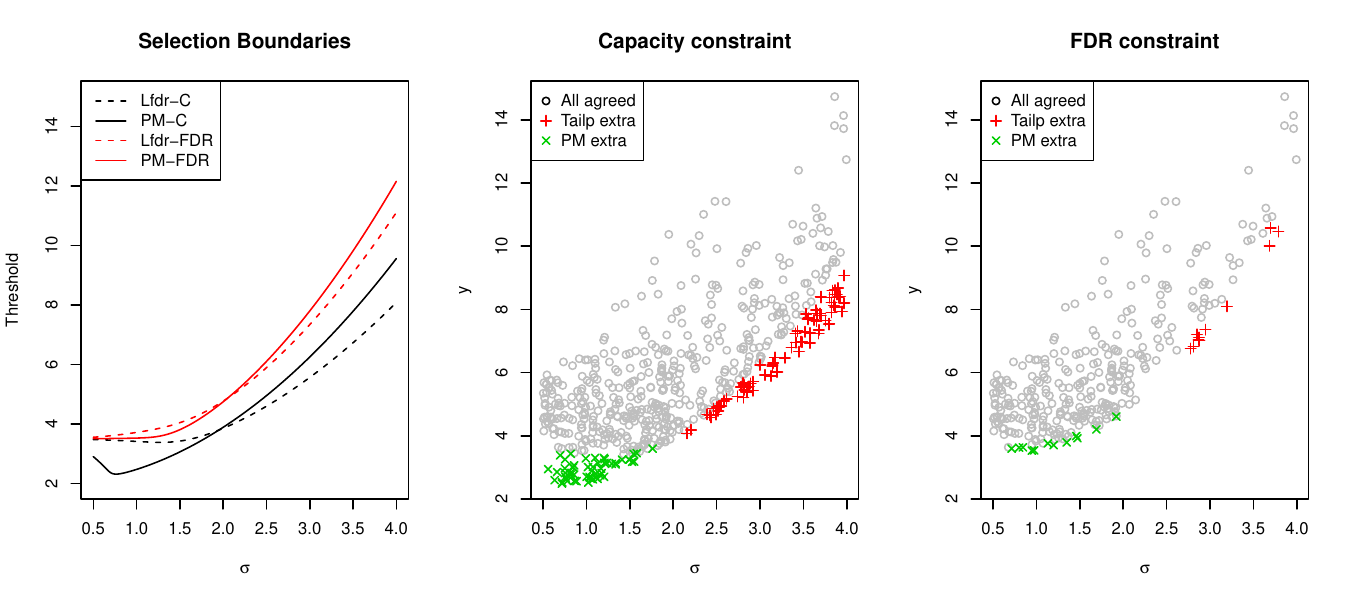}}
	\end{center}
	\caption{\small The left panel plots the selection boundaries for the normal-discrete model with 
	$\theta \sim  G = 0.85 \delta_{-1} + 0.1 \delta_{2} + 0.05 \delta_{5}$ and $\alpha = 0.05$ and 
	$\gamma = 0.1$.  The density of $\sigma$ is uniform on the range $[0.5, 4]$. 
	The red curves correspond to the selection region with FDR controlled at level $0.2$, 
	solid lines for posterior mean ranking and dashed lines for posterior tail probability ranking. 
	The black curves correspond to the selection region with capacity control at level $0.05$. 
	The other panels are structured as in the previous figure.}
	\label{fig:nd}
\end{figure}

Figure \ref{fig:nd} plots the selection boundaries for both constraints when
$\theta$ follows this discrete distribution.  We again set $\alpha = 0.05$ and $\gamma = 0.2$, 
so we would like to select all the individuals associated with the largest effect size,
$\{\theta = 5\}$, while controlling the FDR rate below $20\%$. 
The red curves again correspond to FDR control with the two ranking procedure, 
while the black curves correspond to capacity control. For the two regions to 
overlap with $\alpha$ fixed at $0.05$, we must be willing to tolerate $\gamma \approx 37\%$. 
In this case, we see that the posterior probability ranking procedure prefers individuals 
with larger variances, while the posterior mean ranking procedure prefers smaller variances. 
Based on a realized sample of 10,000, Figure \ref{fig:nd} again shows the
selected observations and once more we see that 
the posterior mean criteria favours individuals with smaller variances, both under 
the capacity constraint and the FDR constraint.  In contrast to the normal-normal
setting, now the FDR constraint is much less severe and allows us to select considerably
more individuals.

\end{example}

\section{Heterogeneous unknown variances}  \label{sec:UnknownVar}
Assuming that the $\sigma_i$'s are known, up to a common scale parameter, may be plausible
in some applications such as baseball batting averages, but it is frequently more plausible
to adopt the view that we are simply confronted with estimates of scale available perhaps from
longitudinal data.  In such cases we need to consider the pairs, $(y_i , S_i)$
as potentially jointly dependent random variables arising from the longitudinal model,
\[
Y_{it} = \theta_i + \sigma_i \epsilon_{it}, \quad  \epsilon_{it} \sim_{iid} \mathcal{N}(0,1), \quad (\theta_i, \sigma_i^2) \sim G,
\]
with sufficient statistics, $Y_i = T_i^{-1} \sum_{t=1}^{T_i} Y_{it}$ and 
$S_i = (T_i-1)^{-1} \sum_{t=1}^{T_i} (Y_{it} - Y_i)^2$, for $(\theta_i,\sigma_i^2)$. 
Conditional on $(\theta_i,\sigma_i^2)$, we have $Y_i |\theta_i,\sigma_i^2  \sim 
\mathcal{N}(\theta_i, \sigma_i^2/T_i)$ and $S_i |\sigma_i^2$ is distributed as Gamma
with shape parameter $r_i = (T_i-1)/2$, scale parameter, $\sigma_i^2/r_i$, 
and density function denoted as $\Gamma(S_i | r_i,\sigma_i^2/r_i)$. 

Given the loss function (\ref{loss}) and defining $\theta_\alpha$ as 
$\alpha = \mathbb{P}(\theta_i \geq \theta_\alpha) = 
\int \int_{\theta_\alpha}^{+\infty} dG(\theta, \sigma^2)$, the conditional risk is,
\begin{align*}
    \mathbb{E}_{\bm{\theta}|\bm{Y}, \bm{S}}\Big[ L(\bm{\delta}, \bm{\theta})\Big] & = 
   	\sum_{i=1}^n (1-\delta_i) v_\alpha(Y_i,S_i) \\
    & + \tau_1 \Big( \sum_{i=1}^n \{\delta_i (1-v_\alpha(Y_i, S_i)) - \gamma \delta_i\}\Big) 
	+ \tau_2 \Big( \sum_{i=1}^n \delta_i - \alpha n\Big) 
\end{align*}
with
\begin{align*}
v_\alpha(y_i,s_i) & = \mathbb{P}(\theta_i \geq \theta_\alpha | Y_i = y_i, S_i = s_i)\\ 
& = \frac{\int \int_{\theta_\alpha}^{+\infty} \Gamma(s_i | r_i, \sigma_i^2/r_i)\varphi(y_i | \theta,\sigma^2/T_i) dG(\theta,\sigma^2)}
{\int \int  \Gamma(s_i | r_i, \sigma_i^2/r_i)\varphi(y_i | \theta,\sigma^2/T_i) dG(\theta,\sigma^2)}.
\end{align*}
Taking expectations with respect to $(Y, S)$, the Bayes rule solves, 
\[
\underset{\bm{\delta}}{\min} \mathbb{E} \Big [ \sum_{i=1}^n (1-\delta_i) v_\alpha(y_i,s_i)\Big] + \tau_1 \Big( \mathbb{E}\Big[ \sum_{i=1}^n \Big\{(1-v_\alpha(y_i,s_i))\delta_i - \gamma \delta_i\Big\}\Big]\Big) + \tau_2 \Big (\mathbb{E}\Big[ \sum_{i=1}^n \delta_i\Big] - \alpha n\Big). 
\]

Before characterizing the Bayes rule any further, we should observe that when variances $\sigma^2$ 
are not directly observed, the tail probability $v_\alpha(Y,S)$ may no longer have the 
monotonicity property we have described above. 

\begin{lemma} \label{lem: nonmon}
	Consider the transformation $v_\alpha(Y,S) = \mathbb{P}(\theta \geq \theta_\alpha|Y,S]$, then for fixed $S = s$, the function $v_\alpha(Y, s)$ may not be monotone in $Y$; and for fixed $Y = y$, the function $v_\alpha(y,S)$ may not be monotone in $S$. 
\end{lemma}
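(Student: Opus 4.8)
The statement is a pair of ``may fail'' claims, so it suffices to exhibit one explicit counterexample and display the failure; the natural vehicle is a finitely supported mixing distribution $G$ on $(\theta,\sigma^2)$, for which $v_\alpha$ becomes a ratio of finite sums that can be analyzed by hand. The plan is to take $G$ with three atoms: a ``quiet null'' atom $(0,\sigma_1^2)$ with weight $w_1$, a ``tail'' atom $(\theta^\ast,\sigma_2^2)$ with weight $w_2$, and a ``noisy null'' atom $(0,\sigma_3^2)$ with weight $w_3$, where $\sigma_1^2<\sigma_2^2<\sigma_3^2$, $\theta^\ast>0$ is large, and $w_2=\alpha$ so that $\theta_\alpha=\theta^\ast$ and only the middle atom enters the numerator of $v_\alpha$. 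Writing $A_j(y)=\phi\big((y-\theta_j)/\sqrt{\sigma_j^2/T}\big)/\sqrt{\sigma_j^2/T}$ for the Gaussian factor of atom $j$ and $\Gamma_j(s)=\Gamma(s\mid r,\sigma_j^2/r)$ for its Gamma factor, one has the exact identity $v_\alpha(y,s)=w_2A_2(y)\Gamma_2(s)\big/\sum_{j=1}^3 w_jA_j(y)\Gamma_j(s)$, after which everything reduces to elementary monotonicity facts about Gaussian and Gamma densities.

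For non-monotonicity in $y$ at a fixed $s$: the numbers $\tilde w_j:=w_j\Gamma_j(s)$ are positive constants, so $v_\alpha(\cdot,s)=\tilde w_2A_2\big/\sum_j\tilde w_jA_j$. Since the third atom has the largest Gaussian variance $\sigma_3^2/T$, both $A_1(y)/A_3(y)$ and $A_2(y)/A_3(y)$ tend to $0$ as $y\to\pm\infty$, hence $v_\alpha(y,s)\to 0$ at both ends of the line, whereas $v_\alpha(y,s)>0$ for every finite $y$ because the tail atom always contributes. A continuous function on $\RR$ that is everywhere strictly positive but vanishes at $\pm\infty$ attains an interior maximum and is therefore not monotone; no tuning is needed here beyond $w_j>0$ and, say, $\theta^\ast=\theta_\alpha$.

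For non-monotonicity in $s$ at a fixed $y$: from the Gamma density one computes $\Gamma_1(s)/\Gamma_2(s)=(\sigma_2^2/\sigma_1^2)^r\exp\!\big(sr(1/\sigma_2^2-1/\sigma_1^2)\big)$, strictly decreasing in $s$ because $\sigma_1^2<\sigma_2^2$, and $\Gamma_3(s)/\Gamma_2(s)=(\sigma_2^2/\sigma_3^2)^r\exp\!\big(sr(1/\sigma_2^2-1/\sigma_3^2)\big)$, strictly increasing in $s$ and unbounded. Writing $c_1=w_1A_1(y)/(w_2A_2(y))>0$ and $c_3=w_3A_3(y)/(w_2A_2(y))>0$, we get $v_\alpha(y,s)^{-1}-1=c_1\,\Gamma_1(s)/\Gamma_2(s)+c_3\,\Gamma_3(s)/\Gamma_2(s)=:f(s)$, a sum of a decreasing and an increasing term; $f(0)$ is finite while $f(s)\to\infty$ as $s\to\infty$, so $v_\alpha(y,\cdot)\to 0$. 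Because the ratios are explicit, $f'(0)=c_1(\sigma_2^2/\sigma_1^2)^r r(1/\sigma_2^2-1/\sigma_1^2)+c_3(\sigma_2^2/\sigma_3^2)^r r(1/\sigma_2^2-1/\sigma_3^2)$: the first summand is negative with magnitude that grows without bound as $\sigma_1^2\downarrow$, while the second is positive but vanishes as $\sigma_3^2\uparrow$, so placing $\sigma_1^2$ well below and $\sigma_3^2$ well above $\sigma_2^2$ forces $f'(0)<0$ whatever the (fixed, positive) Gaussian factors are. Then $f$ strictly decreases from $f(0)$, attains an interior minimum, and diverges, so $v_\alpha(y,\cdot)$ rises above its value at $s=0$, peaks, and falls back to $0$ — the asserted failure of monotonicity in $s$. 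I would finish by committing to one concrete parametrization — well separated $\sigma_1^2<\sigma_2^2<\sigma_3^2$, a discrete $G$ in the spirit of Section \ref{sec: S3eg}, $\alpha=0.05$, a modest common $T$ — and confirming these two computed facts.

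The density algebra and the $y\to\pm\infty$ limits are entirely routine; there is no hard analytic obstacle here. The only genuine content is choosing the right three-atom structure — in particular putting the tail atom at an \emph{intermediate} variance, flanked by null atoms at a smaller and a larger variance — because that is exactly what lets the information in $S$ drag the posterior toward the tail atom as $s$ approaches $\sigma_2^2$ and then away from it again as $s$ overshoots toward $\sigma_3^2$; and, in the $y$-direction, it is the presence of a null atom with the largest variance that pushes $v_\alpha$ back to $0$ in the tails. These structural choices, together with the explicit Gamma-ratio derivative at $s=0$, are the steps I would pin down carefully rather than leave to asymptotic heuristics.
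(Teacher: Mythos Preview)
Your construction is sound and the argument is essentially correct, but it takes a genuinely different route from the paper's proof.

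The paper does not build a counterexample at all. Instead it differentiates $v_\alpha(y,s)$ directly and expresses the partial derivatives as posterior covariances:
\[
\nabla_s v_\alpha(y,s) = -\,\mathrm{Cov}\!\left[\,\11\{\theta\ge\theta_\alpha\},\; r/\sigma^2 \,\middle|\, Y=y,S=s\right],
\qquad
\nabla_y v_\alpha(y,s) = \mathrm{Cov}\!\left[\,\11\{\theta\ge\theta_\alpha\},\; (\theta-y)T/\sigma^2 \,\middle|\, Y=y,S=s\right],
\]
and then observes that, because $G$ is an arbitrary joint law on $(\theta,\sigma^2)$, neither $r/\sigma^2$ nor $(\theta-y)T/\sigma^2$ need be a monotone function of $\theta$ under the posterior, so the covariances can have either sign. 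This is shorter and exposes the structural reason monotonicity breaks --- it is precisely the dependence between $\theta$ and $\sigma^2$ in $G$ that spoils the argument used in Lemma~\ref{lem:Nestedness}. Your approach, by contrast, is fully constructive: a three-atom $G$ with the tail atom sandwiched at an intermediate variance, together with the Gaussian-tail and Gamma-ratio asymptotics, delivers an explicit $G$ and explicit $(y,s)$ where monotonicity fails in each coordinate. That is arguably more convincing as a ``may fail'' proof, and it dovetails with the discrete example the paper later displays numerically in Appendix~\ref{sec:DiscreteEx}.

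One technical wobble to tighten: when you argue that $f'(0)<0$ by sending $\sigma_1^2\downarrow$ and $\sigma_3^2\uparrow$, you treat $c_1$ and $c_3$ as ``fixed, positive Gaussian factors,'' but $c_j=w_jA_j(y)/(w_2A_2(y))$ depends on $\sigma_j^2$ through $A_j(y)$ --- and for $y\neq 0$, $A_1(y)\to 0$ super-exponentially as $\sigma_1^2\downarrow 0$, which can kill the first summand rather than inflate it. The clean fix is to hold the $\sigma_j^2$ fixed and instead adjust the \emph{weights}: since $c_1/c_3=(w_1A_1(y))/(w_3A_3(y))$ and $A_1(y),A_3(y)>0$, you can make $c_1/c_3$ as large as you like by choice of $w_1,w_3$, forcing $f'(0)<0$. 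Your closing plan to commit to one concrete parametrization would catch and resolve this in any case.
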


\begin{proposition}\label{prop: nonmon}
For pre-specified $(\alpha,\gamma)$ such that $\gamma < 1-\alpha$, the Bayes selection rule takes the form 
\[
\delta_i^* = \11\{v_\alpha(Y,S) \geq  \lambda^*(\alpha,\gamma)\}
\]
where $\lambda^*(\alpha,\gamma) = \max\{\lambda_1^*(\alpha,\gamma), \lambda_2^*(\alpha)\}$ with
\[
\lambda_1^*(\alpha,\gamma) = \min \Big \{\lambda: \mathbb{E}\Big [(1-v_\alpha(Y,S)-\gamma)\11\{v_\alpha(Y,S) \geq \lambda\}\Big ]  \leq 0 \Big \}
\]
and 
\[
\lambda_2^*(\alpha) = \min\Big\{\lambda: \mathbb{P}(v_\alpha(Y,S) \geq \lambda) - \alpha \leq 0\Big\}
\]
\end{proposition}

Based on the Bayes rule, the selected set is defined as 
\[
\Omega_{\alpha, \gamma} = \{(Y,S): 
v_\alpha(Y,S) \geq \lambda^*(\alpha,\gamma)\}. 
\]
%An example violating this condition is given at the end of this Section.

%\begin{lemma}
%Denote the density function of $v_\alpha(Y,S)$ by $f_v(v; \alpha)$, and let 
%\[
%\lambda_1^*(\alpha,\gamma) = \min \Big \{\lambda: 
%\mathbb{E}\Big [(1-v_\alpha(Y,S)-\gamma)\11\{v_\alpha(Y,S) \geq \lambda\}\Big ]  \leq 0 \Big \}.
%\]
%If $\nabla_\alpha \log f_v(v; \alpha)$ is non-decreasing in $v$, then for any 
%fixed $\gamma$ and $\alpha_1 > \alpha_2$, 
%$\lambda_1^*(\alpha_1, \gamma) \leq \lambda_1^*(\alpha_2,\gamma)$.
%And the selection region has a nested structure.  
%\end{lemma}

\begin{remark}
Note that for each prespecified pair $(\alpha,\gamma)$, $\Omega_{\alpha, \gamma}$  
is just the $\lambda^*(\alpha,\gamma)$-superlevel set of the function $v_\alpha(Y,S)$. 
For any $\alpha_1 > \alpha_2$, nestedness of the selected sets would mean that the 
$\lambda^*(\alpha_2,\gamma)$-superlevel set of the function $v_{\alpha_2}$ must be a subset 
of the $\lambda^*(\alpha_1, \gamma)$-superlevel set of the function $v_{\alpha_1}$.
The construction and the form of the optimal selection rule may appear to be very similar to the case where $\sigma_i^2$ is observed. 
However, the crucial difference is that we no longer require the independence between 
$\theta$ and $\sigma^2$ in this section. In contrast, when $\sigma_i^2$ is assumed to be 
directly observed, the independence assumption is critical for all the derivations. 
For instance, the non-null proportion, defined as $\mathbb{P}(\theta_i \geq \theta_\alpha)$, 
must change for different values of $\sigma_i$ if we allow the distribution of 
$\theta$ to depend on $\sigma$. 
\end{remark}

\subsection{A Conjugate Gaussian Example} 
\label{eg: normaNIX}
%In this subsection we discuss two explicit examples. 
%In the first example, $G$, is the classical conjugate prior for the Gaussian model 
%with unknown means and variances. 

Suppose we have balanced panel data $y_{i1}, \dots, y_{iT} \sim \mathcal{N}(\theta,\sigma^2)$ 
with sample means $Y_i = \frac{1}{T}\sum_t y_{it}$ and sample variances 
$S_i = \frac{1}{T-1}\sum_t (y_{it} - Y_i)^2$. 
Further, suppose that $G(\theta,\sigma^2)$ takes the normal-inverse-chi-squared form,
$\mbox{NIX} (\theta_0, \kappa_0, \nu_0, \sigma_0^2) = 
\mathcal{N}(\theta | \theta_0, \sigma^2/\kappa_0) \chi^{-2}(\sigma^2 | \nu_0, \sigma_0^2)$. 
Integrating out $\sigma^2$, the marginal distribution of $\theta$ becomes a Student $t$ distribution,
\[
\frac{\theta - \theta_0}{\sigma_0/\sqrt{\kappa_0}} \sim t_{\nu_0}
\]
where $t_{\nu_0}$ is the $t$-distribution with degree of freedom $\nu_0$. 
Therefore, the $1-\alpha$ quantile of $\theta$, denoted $\theta_\alpha$ is simply,
\[
\theta_\alpha =\theta_0 + \frac{\sigma_0}{\sqrt{\kappa_0}} F^{-1}_{t_{\nu_0}}(1-\alpha) 
\]
where $F^{-1}_{t_{\nu_0}}$ denotes the quantile function of $t_{\nu_0}$. 

Conjugacy of the distribution $G$ implies that the posterior distribution of $(\theta, \sigma^2|Y,S)$ 
follows $NIX(\theta_T, \kappa_T, \nu_T, \sigma_T^2) = 
\mathcal{N}(\theta | \theta_T, \sigma_T^2/\kappa_T) \chi^{-2}(\sigma^2| \nu_T, \sigma_T^2)$  with 
\begin{align*}
\nu_T  &= \nu_0 + T\\
\kappa_T & = \kappa_0 + T\\
\theta_T & = \frac{\kappa_0 \theta_0 + T Y}{\kappa_T}\\
\sigma_T^2 & = \frac{1}{\nu_T} \Big( \nu_0 \sigma_0^2 + (T-1)S + \frac{T\kappa_0}{\kappa_0 + T}(\theta_0 - Y)^2\Big).
\end{align*}
Integrating out $\sigma^2$, the marginal posterior of $\theta$ again follows a $t$-distribution, 
\[
\frac{\theta - \theta_T}{\sigma_T/\sqrt{\kappa_T}} \sim t_{\nu_T}.
\]
%where $t_{\nu_T}$ is the $t$-distribution with degree of freedom $\nu_T$. 
It is thus clear that the posterior mean of $\theta$, is simply a linear function of $Y$
and independent of $S$,
\[
\mathbb{E}[\theta | Y,S] = \theta_T = \frac{\kappa_0 \theta_0 + T Y}{\kappa_T},
\]
and the posterior tail probability is given by,
\[
v_{\alpha}(Y,S) = \mathbb{P}(\theta \geq \theta_\alpha|Y,S) = \mathbb{P}\Big(\frac{\theta - \theta_T}{\sigma_T/\sqrt{\kappa_T}} \geq \frac{\theta_\alpha  - \theta_T}{\sigma_T/\sqrt{\kappa_T}} | Y,S\Big) = 1- F_{t_{\nu_T}}\Big( \frac{\theta_\alpha  - \theta_T}{\sigma_T/\sqrt{\kappa_T}}\Big). 
\]

\begin{figure}[bp]
\begin{center}
\resizebox{.85\textwidth}{!}{\includegraphics{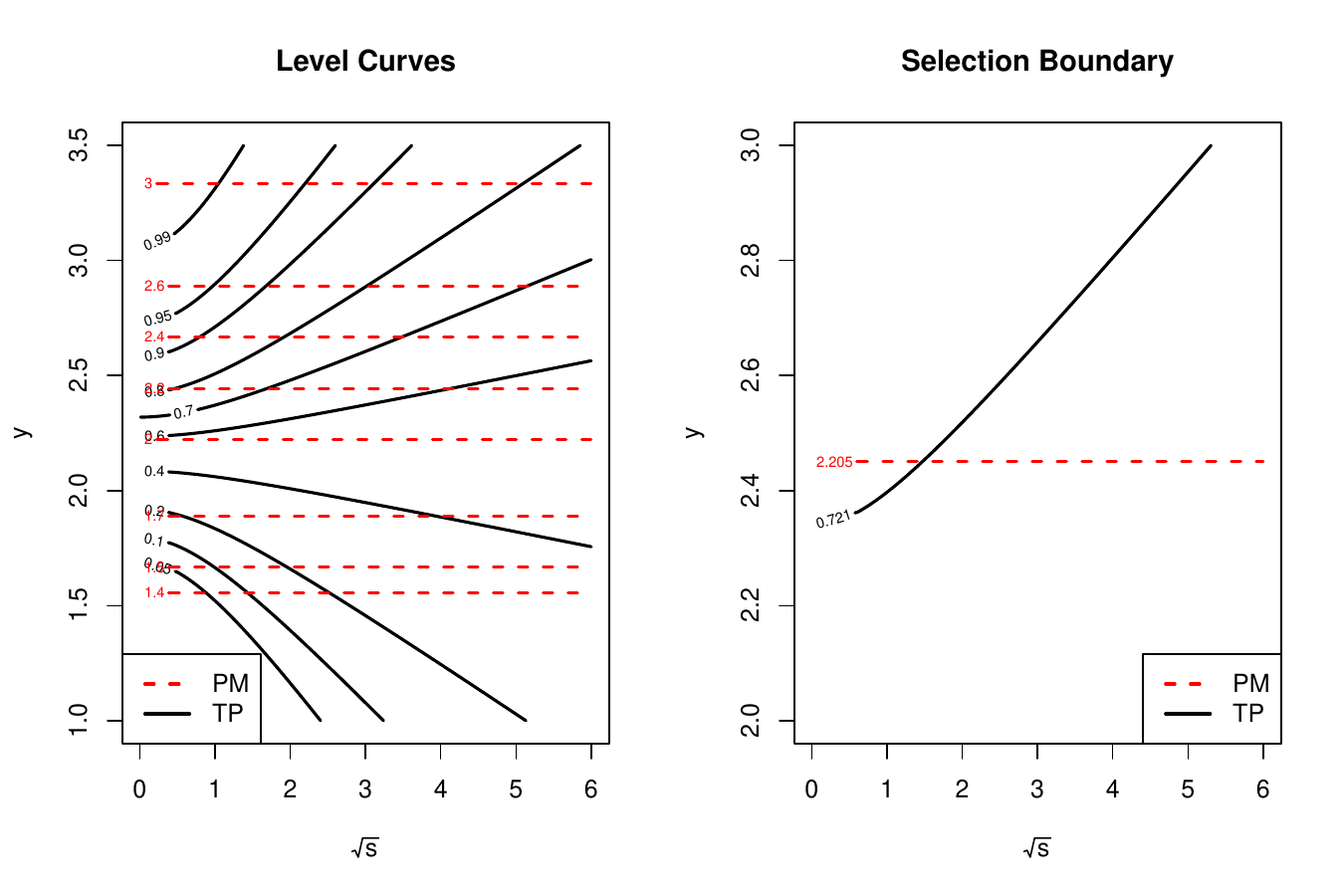}}
\end{center}
\caption{\small The left panel shows level curves of the posterior mean (marked as red dashed lines) 
and posterior tail probability (marked as black solid lines) for the normal model with 
$(\theta, \sigma^2) \sim NIX(0,1,6,1)$ and panel time dimension $T = 9$. 
The right plot shows the boundary of the selection region based on posterior mean ranking 
(marked as the red dashed line) and the posterior tail probability ranking (marked as the solid black 
line) with $\alpha = 5\%$ and $\gamma = 10\%$. }
\label{fig.normconj}
\end{figure}

\begin{figure}[h!]
	\begin{center}
		\resizebox{.85\textwidth}{!}{\includegraphics{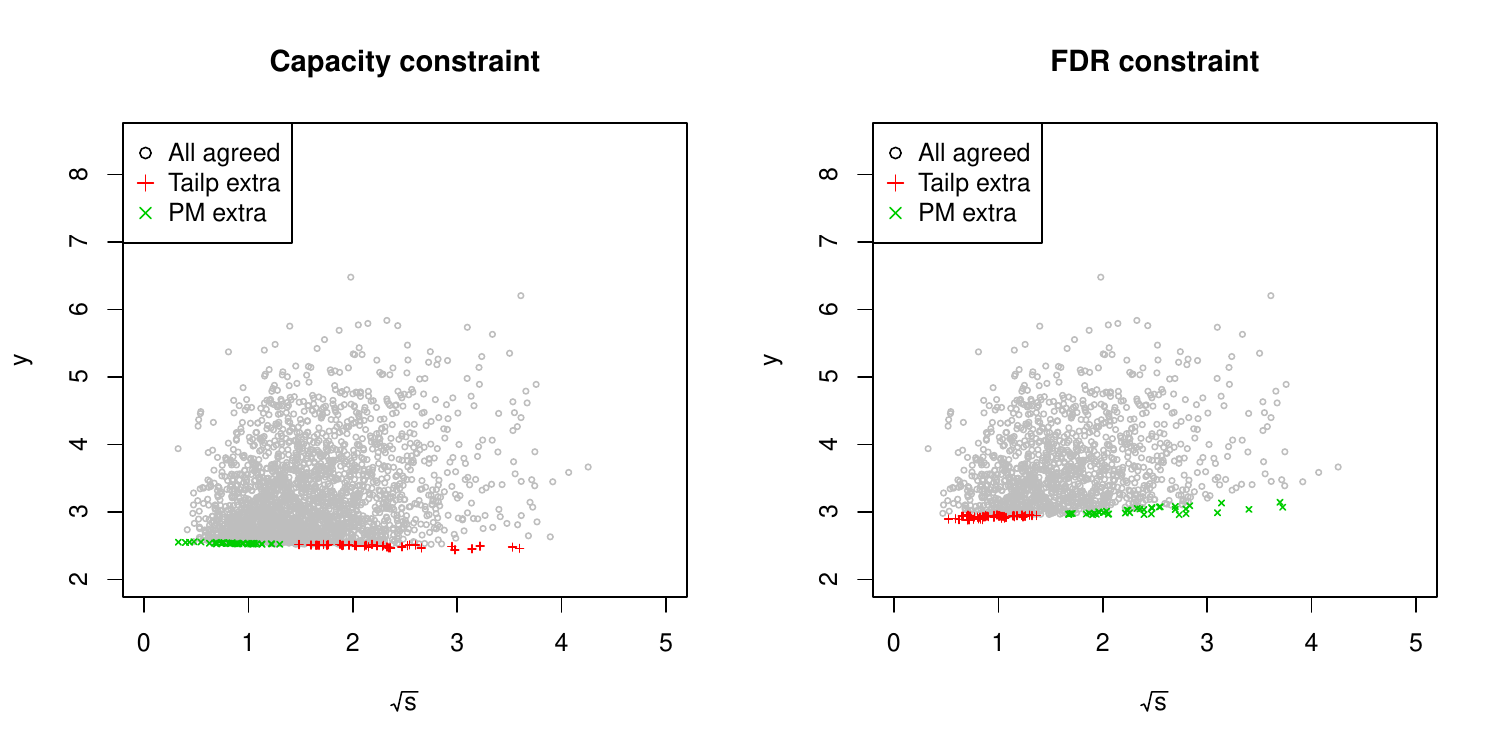}}
	\end{center}
	\caption{\small Selection set comparison for one sample realization from the normal model with 
		$(\theta, \sigma^2) \sim NIX(0,1,6,1)$ and panel time dimension $T = 9$: The left panel shows in grey  circles the agreed selected elements by both the posterior mean and posterior tail probability criteria under the capacity constraint, extra elements selected by the posterior mean are marked in green and extra elements selected by the posterior tail probability rule are marked in red. The right panel shows the comparison of the selected sets under both the capacity and FDR constraint with $\alpha = 5\%$ and $\gamma = 10\%$.}
	\label{fig.normconjc}
\end{figure}

To illustrate this case, suppose $\theta_0=0$, $\kappa_0 = 1$ , $\sigma_0^2 = 1$ and $\nu_0 = 6$ and $T=9$, 
it can be verified that $v_\alpha(Y,S)$ is in fact a monotone function of $Y$ for each fixed $S$ and any 
$\alpha > 0$, hence in this example we can invert the function $v_\alpha(y,s)$ to obtain the level curves. 
The left panel of Figure \ref{fig.normconj} shows the level curves for $v_\alpha(Y,S)$ and 
$\mathbb{E}(\theta|Y,S)$ for $\alpha = 5\%$. It is clear that the posterior mean is a constant 
function of $S$, while the posterior tail probability exhibits more exotic behaviour with 
respect to $S$, especially for more extreme values of $Y$. If we fix $S = s_0$, 
then $v_\alpha(Y,s_0)$ is an increasing function of $Y$. On the other hand, fixing $Y = y_0$, 
for small $y_0$ implies that $v_\alpha(y_0,S)$ is a increasing function of $S$, while for $y_0$ large, 
$v_\alpha(y_0,S)$ becomes a decreasing function of $S$. 

A capacity constraint of size $\alpha$ implies the thresholding rule, 
\[
\mathbb{P}(v_\alpha(Y,S)\geq  \lambda_2^*) = \alpha,
\]
while FDR control at level $\gamma$ leads to a cutoff value $\lambda_1^*$ defined as 
\[
\gamma = \mathbb{E}[(1-v_\alpha(Y,S) )\11\{v_\alpha(Y,S)\geq \lambda_1^*\}]/\mathbb{P}(v_\alpha(Y,S) \geq  \lambda_1^*).
\]
The larger of the two thresholds, denoted $\lambda^* = \max  \{\lambda_1^*, \lambda_2^*\}$ defines the selection region 
based on posterior tail probability ranking $\Omega_{\alpha,\gamma}= \{(Y,S): v_\alpha(Y,S) \geq \lambda^*\}$. 
For $\alpha = 5\%$ and $\gamma = 10\%$, the selection region
%\footnote{These are constructed numerically since it is not tractable to invert the function $v_\alpha(Y,S)$. We first sample $(\theta_i, \sigma_i^2)$ according to the NIX distribution $G$, with $i = 1, \dots, n$ with $n = 30,000$. Then for each individual values of the parameters, we sample $(Y,S)$ according to the normal-gamma model and then calculate either posterior mean, denoted $pm_i$ and posterior tail probability $v_i$ for a given $\alpha$. The threshold values are then calculated to satisfy the two constraints. For instance the capacity constraint leads to a thresholding value that is the $(1-\alpha)$ empirical quantile of $v_i$ or $pm_i$. While FDR control implies thresholding value that satisfies either $\frac{1}{n}\sum_i (1-v_i - \gamma) \11\{v_i \geq \lambda^*\} = 0$ or $\frac{1}{n}\sum_i \11\{pm_i \geq c^* , \theta_i \leq \theta_\alpha\}/\frac{1}{n}\sum_i \11\{pm_i \geq c^*\} = \gamma$.} 
based on the tail probability rule is $ \{(Y,S): v_\alpha(Y,S)\geq 0.72\}$. 
The posterior mean ranking is defined as $\{(Y,S): \mathbb{E}[\theta|Y,S]\geq 2.2\}$. 
These selection boundaries are depicted as the red dashed line and black solid line respectively in 
the right panel of Figure \ref{fig.normconj}.  In this case, the FDR constraint binds. If only the capacity 
constraint were in place, we would have a cutoff for tail probability at $0.40$ and the cutoff for the
posterior mean at $1.84$.  Figure \ref{fig.normconjc} further shows the comparison of the selected set 
based on a sample realization from the model. 

In Appendix \ref{sec:DiscreteEx} we consider a more complex bivariate discrete example that illustrates
somewhat more exotic behavior of the decision boundaries and compares performance of several different
ranking and selection rules.

\subsection{Variants of the unknown variance model} \label{sec: variant}
We have assumed that the only scale heterogeneity is driven by $\sigma_i$ in the above model, but often there may be more heteroskedasticity that should be allowed in $\epsilon_{it}$. Here we consider a variant where 
\[
Y_{it} = \theta_i + \sigma_i \epsilon_{it}, \quad \epsilon_{it} \sim \mathcal{N}(0, 1/w_{it}), \quad (\theta_i, \sigma_i^2) \sim G.
\]
We will assume that $w_{it} \sim H$ are known quantities and are independent from $(\theta_i, \sigma_i^2)$. 
Denoting $w_i = \sum_{t=1}^{T_i} w_{it}$, the sufficient statistics now take the form 
$Y_i = \sum_{t=1}^{T_i} w_{it} Y_{it}/w_i$ and $S_i = (T_i-1)^{-1} \sum_{t=1}^{T_i} (Y_{it} - Y_i)^2$. 
In \citeasnoun{Bayesball} we have illustrated this formulation for predicting baseball batting averages;
in that setting ``at bats'' for player $i$ in year $t$ are given by the $w_{it}$, but there is still 
some player specific heterogeneity in the $\sigma_i$'s representing consistency of batting performance.
It is easy to show that $Y_i |\theta_i, \sigma_i^2 \sim \mathcal{N}(\theta_i, \sigma_i^2/w_i)$ and $S_i|\sigma_i^2$ follows Gamma distribution with shape parameter $r_i = (T_i-1)/2$ and scale parameter $\sigma_i^2/r_i$. The decision rules now become a function of the tuple $(Y_i, S_i, w_i)$, for instance the tail probability can be specified as 
\[
v_\alpha(y_i, s_i, w_i) = \mathbb{P}(\theta_i \geq \theta_\alpha| y_i, s_i, w_i) =  
\frac{\int_{\theta_\alpha}^{+\infty} f(y_i | \theta, \sigma^2/w_i) 
	\Gamma( s_i | r_i, \sigma^2/r_i) dG(\theta, \sigma^2) }
{\int_{-\infty}^{+\infty} f(y_i | \theta, \sigma^2/w_i) 
	\Gamma( s_i | r_i, \sigma^2/r_i) dG(\theta, \sigma^2) }.
\]
and the posterior mean takes the form 
\[
	\mathbb{E}[\theta_i | y_i,s_i,w_i] = 
\int \theta f(y_i, s_i|\theta,\sigma^2, w_i) dG(\theta,\sigma^2).
\]
The threshold values under either the capacity or the FDR constraint can be worked out in a similar fashion. 
For any ranking statistics $\delta(Y_i,S_i,w_i)$ together with a decision rule $\11\{\delta(Y_i,S_i,w_i) \geq \lambda\}$,  the capacity constraint requires choosing a thresholding value $\lambda_2^*(\alpha)$ such that 
\[
\alpha = \int \int \11\{\delta(y, s, w) \geq \lambda_2^*(\alpha)\} 
f(y, s| \theta, \sigma^2, w) dG(\theta, \sigma^2) dH(w), 
\]
while the thresholding value to control in addition the FDR rate under size $\gamma$ requires solving for $\lambda_1^*(\alpha,\gamma)$ such that 
\[
\gamma = \frac{\mathbb{P}(\delta(y,s,w) \geq \lambda_1^*(\alpha, \gamma); 
	\theta < \theta_\alpha)}{\mathbb{P}(\delta(y,s,w) \geq \lambda_1^*(\alpha, \gamma))}
\]
which can be further represented as 
\[
\gamma = \frac{\int \int \11\{\delta(y, s, w) \geq \lambda_1^*(\alpha, \gamma)\} 
	(1-\alpha) f_0(y, s|\theta, \sigma^2, w) dG(\theta, \sigma^2) dH(w)}
{\int \int \11\{\delta(t, s, w) \geq \lambda_1^*(\alpha, \gamma)\} 
	f(y, s|\theta, \sigma^2, w) dG(\theta, \sigma^2) dH(w)}
\]
where $f_0(y, s|\theta, \sigma^2, w)$ is the density of $(y, s)$ under the null hypothesis 
$\theta < \theta_\alpha$. 

%%\begin{figure}[bt]
%	\begin{center}
%		\resizebox{.95\textwidth}{!}{\includegraphics{figs/nixvc.pdf}}
%	\end{center}
%	\caption{\small Selection set comparison for one sample realization: The upper panels plots the pair-wise comparison of $(y, s,w)$ of the selected set based on the different decision criteria. The black circles are the agreed selected elements by both the posterior mean and posterior tail probability criteria under the capacity constraint, extra elements selected by the posterior mean are marked in green and extra elements selected by the posterior tail probability rule are marked in red. The lower panel shows the comparison of the selected sets under both the capacity and FDR constraint with $\alpha = 5\%$ and $\gamma = 10\%$.}
%	\label{fig.normconjvc}
%\end{figure}
%
We can again consider selection region as those plotted in Figure \ref{fig.normconj} and 
Figure \ref{fig.ndpanel} to appreciate how different decision criteria determines the selection. 
As soon as the ranking statistics depends on $w$, the selection region of the thresholding rule 
$\11\{\delta(y,s,w) \geq \lambda^*\}$ will also depend on the magnitude of $w$. 
%To visualize the influence of $w$, Figure \ref{fig.normconjvc} presents the comparison of the selected set 
%based on the posterior tail probability rule and the posterior mean rule for $\alpha = 5\%$ and 
%$\gamma = 10\%$ where $G$ is the same NIX(0,1,6,1) and $T = 9$ as our first example and $w/20$ 
%are generated from a Gamma distribution with shape and scale parameters equal to $(3,2)$. 

%\JG{TO DO: need a better understand/interpretation of Figure \ref{fig.normconjvc}}
%\RK{Yes,  this figure is quite mysterious, I'm afraid. I've commented it out for
%the time being.}

\section{Asymptotic Adaptivity}  \label{sec:adaptive}
The previous sections propose Bayes rules for minimizing the expected number of missed discoveries 
subject to both capacity and FDR constraints under several modeling environments.  
In each of these environments, the Bayes rule takes the form $\delta^* = 1\{v_\alpha \geq \lambda^*\}$ 
where $v_\alpha$ is defined as the posterior probability of $\theta \geq  \theta_\alpha$ 
conditional on the data. The thresholding value $\lambda^*$ is defined to satisfy both the capacity and FDR constraints. The Bayes rule involves several unknown quantities, in particular the $v_\alpha$'s and the 
optimal thresholding value, $\lambda^*$, that require knowledge on the distribution of $\theta_i$ 
or the joint distribution of $(\theta_i, \sigma_i^2)$ when the variances are latent variables. 
For estimating this distribution of the latent variables, we propose a plug-in procedure that is 
very much in the spirit of empirical Bayes methods pioneered by \citeasnoun{Robbins.56}.  
In this Section we also establish that the resulting feasible rules achieve asymptotic validity 
and  asymptotically attain the same performance as the infeasible Bayes rule. 

We begin by discussing properties of the Oracle procedure assuming that $v_\alpha$ is known and we only need to estimate the optimal thresholding value. We establish asymptotic validity of this Oracle procedure and  then propose a plug-in method for both $v_\alpha$ and the thresholding value thereby establishing the asymptotic validity of the empirical rule. 
Before presenting the formal results, we introduce regularity conditions that will be required. 
We distinguish two cases depending on whether  the $\sigma_i^2$'s are observed. 
\begin{assumption} \label{ass1}
\begin{enumerate}
\item {(Variances observed)} $\{Y_i,\sigma_i^2, \theta_i\}$ are independent and identically 
    distributed with a joint distribution with $\sigma_i^2$ and $\theta_i$ independent. 
    The random variables $\theta_i$ and $\sigma_i^2$ have positive densities with respect to Lebesgue 
    measure on a compact set $\Theta \subset \mathbb{R}$ and $[\underline{\sigma}^2, \overline \sigma^2]$ 
    respectively for some $\underline{\sigma}^2 > 0$ and $\overline \sigma^2 < +\infty$. 
		\item {(Variance unobserved)} Let $S_i$ be an individual sample variance based on $T$ repeated measurements and $Y_i$ be the sample means with $T \geq 4$.  Suppose further that $\{Y_i, S_i, \theta_i, \sigma_i\}$ are independent and identically distributed and that the  random variables $\{\theta_i, \sigma_i^2\}$ have a joint distribution $G$ with a joint density positive everywhere on its support.
\end{enumerate}
\end{assumption}

\subsection{Optimal thresholding}

Whether $\sigma_i^2$ is observed or estimated, the optimal thresholding value can be defined in a unified manner by $\lambda^* = \max \{\lambda_1^*, \lambda_2^*\}$ with 
\begin{align*}
	\lambda_1^* &= \inf \{ t \in (0,1), H_v(t) \geq  1- \alpha\}\\
	\lambda_2^* & = \inf \{t \in (0,1) ,Q(t) \leq \gamma\} 
\end{align*}
where $H_v$ denotes the cumulative distribution of either $v_\alpha(y_i, \sigma_i)$ or $v_\alpha(y_i, s_i)$, induced by the marginal distribution of the data, either as the pair $\{y_i, \sigma_i\}$ when variances are observed or the pair $\{y_i, s_i\}$ otherwise. Hence $\lambda_1^*$ is the $1-\alpha$ quantile of $H_v$. 

The function $Q(t)$ is defined as $Q(t) = \mathbb{E}[(1-v_\alpha) 1\{v_\alpha \geq t\}] / \mathbb{E}[1\{v_{\alpha} \geq t\}]$. Its formulation recalls Proposition \ref{prop: nonmon} and the existence of $\lambda_2^*$ is guaranteed as long as $\alpha < 1- \gamma$. The thresholding value is also equivalent to those defined in Proposition \ref{prop: homo} and Proposition \ref{prop: rule_ysigma}. In particular, the thresholding values $t_1^*$ and $t_2^*$ in Proposition \ref{prop: homo} are cast in terms of $Y$ directly and it is easy to see $\lambda_j^* = v_\alpha(t_j^*)$ for $j = 1,2$ when variances are homogeneous. In a similar spirit, the explicit formulae for $\lambda_1^*$ and $\lambda_2^*$ in Proposition \ref{prop: rule_ysigma} are a result of invoking the monotonicity of $v_\alpha(y, \sigma)$ with respect to $y$ for each fixed value of $\sigma$. The function $Q(t)$ is the mFDR of the procedure $\delta= 1\{v_\alpha \geq t\}$ for any $\alpha\in (0,1)$, and is monotonically decreasing in $t$.  Monotonicity of $Q(t)$ is crucial to justify this thresholding procedure insuring that either the capacity constraint or the mFDR constraint must be binding.  
\citeasnoun{cao2013optimal} have observed  that a sufficient condition for monotonicity for a
broad class of multiple testing procedures is that the ratio of densities 
under the null and alternative of the test statistics employed for ranking be monotone
and they discuss the consequences of the violation of this condition.
For the posterior tail probability criterion this monotone likelihood ratio condition, 
as we will see, can be verified directly.

Recall that mFDR is defined as $\sum_{i=1}^n \mathbb{P}[\delta_i = 1, \theta_i < \theta_\alpha] / \sum_{i=1}^n \mathbb{P}(\delta_i = 1)$. It suffices to show that $\mathbb{P}[\delta_i = 1, \theta_i < \theta_\alpha] = \mathbb{E}[(1-v_{\alpha,i}) \delta_i]$. Since $v_{\alpha,i} = P[\theta_i \geq \theta_\alpha |D_i]   = \alpha f_1(D_i) / f(D_i)$ where $D_i$ is the individual data being either $\{y_i, \sigma_i\}$ or $\{y_i, s_i\}$ depending on the model and $f_1$ is the marginal density of the data when $\theta_i \geq \theta_\alpha$ and $f$ is the marginal density of $D_i$. Then it is easy to see that $\mathbb{P}[\delta_i = 1, \theta_i < \theta_\alpha] = (1-\alpha) \int 1\{v_{\alpha,i} \geq t \} f_0(D_i) dD_i = \int 1\{v_{\alpha, i} \geq t\} (1-v_{\alpha,i}) f(D_i) dD_i = \mathbb{E}[(1-v_{\alpha,i}) 1\{v_{\alpha,i} \geq t\}]$. Then $Q(t) = \int_{t}^1 (1- v) h_v dv/\int_{t}^1 h_v dv$ where $h_v$ is the density function of $v_\alpha$. Monotonicity of $Q(t)$ can be easily verified by showing that the derivative with respect to $t$ of the right hand side quantity is nonpositive.

\subsection{Oracle Procedures} 
The only unknown quantity in the Oracle procedure is the thresholding value and we now discuss how to estimate it to achieve asymptotic validity.  $H_v$ and $Q$ can be estimated by the following quantities:
\begin{align*}
	 H_n(t) &= \frac{1}{n} \sum_{i=1}^n 1\{ v_{\alpha,i} \leq t\}\\
	 Q_n(t) & = \frac{\sum_{i=1}^n (1-v_{\alpha,i}) 1\{v_{\alpha,i} \geq t\}}{\sum_{i=1}^n 1\{v_{\alpha,i} \geq t\}}
	\end{align*}
and the associated thresholding values are then defined as  
$\lambda_n = \max\{ \lambda_{1n},  \lambda_{2n}\}$, with
\begin{align*}
	 \lambda_{1n} &= \inf \{t \in [0,1],  H_n(t) \geq 1- \alpha\}\\
	 \lambda_{2n} & = \inf\{t \in [0,1],  Q_n(t) \leq \gamma\}
	\end{align*} 

\begin{thm}(Asymptotic validity of the Oracle procedure) \label{thm:oracle}
	Under Assumption \ref{ass1}, the procedure $\delta_i = 1\{v_{\alpha,i} \geq  \lambda_n\}$ asymptotically controls the false discovery rate below $\gamma$ and the expected proportion of rejections below $\alpha$ for any $(\alpha, \gamma) \in [0,1]^2$ and $\gamma < 1- \alpha$ when $n \to \infty$, more specifically
	\begin{align*}
	&\underset{n \to \infty} {\limsup} \mathbb{E}\Big[\frac{\sum_{i=1}^n 1\{\theta_i < \theta_\alpha, v_{\alpha,i} \geq   \lambda_n\}}{\sum_{i=1}^n 1\{v_{\alpha,i} \geq  \lambda_n\} \bigvee 1}\Big]  \leq \gamma\\
	 & \underset{n \to \infty}{\limsup} \mathbb{E} \Big [\frac{1}{n} \sum_{i=1}^n 1\{v_{\alpha,i} \geq  \lambda_n\} \Big ] \leq \alpha
	 \end{align*}
\end{thm}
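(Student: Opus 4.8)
The plan is to treat the two required bounds separately. The expected-proportion-of-rejections bound will turn out to be essentially a deterministic consequence of how $\lambda_{1n}$ is defined, while the false-discovery bound rests on the observation — already recorded just before the theorem — that $1-v_{\alpha,i}=\mathbb{P}(\theta_i<\theta_\alpha\mid D_i)$ is, conditional on the individual data $D_i$ (either $(Y_i,\sigma_i)$ or $(Y_i,S_i)$), the exact probability that rejecting unit $i$ is a false discovery. Since the data-driven cutoff $\lambda_n$ is a function of $(v_{\alpha,1},\dots,v_{\alpha,n})$ alone, and hence of $\mathcal{D}:=\sigma(D_1,\dots,D_n)$, this lets me replace the false-discovery count by its conditional mean without the random threshold getting in the way.

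For the capacity bound, I would argue directly. Since $H_n$ is non-decreasing and right-continuous, $H_n(\lambda_{1n})\ge 1-\alpha$, i.e.\ at most $\alpha n$ of the $v_{\alpha,i}$ strictly exceed $\lambda_{1n}$; adding the (almost surely unique, as $v_{\alpha,i}$ has a density under Assumption \ref{ass1}) tie at $\lambda_{1n}$ gives $\#\{i:v_{\alpha,i}\ge\lambda_{1n}\}\le\alpha n+1$ a.s. As $\lambda_n\ge\lambda_{1n}$, we get $n^{-1}\sum_i 1\{v_{\alpha,i}\ge\lambda_n\}\le\alpha+n^{-1}$ almost surely, and taking expectations and then the $\limsup$ yields the second displayed inequality.

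For the false-discovery bound, I would first condition on $\mathcal{D}$ and use i.i.d.\ sampling, so that $\theta_i\pperp (D_j)_{j\ne i}\mid D_i$, to obtain
\[
\mathbb{E}\Big[\sum_i 1\{\theta_i<\theta_\alpha,\,v_{\alpha,i}\ge\lambda_n\}\,\Big|\,\mathcal{D}\Big]=\sum_i (1-v_{\alpha,i})\,1\{v_{\alpha,i}\ge\lambda_n\},
\]
while the denominator $\sum_i 1\{v_{\alpha,i}\ge\lambda_n\}\vee 1$ is $\mathcal{D}$-measurable. Hence the expected realized FDP equals $\mathbb{E}[Q_n(\lambda_n)]$ on the event that at least one unit is rejected, and is $0$ otherwise, where $Q_n$ is precisely the estimator in the statement. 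It then remains to bound $Q_n(\lambda_n)$. Because $Q_n(t)$ is the arithmetic mean of $\{1-v_{\alpha,i}:v_{\alpha,i}\ge t\}$, raising $t$ past an order statistic of the $v_{\alpha,i}$ deletes the element of largest $1-v$ from this set, so $Q_n$ is non-increasing in $t$; together with $\lambda_n\ge\lambda_{2n}$ this gives $Q_n(\lambda_n)\le Q_n(\lambda_{2n})$, and $Q_n(\lambda_{2n})\le\gamma$ up to the single jump of $Q_n$ at $\lambda_{2n}$, of size $1/\#\{i:v_{\alpha,i}\ge\lambda_{2n}\}$. To see this jump is $o_P(1)$ I would use Assumption \ref{ass1}: $v_{\alpha,i}$ has a positive density, so on a neighbourhood of $\lambda_2^*$ the denominator $\mathbb{P}(v_{\alpha,1}\ge t)$ of $Q$ is bounded away from $0$ and a uniform law of large numbers gives $\sup_t|Q_n(t)-Q(t)|\to 0$ there; since $\gamma<1-\alpha=Q(0)$ and $Q(t)\le 1-t\to 0$, the crossing point $\lambda_2^*$ lies strictly inside $(0,1)$ with $Q$ continuous and (where the density is positive) strictly decreasing through $\gamma$, so $\lambda_{2n}\to\lambda_2^*$ and $\#\{i:v_{\alpha,i}\ge\lambda_{2n}\}=\Theta(n)$. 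Putting these together, $\mathbb{E}[Q_n(\lambda_{2n})]\le\gamma+o(1)$, the rare event on which the jump is large contributing at most its (vanishing) probability since $Q_n\le 1$; this gives the first displayed inequality.

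The step I expect to be the main obstacle is the last one: controlling $Q_n$ near the data-driven threshold. Because $Q_n$ is a ratio whose denominator degenerates as $t\uparrow 1$, its uniform convergence to $Q$ must be localised to a neighbourhood of $\lambda_2^*$, and one has to verify that $\lambda_2^*$ genuinely sits in the interior of $(0,1)$ — which is exactly where the hypothesis $\gamma<1-\alpha$ and the positive-density part of Assumption \ref{ass1} enter — so that $\lambda_{2n}$ stays bounded away from $1$ and the finite-sample jumps of $Q_n$ vanish. Everything else — the capacity bound, and the conditioning step that strips the random threshold out of the FDP — is routine.
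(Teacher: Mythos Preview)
Your argument is correct and follows a genuinely different path from the paper's. The paper introduces $V_n(t)=n^{-1}\sum_i 1\{v_{\alpha,i}\ge t,\,\theta_i<\theta_\alpha\}$ and, via two preparatory lemmas, establishes that both $Q_n$ and $V_n/H_{n,0}$ converge to $Q$ uniformly on $[0,1-\gamma]$; combining these yields $V_n(\lambda_n)/H_{n,0}(\lambda_n)\le Q_n(\lambda_n)+o_p(1)\le\gamma+o_p(1)$, and Fatou's lemma finishes. Your conditioning step bypasses $V_n$ altogether: since $\lambda_n$ is $\sigma(D_1,\dots,D_n)$-measurable and $1-v_{\alpha,i}=\mathbb{P}(\theta_i<\theta_\alpha\mid D_i)=\mathbb{P}(\theta_i<\theta_\alpha\mid D_1,\dots,D_n)$ by independence across $i$, the expected FDP is \emph{exactly} $\mathbb{E}[Q_n(\lambda_n)\,1\{\text{rejections}\}]$, so only $Q_n$ must be located near $\gamma$. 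Your handling of the single jump of $Q_n$ at $\lambda_{2n}$ is also a nice touch, and your capacity bound is a one-line sample-quantile count where the paper defers to ``a similar argument.'' The trade-off is that your shortcut is specific to the oracle setting: in Theorems~\ref{thm:adaptive} and~\ref{thm:adaptivepower} the plug-in $\hat v_{\alpha,i}$ is no longer the true posterior probability, the conditioning identity fails, and one is forced back to comparing $\hat V_n/\hat H_{n,0}$ with $\hat Q_n$ via uniform convergence --- exactly the machinery the paper has already built here and reuses there essentially unchanged.
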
 	

\subsection{Adaptive Procedures} 
In practise the posterior tail probability also involves the unknown quantity $\theta_\alpha = G^{-1} (1 - \alpha)$ that needs to be estimated. We propose a plug-in estimator in the spirit of the empirical Bayes method: estimating $G$ by its nonparametric maximum likelihood estimator $\hat G_n$ and estimating $\theta_\alpha$ as its $1-\alpha$ quantile. 

Consistency of the nonparametric maximum likelihood estimator, $\hat G_n$, was first proven by \citeasnoun{KW} using Wald type arguments. A Hellinger risk bound for the associated marginal density estimate and adaptivity of $\hat G_n$ and a self-regularization property have been recently established in  \citeasnoun{saha2020nonparametric}  and \citeasnoun{PW20}.  In particular, the following established result, stated here as an assumption, is crucial for establishing the asymptotic validity of the adaptive procedure. 
%\RK{Since R1 found this underwhelming we should find some way to pacify him.}
\begin{assumption} \label{ass2}
The nonparametric maximum likelihood estimator $\hat G_n$ is strongly consistent for $G$. 
That is, for all continuity points $k$ of $G$, $\hat G_n(k) \to G(k)$ almost surely 
as $n \to \infty$. Furthermore, the estimated marginal (mixture) density converges almost 
surely in Hellinger distance.
\end{assumption}  

When variances are homogeneous or when variances are unknown but we have longitudinal  data so that we have a  mixture model for the pair $\{Y_i, S_i\}$, the Hellinger convergence is established in \citeasnoun{van1993hellinger}. When variances are heterogeneous but known, the Hellinger bound for marginal density has been established recently in \citeasnoun{jiang2020general}.  

The plug-in estimators for the posterior tail probability, $v_\alpha(y_i, \sigma_i)$ when variances are known or $v_\alpha(y_i, s_i)$ when variances are unknown is then defined respectively as 
\begin{align*}
	\hat v_\alpha(y_i, \sigma_i) & = \int_{\hat \theta_\alpha}^{+\infty} \varphi(y_i | \theta,\sigma_i^2) d\hat G_n(\theta)/\int_{-\infty}^{+\infty} \varphi(y_i | \theta, \sigma_i^2) d\hat G_n(\theta)\\
	\hat v_\alpha(y_i, s_i) & =  \int_{\hat \theta_\alpha}^{+\infty} f(y_i, s_i | \theta, \sigma) d\hat G_n(\theta,\sigma^2)/\int_{-\infty}^{+\infty} f(y_i, s_i | \theta, \sigma)  d\hat G_n(\theta, \sigma^2)\\
\end{align*}
where $f$ is the density function for $(y_i, s_i)$ which is a product of Gaussian and gamma densities. 
Abbreviating the estimated posterior tail probability by $\hat v_{\alpha,i}$, 
we mimic the Oracle procedure and estimate the thresholding value by 
$\hat \lambda_n  = \max \{\hat \lambda_{1n}, \hat \lambda_{2n}\}$, where,
\begin{align*}
	\hat \lambda_{1n} & = \inf \{t \in [0,1]: \frac{1}{n} \sum_{i=1}^n 1\{\hat v_{\alpha,i} \leq t\} \geq 1-\alpha \}\\
	\hat \lambda_{2n} & = \inf \{t \in [0,1]: \frac{\sum_{i=1}^n (1-\hat v_{\alpha,i}) 1\{\hat v_{\alpha,i} \geq t\}}{\sum_{i=1}^n 1\{\hat v_{\alpha,i} \geq t\}} \geq \gamma \}
\end{align*}

\begin{thm}(Asymptotic validity of adaptive procedure)\label{thm:adaptive}
	Under Assumptions \ref{ass1} and \ref{ass2}, the adaptive procedure $\delta_i = 1\{\hat v_{\alpha,i}\geq \hat \lambda_n\}$ asymptotically controls the false discovery rate below $\gamma$ and the expected proportion of rejections below $\alpha$ for any $(\alpha, \gamma) \in [0,1]^2$ with $\alpha < 1- \gamma$ when $n \to \infty$, more specifically
	\begin{align*}
	&\underset{n \to \infty} {\limsup} \mathbb{E}\Big[\frac{\sum_{i=1}^n 1\{\theta_i < \theta_\alpha, \hat v_{\alpha,i} \geq \hat  \lambda_n\}}{\sum_{i=1}^n 1\{\hat v_{\alpha,i} \geq \hat  \lambda_n\} \bigvee 1}\Big]  \leq \gamma\\
	& \underset{n \to \infty}{\limsup} \mathbb{E}\Big [\sum_{i=1}^n 1\Big \{\frac{1}{n} \hat v_{\alpha,i} \geq \hat \lambda_n \Big \} \Big ] \leq \alpha
	\end{align*} 
\end{thm}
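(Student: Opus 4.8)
The plan is to prove the adaptive rule $\delta_i=1\{\hat v_{\alpha,i}\ge\hat\lambda_n\}$ is asymptotically equivalent, coordinatewise up to a vanishing fraction of indices, to the Oracle rule $1\{v_{\alpha,i}\ge\lambda_n\}$ of Theorem~\ref{thm:oracle}, so that the false discovery and capacity bounds proved there transfer. Relative to the Oracle procedure there are two extra approximation errors to control: the replacement of $\theta_\alpha$ and of the posterior tail probabilities $v_{\alpha,i}$ by the plug-ins $\hat\theta_\alpha$ and $\hat v_{\alpha,i}$ built from the NPMLE $\hat G_n$, and the perturbation of the data-driven threshold $\hat\lambda_n$ relative to the Oracle threshold $\lambda_n$.

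Consistency of the plug-ins is the first step. By Assumption~\ref{ass2}, $\hat G_n\to G$ weakly almost surely and the mixture density converges in Hellinger distance; combined with the positive-density hypothesis of Assumption~\ref{ass1}, $G$ is continuous and strictly increasing near $\theta_\alpha=G^{-1}(1-\alpha)$, whence $\hat\theta_\alpha\to\theta_\alpha$ a.s. The kernels entering $v_\alpha$ --- $\varphi((\cdot-\theta)/\sigma)/\sigma$ when variances are known, and the Gaussian$\times$Gamma density $f(y,s\mid\theta,\sigma^2)$ when they are estimated --- are bounded and equicontinuous in the latent variables over the compact support sets of Assumption~\ref{ass1}, so weak convergence of $\hat G_n$ yields convergence of the numerator and denominator integrals defining $\hat v_{\alpha,i}$, uniformly over $(y,\sigma)$ (resp.\ $(y,s)$) in any fixed compact set. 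On such compacts the estimated marginal mixture density is bounded below by a positive constant --- using compactness of $\Theta$, the bounds $0<\underline\sigma^2\le\sigma^2\le\bar\sigma^2$, and weak convergence of $\hat G_n$ --- while the Hellinger and tail control lets one take the compact region large enough that the omitted mass is uniformly negligible. Since $\hat v_{\alpha,i},v_{\alpha,i}\in[0,1]$, these facts give $n^{-1}\sum_{i=1}^n|\hat v_{\alpha,i}-v_{\alpha,i}|\to 0$ a.s., hence $n^{-1}\sum_{i=1}^n 1\{|\hat v_{\alpha,i}-v_{\alpha,i}|>\epsilon\}\to 0$ a.s.\ for every $\epsilon>0$.

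Next, convergence of the thresholds. Glivenko--Cantelli gives $\sup_t|n^{-1}\sum_i 1\{v_{\alpha,i}\le t\}-H_v(t)|\to 0$ a.s., and the averaged-indicator bound above transfers this to $\hat H_n$; the same reasoning controls $\hat Q_n-Q$ away from the value where the denominator of $Q$ degenerates. Using continuity of $H_v$ at the capacity cutoff $\lambda_1^*$ and strict monotonicity of $Q$ near the FDR cutoff $\lambda_2^*$ --- the latter from the monotone-likelihood-ratio property of $v_\alpha$ established earlier in Section~\ref{sec:adaptive} --- inverting these monotone limits yields $\hat\lambda_{1n}\to\lambda_1^*$, $\hat\lambda_{2n}\to\lambda_2^*$, and likewise $\lambda_{1n}\to\lambda_1^*$, $\lambda_{2n}\to\lambda_2^*$ for the Oracle thresholds, so $\hat\lambda_n\to\lambda^*=\max\{\lambda_1^*,\lambda_2^*\}$ and $|\hat\lambda_n-\lambda_n|\to 0$ a.s. Now fix $\epsilon>0$: for $n$ large the number of indices with $1\{\hat v_{\alpha,i}\ge\hat\lambda_n\}\ne 1\{v_{\alpha,i}\ge\lambda_n\}$ is at most $\sum_i 1\{|\hat v_{\alpha,i}-v_{\alpha,i}|>\epsilon\}+\sum_i 1\{|v_{\alpha,i}-\lambda^*|\le 2\epsilon\}=o(n)+n\,(H_v(\lambda^*+2\epsilon)-H_v(\lambda^*-2\epsilon))$, an arbitrarily small fraction of $n$ by continuity of $H_v$ at $\lambda^*$. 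Consequently the FDR ratio and the rejection proportion of the adaptive rule differ in $L^1$ from those of the Oracle rule by $o(1)$ (using the $\bigvee 1$ convention, so each ratio is bounded by $1$ and bounded convergence applies), hence have the same $\limsup$ as $n\to\infty$; Theorem~\ref{thm:oracle} then delivers $\limsup_n\mathrm{FDR}\le\gamma$ and $\limsup_n$ (expected proportion of rejections) $\le\alpha$ for every $(\alpha,\gamma)$ with $\alpha<1-\gamma$.

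I expect the main obstacle to be the consistency step: upgrading the weak convergence of $\hat G_n$ (and Hellinger convergence of the mixture) to control of the \emph{ratios} $\hat v_{\alpha,i}$ strong enough that the averaged indicator discrepancies vanish. The two delicate points are (i) a uniform positive lower bound on the estimated marginal density over the region where the data concentrate, which rests on the compact support of $G$ and the two-sided bounds on $\sigma^2$ in Assumption~\ref{ass1}, and (ii) the moving integration limit $\hat\theta_\alpha$ in the numerator of $\hat v_\alpha$, absorbed using $\hat\theta_\alpha\to\theta_\alpha$ together with boundedness of the mixing kernels. In the unknown-variance model the extra Gamma factor and the possibly dependent joint law of $(\theta_i,\sigma_i^2)$ make this lower-bound argument the most technical point, and it is where the requirement $T\ge 4$ in Assumption~\ref{ass1} enters, securing enough integrability of the $S$-marginal and of the posterior used to define $v_\alpha$.
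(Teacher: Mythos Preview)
Your proposal is correct and lands on the same key technical input as the paper --- uniform closeness of $\hat v_{\alpha,i}$ to $v_{\alpha,i}$ driven by weak/Hellinger convergence of $\hat G_n$, compactness of the support, and the kernel regularity (your observation that $T\ge 4$ makes the Gamma factor Lipschitz is exactly how the paper uses it) --- but the architecture is somewhat different. The paper first proves the stronger Lemma~A.4, $\sup_i|\hat v_{\alpha,i}-v_{\alpha,i}|\to 0$ a.s., then shows directly that the empirical functionals $\hat H_{n,0},\hat H_{n,1}$ converge uniformly to $H_0,H_1$; from there it simply reruns the Oracle argument of Theorem~\ref{thm:oracle} with $\hat Q_n$ in place of $Q_n$, using $\hat Q_n(\hat\lambda_n)\le\gamma$ and Fatou. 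It never needs the threshold convergence $\hat\lambda_n\to\lambda^*$ for Theorem~\ref{thm:adaptive} itself (that is established separately, for Theorem~\ref{thm:adaptivepower}). Your route is more modular: you prove $\hat\lambda_n\to\lambda^*$ first, then count indicator discrepancies to show the adaptive and Oracle selections agree on all but $o(n)$ indices, and transfer the $\limsup$ bounds from Theorem~\ref{thm:oracle} via bounded convergence. This buys a cleaner reduction to the Oracle case at the price of the extra threshold-convergence step; the paper's route avoids that step but effectively redoes the Oracle proof. Both are valid, and your weaker averaged control $n^{-1}\sum_i|\hat v_{\alpha,i}-v_{\alpha,i}|\to 0$ is indeed sufficient for your argument, though the paper's $\sup_i$ version makes the indicator bound slightly cleaner.
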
 	

	It is clear that given the Lagrangian formulation of the compound decision problem, it can be viewed equivalently as a constrained optimization problem.  See also the discussion in Remark \ref{remark: power}.  We seek to maximize power defined as $\beta(t) := \mathbb{P}(\theta_i \geq \theta_\alpha, \delta_i = 1)/\alpha $ 
subject to two constraints: the first is the marginal FDR rate and the other is the selected proportion. 
For each fixed pair of $\{\alpha, \gamma\}$, the Bayes rule achieves the best power among all 
thresholding procedures that respect the two constraints. The next theorem establishes that our 
feasible, adaptive procedure achieves the same power as the oracle rule asymptotically.  
It is supported by the simulation evidence presented in the next section.
In practice we suggest convolution smoothing of the discrete $\hat G$ as in \citeasnoun{JZ21} with
a bandwidth slowly tending to zero with $n$.  As they show, the smoothed mixing distribution is
also consistent, hence fulfilling Assumption \ref{ass2}  and therefore all our adaptivity results.

\begin{thm} \label{thm:adaptivepower}
	Under Assumption \ref{ass1} and \ref{ass2}, the adaptive procedure $\delta_i = 1\{ \hat v_{\alpha,i} \geq \hat \lambda_n\}$  attains the same power as the optimal Bayes rule asymptotically. In particular, as $n \to \infty$, 
	\begin{align*}
		\frac{\sum_{i=1}^n 1 \{ \theta_i \geq \theta_\alpha, \hat v_{\alpha,i} \geq \hat \lambda_n \} }{\sum_{i=1}^n 1 \{\theta_i \geq \theta_\alpha\} } \overset{p}{\to} \beta(\lambda^*)
	\end{align*}
\end{thm}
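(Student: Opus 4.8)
The plan is to reduce the statement to a law of large numbers, once two preliminary convergences are secured: that the estimated threshold $\hat\lambda_n$ converges in probability to the oracle threshold $\lambda^*=\max\{\lambda_1^*,\lambda_2^*\}$, and that the plug-in tail probabilities are close to their population counterparts in the sense that $\max_i|\hat v_{\alpha,i}-v_{\alpha,i}|$ is small outside an $o(n)$ set of indices. The second ingredient is, in effect, already produced in proving Theorem~\ref{thm:adaptive}: by Assumption~\ref{ass2} the NPMLE $\hat G_n$ is strongly consistent and the fitted marginal converges in Hellinger distance, and since under Assumption~\ref{ass1} $G$ has a positive density at $\theta_\alpha$ the quantile estimate $\hat\theta_\alpha=\hat G_n^{-1}(1-\alpha)$ converges to $\theta_\alpha$; the ratios defining $v_\alpha(y,\sigma)$ (resp.\ $v_\alpha(y,s)$) are then continuous functionals of $(\hat G_n,\hat\theta_\alpha)$ over the compact ranges of Assumption~\ref{ass1}, which gives the claimed uniform-in-$i$ control. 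For the first ingredient, recall that $\lambda_1^*$ is the $1-\alpha$ quantile of the distribution $H_v$ of $v_\alpha$, which is continuous and strictly increasing near $\lambda_1^*$ by the density hypotheses, and $\lambda_2^*$ is the crossing point of the monotonically decreasing map $Q(t)=\mathbb{E}[(1-v_\alpha)\11\{v_\alpha\ge t\}]/\mathbb{E}[\11\{v_\alpha\ge t\}]$ with the level $\gamma$; the empirical objects built from the $\hat v_{\alpha,i}$ converge uniformly by Glivenko--Cantelli together with the plug-in control just described, so $\hat\lambda_{1n}\overset{p}{\to}\lambda_1^*$ and $\hat\lambda_{2n}\overset{p}{\to}\lambda_2^*$, hence $\hat\lambda_n\overset{p}{\to}\lambda^*$. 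This parallels the oracle argument behind Theorem~\ref{thm:oracle}, with the extra plug-in error absorbed.

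Granting these, fix $\epsilon>0$. On an event of probability tending to one, $|\hat\lambda_n-\lambda^*|<\epsilon$ and, for all $i$ outside a set of cardinality $o(n)$, $|\hat v_{\alpha,i}-v_{\alpha,i}|<\epsilon$, whence for those $i$
\[
\11\{v_{\alpha,i}\ge \lambda^*+2\epsilon\}\ \le\ \11\{\hat v_{\alpha,i}\ge\hat\lambda_n\}\ \le\ \11\{v_{\alpha,i}\ge\lambda^*-2\epsilon\}.
\]
Multiplying by $\11\{\theta_i\ge\theta_\alpha\}$, summing over $i$ and dividing by $n$, the law of large numbers (the i.i.d.\ structure of Assumption~\ref{ass1}) sandwiches the numerator average between $\mathbb{P}(\theta\ge\theta_\alpha,\,v_\alpha\ge\lambda^*+2\epsilon)$ and $\mathbb{P}(\theta\ge\theta_\alpha,\,v_\alpha\ge\lambda^*-2\epsilon)$ in the limit. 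Since the density hypotheses make the law of $v_\alpha$ continuous at $\lambda^*$ (no atom there), letting $\epsilon\downarrow0$ gives
\[
\tfrac1n\sum_{i=1}^n \11\{\theta_i\ge\theta_\alpha,\ \hat v_{\alpha,i}\ge\hat\lambda_n\}\ \overset{p}{\to}\ \mathbb{P}(\theta\ge\theta_\alpha,\ v_\alpha\ge\lambda^*)=\alpha\,\beta(\lambda^*),
\]
the last equality being the definition of $\beta$. The denominator average $\tfrac1n\sum_{i=1}^n\11\{\theta_i\ge\theta_\alpha\}$ converges in probability to $\mathbb{P}(\theta\ge\theta_\alpha)=\alpha$ by the law of large numbers and the definition of $\theta_\alpha$, and a Slutsky/continuous-mapping argument gives the ratio converging in probability to $\beta(\lambda^*)$.

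The crux of the argument --- and its main obstacle --- is the discontinuity of $\11\{\hat v_{\alpha,i}\ge\hat\lambda_n\}$ as a function of the two estimation errors $\hat v-v$ and $\hat\lambda_n-\lambda^*$: an arbitrarily small perturbation can reverse a given unit's selection. This forces the sandwiching above to rely simultaneously on a genuinely uniform (over $i$, up to a vanishing fraction) bound on $|\hat v_{\alpha,i}-v_{\alpha,i}|$ and on the absence of mass of the limiting distribution of $v_\alpha$ at $\lambda^*$. Delivering the uniform plug-in bound is the delicate part, especially in the heterogeneous unknown-variance model, where the continuous-mapping step must be carried out for the Gaussian-times-Gamma mixture and one must check that $T\ge 4$ suffices for the governing integrals to be well behaved; but this control is exactly what was required for Theorem~\ref{thm:adaptive} and may be imported from its proof.
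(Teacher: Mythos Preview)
Your proposal is correct and follows essentially the same route as the paper's proof. Both arguments hinge on (i) $\hat\lambda_n\overset{p}{\to}\lambda^*$ via the uniform convergence of the empirical constraint functions established for Theorem~\ref{thm:adaptive}, and (ii) a sandwich of $\11\{\hat v_{\alpha,i}\ge\hat\lambda_n\}$ between $\11\{v_{\alpha,i}\ge\lambda^*\pm 2\epsilon\}$ using the uniform plug-in control of Lemma~\ref{lemma: A4}, followed by the law of large numbers and continuity of $t\mapsto\mathbb{P}(\theta\ge\theta_\alpha,\ v_\alpha\ge t)$ at $\lambda^*$; the paper simply packages step~(ii) as uniform convergence of the process $\hat U_n(t)=n^{-1}\sum_i\11\{\theta_i\ge\theta_\alpha,\ \hat v_{\alpha,i}\ge t\}$ to its limit $H_2(t)$, which is the same sandwich written in Glivenko--Cantelli form.
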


\section{Simulation Evidence}
\label{sec:Simulation}
In this Section we describe two small simulation exercises designed to illustrate
performance of several competing methods for ranking and selection.  As a benchmark
for evaluating performance we consider several Oracle methods that presume knowledge
of the true distribution, $G$, generating the $\theta$'s as well as several
feasible methods that rely on estimation of $G$.  These are contrasted with more
traditional methods that are based on linear shrinkage rules of the Stein type. The linear shrinkage rule 
is the posterior mean of $\theta$ under the assumption that $G$ follows a Gaussian distribution 
with unknown mean and variance parameters. This is the commonly used estimator for ranking 
and selection in applied work, notably Chetty, Friedman and Rockoff (2014a, 2014b) for teacher 
evaluation and \citeasnoun{chetty2018impacts} for studying intergenerational mobility. 

Typically the 
linear shrinkage estimator is used in the context of heterogeneous known variances, this will 
be the model we focus on in our simulation experiments. The linear shrinkage formula defined 
in \eqref{tweedie} easily adapts to the heterogeneous variances case and leads to the 
Jame-Stein shrinkage rule with heterogeneous known variances. \citeasnoun{EfronMorris} 
introduced some further modifications. As we have already demonstrated, when variances 
are heterogeneous, the linear shrinkage estimator provides a different ranking than the
posterior tail probability rules. Further complications arise when we seek procedures 
that also control false discovery.  To estimate the false discovery rate for different 
thresholding values we requires knowledge of $G$.  If the Gaussian assumption on $G$ underlying
the linear shrinkage rules is misplaced, it may lead to an inaccurate estimates of FDR, 
and consequently to  procedures that fail to control for false discovery.

Performance will be evaluated primarily on the basis of power, which we define as the
proportion of individuals whose true $\theta_i$ exceeds the cutoff 
$\theta_\alpha = G^{-1} (1 - \alpha)$,
who are actually selected.  This is the sample counterpart of 
$\PP(\delta_i = 1, \theta_i \geq \theta_\alpha)/\PP(\theta_i \geq \theta_\alpha)$. 
FDR is calculated as the sample counterpart of $\PP(\delta_i = 1, 
\theta_i < \theta_\alpha)/\PP(\delta_i = 1)$, that is the proportion of
selected individuals whose true $\theta_i$ falls below the threshold.  
While our selection rules are {\it intended} to constrain FDR below the $\gamma$
threshold, as in other testing problems they are not always successful in this objective 
in finite samples so empirical power comparisons must be interpreted cautiously in view of this.
Nonetheless, asymptotic validity is assured by the results in Section \ref{sec:adaptive}.
We compare performance for three distinct $\alpha$ levels, 
$\{ 0.05, 0.10, 0.15 \}$ and three $\gamma$ levels $\{ 0.05, 0.10, 0.15 \}$. 

\subsection{The Student $t$ Setting}
\begin{figure}[h!]
    \begin{center}
    \resizebox{.9\textwidth}{!}{\includegraphics{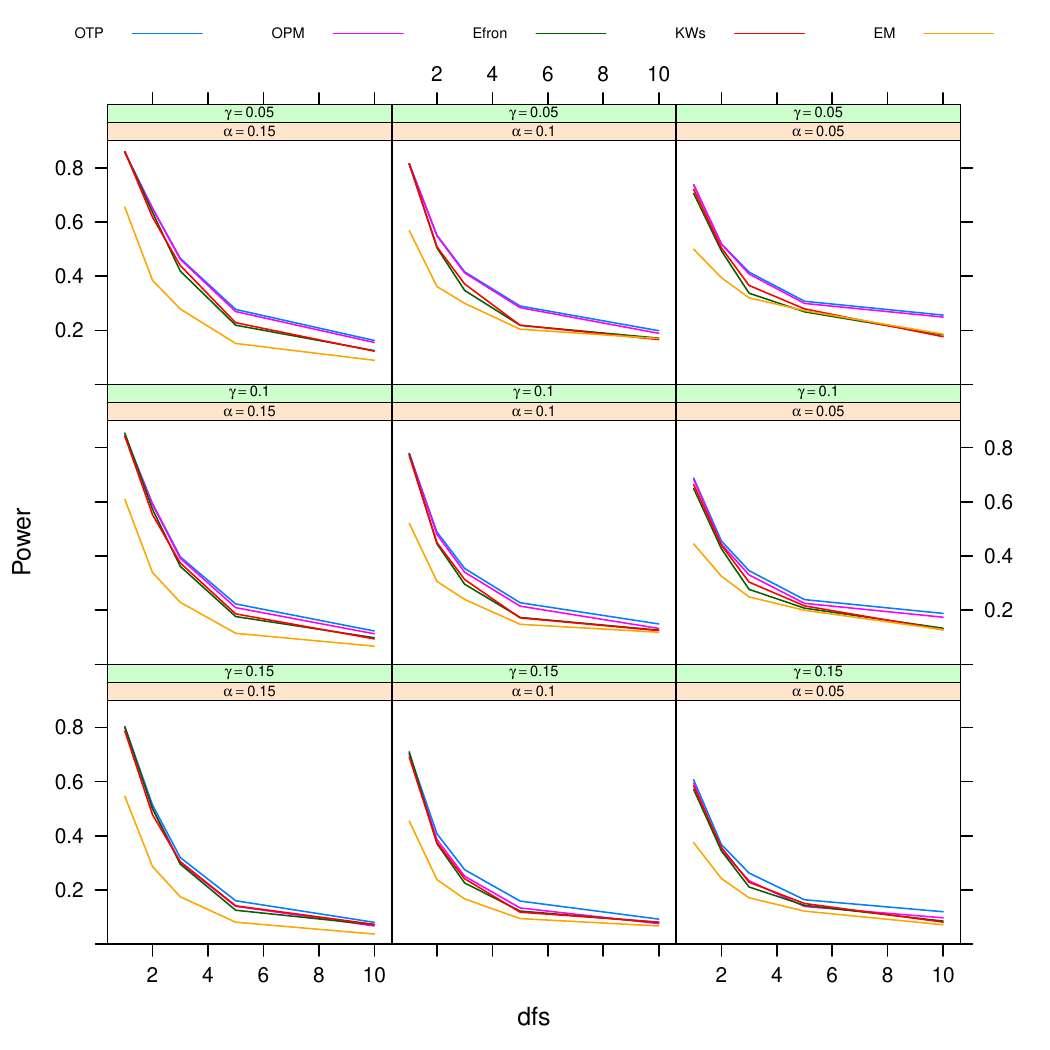}}
    \end{center}
    \caption{\small{Power Performance for Several Selection Rules with Student $t$ Signal.
    Capacity and FDR constraints are indicated at the top of each panel in the Figure.}}
    \label{fig: nt}
\end{figure} 
Our first simulation setting focuses on the effect of tail behavior of the distribution
on performance of competing rules.  For these simulations we take $G$ to be a discrete approximation
to Student $t$ distributions with degrees of freedom in the set $\{ 1,2,3,5,10\}$, and supported
on the interval $[-20,20]$.  The scale parameters of the Gaussian noise contribution are independent 
and uniformly distributed on the interval $[0.5,1.5]$. We report power performance for several alternative ranking and selection rules:  

\begin{description}
	\item [OTP] Oracle Tail Probability Rule
	\item [OPM] Oracle Posterior Mean Rule
	\item [Efron] Efron Tail Probability Rule 
	\item [KWs] Kiefer-Wolfowitz Smoothed Tail Probability Rule 
	\item [EM] Efron and Morris (1973) Linear Shrinkage Rule 
\end{description}
The KWs rule uses $\tilde G = \hat G * K_h$, with biweight kernel $K$ and bandwidth
$h$ equal to half the mean absolute deviation from the median of $\hat G$.  The Efron
rule uses his suggested default of a natural spline basis with five degrees of freedom
and penalty parameter 0.1.

We illustrate the results in Figure \ref{fig: nt}, where we plot empirical power against degrees
of freedom of the $t$ distribution for a selected set of values for the capacity constraint,
$\alpha \in \{ 0.05, 0.10, 0.15 \}$ and FDR constraint, $\gamma \in \{ 0.05, 0.10, 0.15 \}$ as
indicated at the top of each panel of the figure.  The most striking
conclusion from this exercise is the dramatic  decrease in power as we move toward the Gaussian
distribution.  At the Cauchy, $t_1$, power is quite respectable  for all choices of $\alpha$ and
$\gamma$, but power declines rapidly as the degrees of freedom increases, reenforcing our earlier conclusion
that the Gaussian case is extremely difficult.  We would stress, in view of this finding, that classical
linear shrinkage procedures designed for the Gaussian setting are poorly adapted to heavy tailed settings 
in which the reliability of selection procedures is potentially greatest. 

Careful examination of this figure also reveals that there is a slight advantage to the posterior
tail probability rules over the posterior mean procedures, both for the Oracle rules and for
our feasible procedures. There is surprisingly little sacrifice in power in moving from the
Oracle methods to the Efron or Kiefer-Wolfowitz rules.  The Efron and Morris selection rule is
very competitive in the almost Gaussian, $t_{10}$ setting but sacrifices considerable power in the
lower degrees of freedom settings due to the misspecification of the distribution $G$ 
and consequent inaccurate estimation of the false discovery rate.

\subsection{A Teacher Value-Added Setting}

Our second simulation setting is based on a discrete approximation of the data
structure employed in \citeasnoun{GGM} to study teacher value-addded methods. 
Several longitudinal waves of student test scores from the Los Angeles Unified
School District were combined in this study.  Here we abstract from many features 
of the full longitudinal structure of this data, and focus instead on comparing performance 
of several selection methods.  We maintain our standard known variance model in which
we observe $Y_i \sim \NN ( \theta_i, \sigma_i^2)$ with $\theta_i$'s drawn iidly from a
distribution $\tilde G$ estimated by \citeasnoun{GGM}.  This distribution was estimated
from the full longitudinal LA sample using the nonparametric maximum likelihood estimator
of Kiefer and Wolfowitz and then smoothed slightly by convolution with a biweight kernel
and illustrated in the left panel of Figure \ref{fig:LAmodel}.  
Variances, in keeping with our hypothesis in Section \ref{sec:KnownVariances}, are drawn from a 
distribution with density illustrated in the right panel of Figure \ref{fig:LAmodel}.  
We focus on selection from the left tail of the resulting distribution since it is
those teachers  whose jobs are endangered by recent policy recommendations in the
literature. (see for instance \citeasnoun{hanushek2011economic}). 

\begin{figure}[h!]
    \begin{center}
    \resizebox{.7\textwidth}{!}{\includegraphics{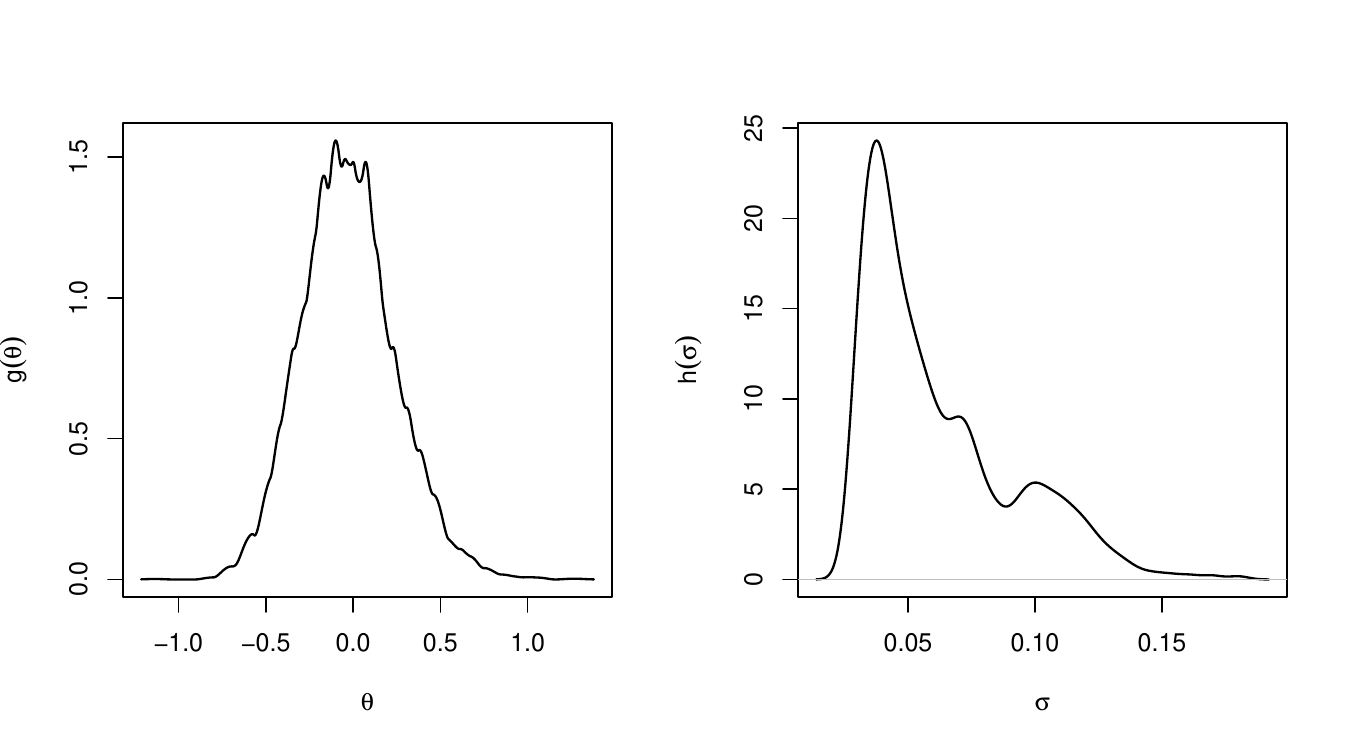}}
    \end{center}
    \caption{\small{Densities of ``Latent'' (Mean) Ability and Standard Deviation for 
    the Teacher Value Added Simulations}}
    \label{fig:LAmodel}
\end{figure} 
We draw samples of size 10,000 from the foregoing distribution and compute performance 
measures based on 100 replications.  The fitted densities for this simulation exercise
are based on a sample of roughly 11,000 teachers, so the simulation sample size is
chosen to be commensurate with this.  In Table \ref{tab:LArankinga} we report power,
FDR and the proportion selected by ten selection rules.  The Oracle rules, OTP and OPM,
based ranking by the tail probability and posterior mean criteria can be considered
benchmarks for the remaining feasible procedures.  Only the Oracle procedures can be
considered reliable from the perspective of adhering to the capacity and FDR constraints.
Consequently, some caution is required in the interpretation of the power comparisons
since feasible procedures can exhibit good power at the expense of violating these constraints.
This is analogous to the common difficulty in interpreting power in testing problems
when different procedures have differing size.
When FDR is constrained to 5\%, even the Oracle is only able to select about half of
the deserving individuals; OTP is consistently preferable to OPM as expected and 
power performance improves somewhat as the capacity constraint is relaxed.  Among the feasible
$G$-modeling selection procedures, the Efron rules have good power performance, but fail
to meet the FDR constraints.  We conjecture that somewhat less aggressive smoothing than
the default, $df = 5, c_0 = 0.1$ might help to rectify this.  In contrast the smoothed 
Kiefer-Wolfowitz rules are somewhat overly conservative in meeting the FDR constraints
and might benefit from somewhat more aggressive smoothing.

\nocite{Chettya,Chettyb}
Among the other procedures the linear posterior mean rule, LPM, as employed by
Chetty, Friedman and Rockoff (2014a, 2014b), 
and the linear posterior mean rule, EM, of \citeasnoun{EfronMorris} behave 
identically and exhibit somewhat erratic
FDR control due to the misspecified Gaussian assumption on $G$;
this leads to weaker power performance. 
As a further comparison, when the linear shrinkage rules are implemented without any FDR constraint, 
denoted LPM* and EM* in the table, as they typically would be used in practice, 
the false discovery proportion is considerably higher than the targeted $\gamma$. 
We also report the performance of MLE and P-value rules, implemented without FDR control; 
again both yield a higher FDR rate, making it difficult to evaluate their power performance.

{\tiny{
%latex.default(round(D, digits = 3), file = "LArankinga1.tex",     rowlabel = "", cgroup = gammas, rgroup = c("Power", "FDR",         "Selected"), caption.loc = "bottom", caption = cap, label = "tab:LArankinga")%
\begin{table}[!tbp]
\begin{center}
\begin{tabular}{lrrrrcrrrr}
\hline\hline
\multicolumn{1}{l}{\bfseries }&\multicolumn{4}{c}{\bfseries $\gamma=5\%$}&\multicolumn{1}{c}{\bfseries }&\multicolumn{4}{c}{\bfseries $\gamma=10\%$}\tabularnewline
\cline{2-5} \cline{7-10}
\multicolumn{1}{l}{}&\multicolumn{1}{c}{$\alpha=1\%$}&\multicolumn{1}{c}{$\alpha=3\%$}&\multicolumn{1}{c}{$\alpha=5\%$}&\multicolumn{1}{c}{$\alpha=10\%$}&\multicolumn{1}{c}{}&\multicolumn{1}{c}{$\alpha=1\%$}&\multicolumn{1}{c}{$\alpha=3\%$}&\multicolumn{1}{c}{$\alpha=5\%$}&\multicolumn{1}{c}{$\alpha=10\%$}\tabularnewline
\hline
{\bfseries Power}&&&&&&&&&\tabularnewline
~~OTP&$0.394$&$0.520$&$0.554$&$0.626$&&$0.494$&$0.625$&$0.661$&$0.733$\tabularnewline
~~OPM&$0.365$&$0.492$&$0.521$&$0.599$&&$0.484$&$0.620$&$0.654$&$0.731$\tabularnewline
~~ETP&$0.435$&$0.540$&$0.580$&$0.657$&&$0.540$&$0.647$&$0.688$&$0.759$\tabularnewline
~~KWTP&$0.355$&$0.477$&$0.521$&$0.614$&&$0.452$&$0.583$&$0.631$&$0.723$\tabularnewline
~~EPM&$0.398$&$0.511$&$0.552$&$0.632$&&$0.528$&$0.642$&$0.683$&$0.758$\tabularnewline
~~KWPM&$0.325$&$0.447$&$0.492$&$0.588$&&$0.440$&$0.576$&$0.624$&$0.719$\tabularnewline
~~LPM&$0.162$&$0.341$&$0.418$&$0.689$&&$0.246$&$0.460$&$0.542$&$0.805$\tabularnewline
~~LPM*&$0.726$&$0.781$&$0.796$&$0.829$&&$0.726$&$0.781$&$0.796$&$0.829$\tabularnewline
~~EM&$0.162$&$0.341$&$0.418$&$0.689$&&$0.246$&$0.460$&$0.542$&$0.805$\tabularnewline
~~EM*&$0.726$&$0.781$&$0.796$&$0.829$&&$0.726$&$0.781$&$0.796$&$0.829$\tabularnewline
~~MLE&$0.699$&$0.768$&$0.787$&$0.824$&&$0.699$&$0.768$&$0.787$&$0.824$\tabularnewline
~~P-val&$0.374$&$0.478$&$0.535$&$0.635$&&$0.374$&$0.478$&$0.535$&$0.635$\tabularnewline
\hline
{\bfseries FDR}&&&&&&&&&\tabularnewline
~~OTP&$0.050$&$0.050$&$0.051$&$0.051$&&$0.103$&$0.103$&$0.100$&$0.101$\tabularnewline
~~OPM&$0.047$&$0.050$&$0.051$&$0.053$&&$0.103$&$0.101$&$0.101$&$0.102$\tabularnewline
~~ETP&$0.070$&$0.059$&$0.061$&$0.064$&&$0.128$&$0.115$&$0.117$&$0.119$\tabularnewline
~~KWTP&$0.035$&$0.037$&$0.041$&$0.048$&&$0.082$&$0.081$&$0.085$&$0.096$\tabularnewline
~~EPM&$0.063$&$0.057$&$0.059$&$0.062$&&$0.129$&$0.115$&$0.116$&$0.118$\tabularnewline
~~KWPM&$0.038$&$0.040$&$0.045$&$0.051$&&$0.081$&$0.084$&$0.087$&$0.097$\tabularnewline
~~LPM&$0.016$&$0.025$&$0.033$&$0.083$&&$0.031$&$0.048$&$0.061$&$0.151$\tabularnewline
~~LPM*&$0.276$&$0.226$&$0.207$&$0.172$&&$0.276$&$0.226$&$0.207$&$0.172$\tabularnewline
~~EM&$0.016$&$0.025$&$0.033$&$0.083$&&$0.031$&$0.048$&$0.061$&$0.151$\tabularnewline
~~EM*&$0.276$&$0.225$&$0.207$&$0.172$&&$0.276$&$0.225$&$0.207$&$0.172$\tabularnewline
~~MLE&$0.304$&$0.238$&$0.216$&$0.177$&&$0.304$&$0.238$&$0.216$&$0.177$\tabularnewline
~~P-val&$0.627$&$0.526$&$0.467$&$0.365$&&$0.627$&$0.526$&$0.467$&$0.365$\tabularnewline
\hline
{\bfseries Selected}&&&&&&&&&\tabularnewline
~~OTP&$0.004$&$0.016$&$0.029$&$0.066$&&$0.006$&$0.021$&$0.037$&$0.082$\tabularnewline
~~OPM&$0.004$&$0.015$&$0.027$&$0.063$&&$0.005$&$0.021$&$0.036$&$0.081$\tabularnewline
~~ETP&$0.005$&$0.017$&$0.031$&$0.070$&&$0.006$&$0.022$&$0.039$&$0.086$\tabularnewline
~~KWTP&$0.004$&$0.015$&$0.027$&$0.064$&&$0.005$&$0.019$&$0.034$&$0.080$\tabularnewline
~~EPM&$0.004$&$0.016$&$0.029$&$0.067$&&$0.006$&$0.022$&$0.038$&$0.086$\tabularnewline
~~KWPM&$0.003$&$0.014$&$0.026$&$0.062$&&$0.005$&$0.019$&$0.034$&$0.080$\tabularnewline
~~LPM&$0.002$&$0.010$&$0.022$&$0.075$&&$0.003$&$0.014$&$0.029$&$0.095$\tabularnewline
~~LPM*&$0.010$&$0.030$&$0.050$&$0.100$&&$0.010$&$0.030$&$0.050$&$0.100$\tabularnewline
~~EM&$0.002$&$0.010$&$0.022$&$0.075$&&$0.003$&$0.014$&$0.029$&$0.095$\tabularnewline
~~EM*&$0.010$&$0.030$&$0.050$&$0.100$&&$0.010$&$0.030$&$0.050$&$0.100$\tabularnewline
~~MLE&$0.010$&$0.030$&$0.050$&$0.100$&&$0.010$&$0.030$&$0.050$&$0.100$\tabularnewline
~~P-val&$0.010$&$0.030$&$0.050$&$0.100$&&$0.010$&$0.030$&$0.050$&$0.100$\tabularnewline
\hline
\end{tabular}
\caption{Comparison of Performance of Several Selection Rules for the Teacher Value Added Simulation.\label{tab:LArankinga}}\end{center}
\end{table}

}}

\section{Ranking and Selection of U.S. Dialysis Centers}
\label{sec:Dialysis}

Motivated by important prior work on ranking and selection 
by Lin, Louis, Paddock and Ridgeway (2006, 2009) \nocite{LLPR06,LLPR09} illustrated by
applications to ranking U.S. dialysis centers, we have chosen to maintain this focus
to illustrate our own approach.  Kidney disease is a growing medical problem in the U.S
and considerable effort has been devoted to data collection and evaluation of the relative performance of 
the more than 6000 dialysis centers serving the afflicted population.  Centers are
%{\RK{6722 in 2020}} 
evaluated on multiple criteria, but the primary focus of center ranking is their
standardized mortality rate, or SMR, the ratio of observed deaths to expected deaths
for center patients.  Allocating patients to centers is itself a complex task since
patients may move from one center to another in the course of a year.  Centers also
vary considerably in the mix of patients they serve.  Predictions from an estimated Cox proportional
hazard model that attempts to account for this heterogeneity are employed to estimate
expected deaths for each center.

Our analysis focuses exclusively on the SMR evaluation of centers using longitudinal
data from 2004-18 as reported in \citeasnoun{DataSource}.  We restrict attention to
3230 centers that have consistently reported SMR data over this sample period. Observed 
deaths, denoted $y_{it}$ for center $i$ in year $t$ are conventionally modeled as Poisson,
\[
y_{it} \sim \Pois (\rho_i \mu_{it})
\]
where $\mu_{it}$ is center $i$'s expected deaths as predicted by the Cox model
in year $t$ and $\rho_i$ is the center's unobserved mortality rate.  
We view $\mu_{it}$ as the effective sample size for the center, after adjustment
for patient characteristics of the center.  Center characteristics are expliciitly
excluded from the Cox model.  The classical variance stabilizing transformation for the
Poisson brings us back to the Gaussian model,
\[
z_{it} = \sqrt{y_{it}/\mu_{it}} \sim \NN (\theta_i, 1/w_{it}),
\]
where $\theta_i = \sqrt{\rho_i}$ and $w_{it} = 4 \mu_{it}$. 
Exchangeability of the centers yields a mixture model in which the parameter $\theta_i$,
is effectively assumed to be drawn iidly from a distribution, $G$. 
The predictions of expected mortality, $\mu_{it}$, are assumed to be sufficiently accurate that
we treat $w_{it}$ as known, and independent of $\theta_i \sim G$. 

Over short time horizons like 3 years we assume that centers have a fixed draw of $\theta_i$ from $G$, and
thus we have sufficient statistics for $\theta_i$, as,
\[
T_i = \sum_{t \in \mathcal{T}} w_{it} z_{it}/w_i \sim \NN (\theta_i, 1 /w_{i}),
\]
where the set $\mathcal{T}$ is the corresponding three year window and $w_i = \sum_t w_{it}$.  Given these ingredients it is straightforward
to construct a likelihood for the mixing distribution, $G$, and proceed with estimation of it. 

Our objective is then to select centers based on the posterior distributions 
of their $\theta_i$'s.  For example, the posterior tail probability of center $i$ is given by,
\[
v_\alpha(t_i, w_i) = \mathbb{P}(\theta_i \geq \theta_\alpha| t_i, w_i) =  
	\frac{\int_{\theta_\alpha}^{+\infty} f(t_i | \theta, w_i) 
	dG(\theta) }
	{\int_{-\infty}^{+\infty} f (t_i | \theta, w_i) 
dG(\theta) },
\]
where $f$ is the density function of $T_i$ conditional on $\theta_i$ and $w_i$. 
The capacity constraint requires choosing a thresholding value $\lambda_2^*(\alpha)$ such that 
\[
 	\alpha = \int \int \11\{v_\alpha(t, w) \geq \lambda_2^*(\alpha)\} 
	\varphi (t| \theta, w) dG(\theta) dH(w), 
\]
which can be approximated by $\frac{1}{n}\sum_i \11\{v_\alpha(t_i, w_i)\geq \lambda_2^*(\alpha) \}$,
and inverted to obtain the threshold.
Based on the discussion in Section \ref{sec:adaptive}, for the FDR constraint we chose a thresholding value $\lambda_1^*(\alpha, \gamma)$ such that 
\begin{equation}\label{eq: fdr}
	\gamma = \frac{\int \int \11\{v_{\alpha}(t, w) \geq \lambda_1^*(\alpha, \gamma)\} 
	(1- v_\alpha(t,w)) f(t|\theta,w) dG(\theta) dH(w)}
	{\int \int \11\{v_{\alpha}(t, w) \geq \lambda_1^*(\alpha, \gamma)\} 
	f(t|\theta, w) dG(\theta) dH(w)}
\end{equation}
where $H$ is the marginal distribution of the observed portion of
the variance effect. The numerator can be approximated by 
$\frac{1}{n}\sum_i (1- v_\alpha(t_i, w_i)) \11\{v_\alpha(t_i,w_i) \geq \lambda_1^*(\alpha,\gamma)\}$ 
and the denominator can be approximated by, 
$\frac{1}{n} \sum_i \11\{v_\alpha(t_i,w_i) \geq \lambda_1^*(\alpha,\gamma)\}$. 

The posterior mean ranking, in contrast, is based on, 
\[
\delta(t_i,w_i) = 	\mathbb{E}[\theta_i | t_i,w_i] = 
\int \theta f(t_i|\theta, w_i) dG(\theta).
\]
For the capacity constraint we choose a thresholding value $C_2^*(\alpha)$ such that 
\[
\alpha = \int \int \11\{\delta(t,w) \geq C_2^*(\alpha)\} f(t| \theta,  w) dG(\theta ) dH(w).
\]
For FDR constraint, we pick a thresholding value $C_1^*(\alpha, \gamma)$ such that 
\[
\gamma = \frac{\mathbb{P}(\delta(t,w) \geq C_1^*(\alpha, \gamma); 
\theta < \theta_\alpha)}{\mathbb{P}(\delta(t, w) \geq C_1^*(\alpha, \gamma))}.
\]
The right hand side of the FDR constraint can be approximated by 
\[\frac{1}{n} \sum_i \11\{\delta(t_i,s_i,w_i)\geq C_1^*(\alpha,\gamma)\} 
(1-v_\alpha(t_i,w_i))/ \frac{1}{n} \sum_i \11\{\delta(t_i,w_i)\geq C_1^*(\alpha,\gamma)\}, 
\]
while the right hand side of the capacity constraint can be approximated by 
\[
\frac{1}{n} \sum_i \11\{\delta(t_i,w_i)\geq C_2^*(\alpha)\}, 
\]
so $C_2^*(\alpha)$ is simply the empirical quantile of the $\delta(t_i, w_i)$. 

We will compare the foregoing ranking and selection rules with more naive rules based upon the
Poisson and Gaussian MLEs,  $\sum_{t \in \mathcal{T}} y_{it} / \sum_{t\in \mathcal{T}} \mu_{it}$, and $T_i$, respectively, a
variant of the much maligned P-value, as well as a linear shrinkage procedure. 
For these rules we do not attempt to control for FDR since this is how they are typically 
implemented in practice.

To help appreciate the difficulty of the selection task, Table \ref{tab: naive_31_1d} reports estimated FDR 
rates for several selection rules under a range of capacity constraints $\alpha$ for both 
right and left tail selection based on the data from 2004 - 2006. Right tail selection corresponds
to identifying centers whose mortality rate is higher than expected; left tail selection to
centers with mortality lower than expected.  To estimate FDR we require an estimate of the
distribution of distribution, $G$.  For this purpose we use the smoothed version of the Kiefer-Wolfowitz
NPMLE introduced in Section \ref{sec:Compound}.  The biweight bandwidth for the smoothing was chosen 
as the mean absolute deviation from the median of the discrete NPMLE, $\hat G$. 
%\RK{Looks like here it is twice not half the MAD, needs further checking.}
The assessment of FDR reported in Table \ref{tab: naive_31_1d} reflects the
considerable uncertainty associated with the selected set of centers deemed by the capacity constraint to 
be in the upper (or lower) $\alpha$ quantile based upon our estimate of the distribution, $G$, of 
unobserved quality.
	
The MLE rule ranks centers based on their Gaussian MLE, $T_i$, while the Poisson-MLE rule ranks on
$\sum_t  y_{it} /\sum_t \mu_{it}$, which is the MLE of $\rho_i$ from the Poisson model. 
P-value ranks centers based on the variance stabilizing transformation from the Poisson model 
under the null hypothesis $\rho_i = 1$ and $\rho_i > 1$ as the alternative hypothesis for right selection 
and $\rho_i < 1$ for the left selection. All these rules ignore the compound decision perspective 
of the problem entirely.

Among the compound decision rules, we consider the linear (James-Stein) shrinkage rule, 
$\hat \mu_\theta + (T_i - \hat \mu_\theta) \hat \sigma_\theta^2 / (\hat \sigma_\theta^2 + 1/w_i)$ 
which is the posterior mean of $\theta_i$ based on the model $T_i \sim \mathcal{N}(\theta_i, 1/w_i)$ 
assuming that the latent variable $\theta_i$ follows a Gaussian
distribution with mean $\mu_\theta$ and variance $\sigma_\theta^2$. We also consider the 
\citeasnoun{EfronMorris}  estimator
which is  a slight modification of the James-Stein estimator.

Finally, PM and TP are the posterior mean of $\theta$ and posterior tail probability of 
$\theta \geq \theta_\alpha$, for right selection, and $\theta \leq \theta_\alpha$ for left selection
based on our estimated $\hat G$.  For both left and right tail selection, as $\alpha$ increases, 
the FDR rate decreases, indicating the selection task becomes easier. All rules that account for 
the compound decision perspective of the problem have slightly lower FDRs than those that consider
each center individually.

{\tiny{
%latex.default(D, file = "FDR.tex", rowlabel = "", rgroup = c("Right Selection",     "Left Selection"), caption.loc = "bottom", caption = cap,     label = "tab: naive_31_1d")%
\begin{table}[!tbp]
\begin{center}
\begin{tabular}{lrrrrr}
\hline\hline
\multicolumn{1}{l}{}&\multicolumn{1}{c}{$\alpha=4\%$}&\multicolumn{1}{c}{$\alpha=10\%$}&\multicolumn{1}{c}{$\alpha=15\%$}&\multicolumn{1}{c}{$\alpha=20\%$}&\multicolumn{1}{c}{$\alpha=25\%$}\tabularnewline
\hline
{\bfseries Right Selection}&&&&&\tabularnewline
~~MLE&$0.544$&$0.481$&$0.436$&$0.403$&$0.352$\tabularnewline
~~Poisson-MLE&$0.545$&$0.485$&$0.440$&$0.406$&$0.355$\tabularnewline
~~Pvalue&$0.532$&$0.475$&$0.432$&$0.399$&$0.349$\tabularnewline
~~Efron-Morris&$0.521$&$0.473$&$0.429$&$0.398$&$0.349$\tabularnewline
~~James-Stein&$0.521$&$0.473$&$0.429$&$0.398$&$0.349$\tabularnewline
~~PM&$0.517$&$0.472$&$0.428$&$0.398$&$0.349$\tabularnewline
~~TP&$0.517$&$0.471$&$0.428$&$0.397$&$0.349$\tabularnewline
\hline
{\bfseries Left Selection}&&&&&\tabularnewline
~~MLE&$0.611$&$0.565$&$0.481$&$0.449$&$0.393$\tabularnewline
~~Poisson-MLE&$0.600$&$0.561$&$0.478$&$0.448$&$0.393$\tabularnewline
~~Pvalue&$0.620$&$0.565$&$0.477$&$0.450$&$0.393$\tabularnewline
~~Efron-Morris&$0.595$&$0.552$&$0.472$&$0.445$&$0.391$\tabularnewline
~~James-Stein&$0.595$&$0.552$&$0.472$&$0.445$&$0.391$\tabularnewline
~~PM&$0.592$&$0.552$&$0.473$&$0.445$&$0.391$\tabularnewline
~~TP&$0.589$&$0.550$&$0.471$&$0.444$&$0.390$\tabularnewline
\hline
\end{tabular}
\caption{FDR Estimates: 2004-2006\label{tab: naive_31_1d}}\end{center}
\end{table}
 %This is the new version with bw k = 1
}}

The \citeasnoun{Stars} assigns ratings of five stars down to one star to centers in the proportions 
$\{0.22, 0.30, 0.35, 0.09, 0.04\}$ respectively.  We will abbreviate these ratings to the
conventional academic scale of A-F.  To illustrate the conflict between the selection 
criteria we plot in Figure \ref{fig:conflict} the 
centers selected for the grade A (Five star, which consists 22\% of the centers that suppose to have their true mortality rate being the lowest) category with and without FDR control. Centers are
characterized by pairs, $(T_i, w_i)$, consisting of their weighted mean standarized mortality, $T_i$,
and their estimate of the precision, $w_i$, of these mortality estimates.  In each plot the solid
curves represent the decision boundaries of the selection rule under comparison.  
Centers with low mortality and relatively high precision appear toward the northwest in each figure.  
%Centers with average mortality would appear on the 
%vertical line at $T_i = 1$, but since we are selecting only the best 22 percent of the centers
%no such centers appear as selected in the plots.

Panel (a) of the figure compares the posterior tail probability selection with the MLE, or fixed
effect, selection.  The selection boundary for the MLE is the (red) vertical line, since the MLE
ignores the precision of the estimates entirely.  The selection boundary for the tail probability rule 
is indicated by the (blue) curve.  A few centers with high precision excluded by the MLE rule are
selected by the TP rule, and on the contrary a few centers with low precision are selected by the
MLE rule, but excluded by the TP criterion. 
Panel (b) imposes FDR control with $\gamma = 0.20$ on the TP selection with an estimated thresholding value implied by the FDR constraint using the smoothed NPMLE. The MLE selection is the same as in Panel (a) without the FDR control. We see that under TP rule with FDR control, the number of selected centers is reduced considerably. Instead of selecting 711 centers allowed by the capacity constraint, it selects only 230 centers. In comparison, the MLE rule under capacity constraint has an estimated FDR rate at 0.431.
%Under this relatively stringent FDR control there 
%are substantially more centers selected by the TP rule but excluded by the MLE rule, suggesting a better 
%power for the TP rule.
Panel (c) compares selected centers by the TP rule with those selected by a James-Stein linear
shrinkage rule.  Now the TP rule tolerates a few more low precision centers, while it is the
James-Stein rule that demands higher precision to be selected.
Finally, in Panel (d) we again subject the TP rule to FDR control of 20 percent, 
while the James-Stein rule continues to adhere only to the capacity constraint. The TP boundary scales 
back substantially, suggesting that a large proportion of the extra selections made by James-Stein 
linear shrinkage rules are likely to be false discoveries. In fact, the estimated FDR rate of the James-Stein rule under just capacity constraint is also 0.431, the same as that of the MLE rule.

%both rules intend to control for FDR under 20 percent. TP rule finds the thresholding value using the smoothed NPMLE estimates for $G$ while the James-Stein rule uses the Gaussian distribution with estimated mean and variances. The figure shows that the TP boundary lies entirely inside the James-Stein boundary under the
%20 percent FDR control regime.  This can be attributed to the fact that the James-Stein rule, 
%when selecting the thresholding value under the FDR constraint, is estimating the FDR rate 
%assuming the distribution $G$ is Gaussian. When the Gaussian assumption is wrong, the FDR rate is 
%poorly estimated, leading the rule to be less prudent than it should be. As a result, many of the 
%extra selections it is making are likely to be false discoveries.

\begin{figure}
    \begin{center}
    \resizebox{.9\textwidth}{!}{\includegraphics{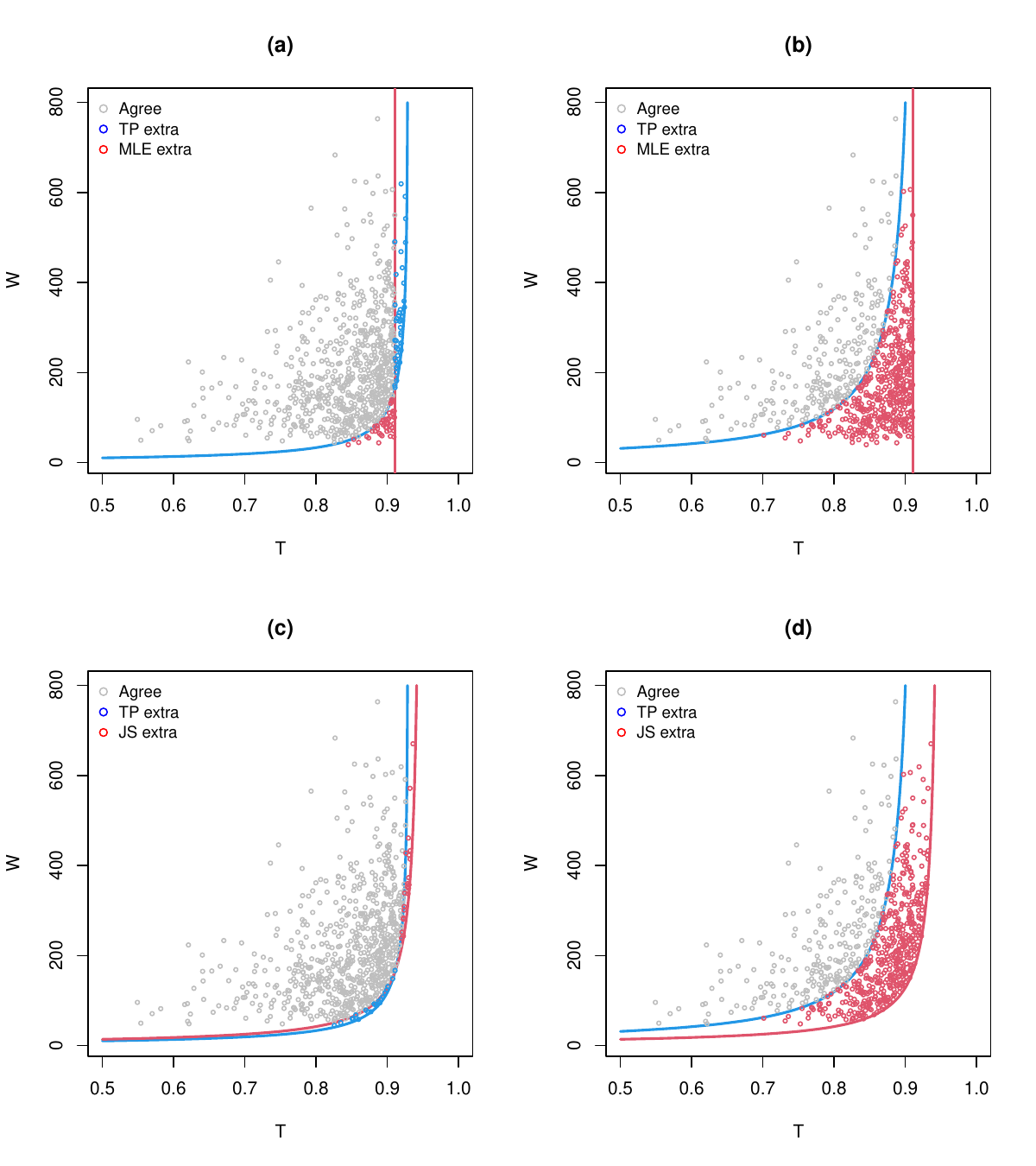}}
    \end{center}
    \caption{\small{Contrasting Selections for A-rated centers: 
    The two upper panels compare posterior tail probability selection
    with MLE (fixed effects) selection, while the lower panels compare
    TP selection with James-Stein (linear shrinkage) selection.  Left
    panels impose capacity control only, while the right panels impose
    20 percent FDR control for the TP rule.  The estimated FDR rate for both the MLE and James-Stein selection under capacity constraint, using the smoothed NPMLE estimator for $G$, is 0.431. Comparisons are based on the 2004-2006 data.}}
    \label{fig:conflict}
\end{figure}

Given the longtitudinal structure of the Dialysis data, it would be possible to 
consider the models in Section \ref{sec:UnknownVar} that allow for unobserved variance heterogeneity. 
We refrain from doing so partly due to space considerations and because we are reluctant to assume
stationarity of random effects over longer time horizons.

\subsection{Temporal Stability, Ranking and Selection}
Given the longitudinal nature of the data, it is natural to ask, ``How stable
are rankings over time, and isn't there some temporal dependence in the observed data
that should be accounted for?''  Perhaps surprisingly, the year-to-year dependence in the   
observed mortality is quite weak.  In Figure \ref{fig: ar1} we plot a histogram
of estimated AR(1) coefficients for the 3230 centers; it is roughly centered at zero
and slightly skewed to the left.  We do not draw the conclusion from this that there
is no temporal dependence in the observed $y_{it}$, but only that there is considerable
heterogeneity in the nature of this dependence with roughly as many centers exhibiting negative
serial dependence as those with positive dependence.   Our approach of considering brief,
3-5 year, windows of presumed stability in center performance is consistent with the
procedures of the official ranking agency.  In each of
these windows we can compute a ranking according to one of the criteria introduced above,
and it is of interest to see how much stability there is in these rankings.

\begin{figure}
    \begin{center}
    \resizebox{.6\textwidth}{!}{\includegraphics{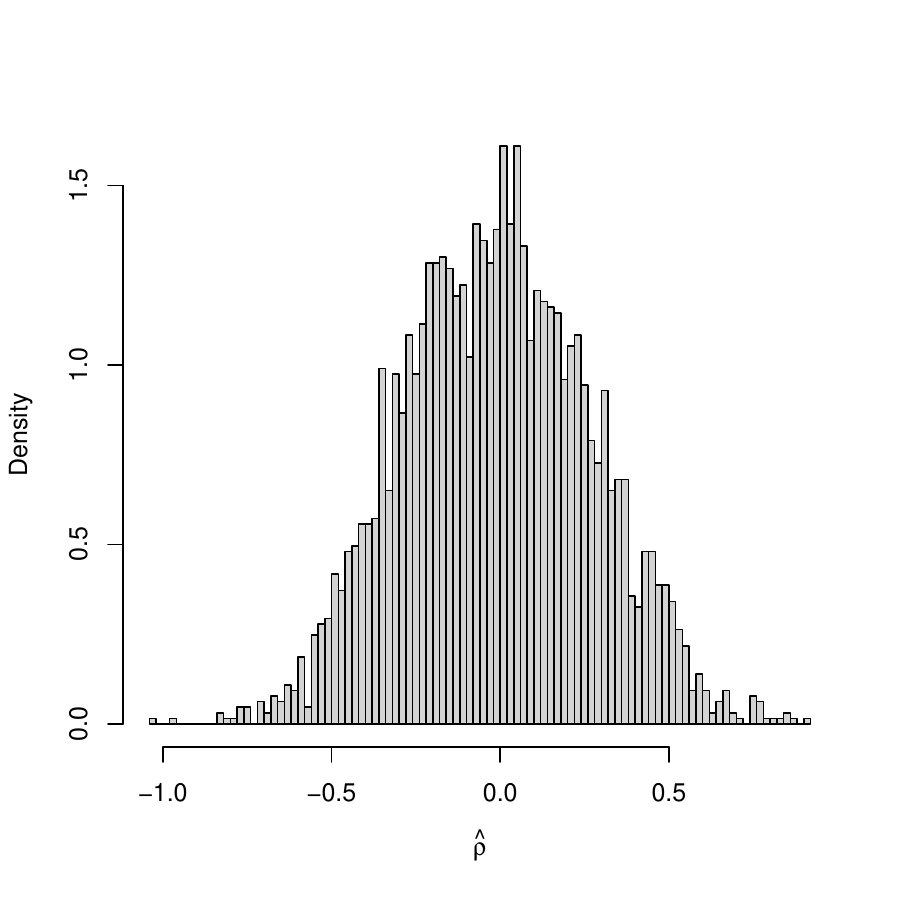}}
    \end{center}
    \caption{\small{Histogram of estimated AR(1) coefficients for 3230
    Dialysis centers based on annual data 2004-2017}}
    \label{fig: ar1}
\end{figure}

To address this question we consider rankings based on the posterior tail probability
criterion for three year windows.  In each of the 5 3-year windows we assign centers
letter grades, A-F, with proportions
$\{0.22, 0.30, 0.35, 0.09, 0.04 \}$ respectively. Table \ref{tab: TM} reports the estimated
transition matrix between these categories, so entry $i,j$ in the matrix represents the
estimated probability of a center in state $i$ moving to state $j$ in the next period.

%latex.default(round(TM, 3), file = "TM1.tex", caption = cap,     label = "tab: TM", rowlabel = "", caption.loc = "bottom")%
\begin{table}[!tbp]
\begin{center}
\begin{tabular}{lrrrrr}
\hline\hline
\multicolumn{1}{l}{}&\multicolumn{1}{c}{A}&\multicolumn{1}{c}{B}&\multicolumn{1}{c}{C}&\multicolumn{1}{c}{D}&\multicolumn{1}{c}{F}\tabularnewline
\hline
A&$0.441$&$0.328$&$0.201$&$0.024$&$0.006$\tabularnewline
B&$0.247$&$0.360$&$0.327$&$0.059$&$0.007$\tabularnewline
C&$0.122$&$0.286$&$0.440$&$0.112$&$0.040$\tabularnewline
D&$0.062$&$0.181$&$0.441$&$0.210$&$0.106$\tabularnewline
F&$0.021$&$0.085$&$0.346$&$0.219$&$0.329$\tabularnewline
\hline
\end{tabular}
\caption{Estimated First Order Markov Transition Matrix: Entry
$i,j$ of the matrix estimates the probability of a transition
from state $i$ to state $j$ based on posterior tail probability
rankings for 3-year longitudinal grouping of the center data.\label{tab: TM}}\end{center}
\end{table}

It is obviously difficult to maintain an ``A'' rating for more than a couple of periods,
but centers with poor performance are also likely to move into the middle of the rankings.
Although, as we have seen, there is no guarantee that the posterior tail probability criterion
yields a nested ranking, nestedness does hold in this particular application.  Posterior mean
ranking yields similar transition behavior.  The high degree of mobility between rating
categories reenforces our conclusion that ranking and selection into rating categories is
subject to considerable uncertainty.

\section{Conclusions}
Robbins's compound decision framework is well suited to ranking and selection problems,
and nonparametric maximum likelihood estimation of mixture models offers a powerful tool
for implementing empirical Bayes rules for such problems.  Posterior tail probability 
selection rules perform better than posterior mean rules when precision is heterogeneous.
Ranking and selection is especially difficult in Gaussian settings where classical linear
shrinkage methods are most appropriate.  Nonparametric empirical Bayes methods can
substantially improve upon selection methods based on linear shrinkage and traditional
$p$-values when the latent mixing distribution is not Gaussian both in terms of power
and false discovery rate.

\appendix
\section{Proofs}
{\footnotesize{

\begin{proof}[Lemma \ref{lem:Nestedness}]
We can write:
	\begin{align*}
	\nabla_y \frac{\alpha f_1(y)}{f(y)} & = \frac{\int_{[\theta_\alpha, +\infty)} \nabla_y \log \varphi(y|\theta,\sigma^2) \varphi(y| \theta, \sigma^2) dG(\theta)}{\int _{(-\infty, +\infty)} \varphi(y| \theta, \sigma^2) dG(\theta)} \\
	& -  \frac{\int_{[\theta_\alpha, +\infty)}  \varphi(y| \theta,\sigma^2) dG(\theta)}{\int _{(-\infty, +\infty))} \varphi(y| \theta, \sigma^2) dG(\theta)} \frac{\int_{(-\infty, +\infty)} \nabla_y \log \varphi(y|\theta, \sigma^2) \varphi(y| \theta,\sigma^2) dG(\theta)}{\int _{(-\infty, +\infty)} \varphi(y| \theta, \sigma^2) dG(\theta)}\\
	& = \mathbb{E}\Big[ \11\{\theta \geq \theta_\alpha\} \nabla_y \log \varphi(y|\theta, \sigma^2)| Y\Big ] - \mathbb{E}\Big [\11\{\theta \geq \theta_\alpha\}|Y] \mathbb{E}[\nabla_y\log \varphi(y|\theta, \sigma^2)|Y\Big ]\\
	& = \text{Cov}\Big[ \11\{\theta \geq \theta_\alpha\} , \nabla_y \log \varphi(y|\theta,\sigma^2)  |Y\Big]\geq 0 
	\end{align*}
The last inequality holds since $\nabla_y\log \varphi(y|\theta,\sigma^2)$ is increasing in $\theta$ for each fixed $\sigma^2$, 
by the Gaussian assumption and the fact that covariance of monotone functions of $\theta$ is non-negative (see \citeasnoun{schmidt2014inequalities}) assuming the existence of $\mathbb{E}_{\theta|Y}[\nabla_y \log \varphi(Y|\theta,\sigma^2)|Y]$ which we assume.  Nesting follows from the monotonicity of the $v_\alpha(y)$ 
criterion: monotonicity of $v_\alpha(y)$ implies that there exists $t_{\alpha}$ such that 
$\11\{v_\alpha(y) \geq \lambda_\alpha/(1+\lambda_\alpha)\} = \11\{y \geq t_{\alpha}\}$, 
hence if $\alpha_1 > \alpha_2$ and $\mathbb{P}(y \geq t_{\alpha_1}) = 
\alpha_1$ and $\mathbb{P}(y \geq t_{\alpha_2})  =\alpha_2$, then it must be that 
$t_{\alpha_1} \leq t_{\alpha_2}$, implying nestedness.
\end{proof}

\vspace{3mm}
\begin{proof}[Lemma \ref{lem: mon1}]
    Denote the three decision criteria, $v_1 (y)  = \EE (\theta | Y = y)$,
	$v_2 (y)  = \PP (\theta \geq G^{-1} (1-\alpha)  | Y = y)$ and
	$v_3 (y)  = \EE (\theta 1(\theta \geq G^{-1} (1-\alpha))  | Y = y)$.
	Assuming that $\mathbb{E}[\theta|Y] < \infty$, $\mathbb{E}_{\theta|Y}[\nabla_y \log \varphi(y|\theta,\sigma^2)|Y] < \infty$ and $\mathbb{E}_{\theta|Y}[\theta \nabla_y \log \varphi(y|\theta,\sigma^2)|Y]<\infty$, the calculation leading to the proof of Lemma 1 shows:
    \begin{align*}
	\nabla_y v_1 (y) & = \text{Cov} (\theta , \nabla_y \log \varphi (y|\theta) | Y = y)\\
	\nabla_y v_2 (y) & = 
	    \text{Cov} (1(\theta \geq G^{-1} (1 - \alpha)) , \nabla_y \log \varphi (y|\theta) | Y = y)\\
	\nabla_y v_3 (y) & = 
	    \text{Cov} (\theta 1(\theta \geq G^{-1} (1 - \alpha)) , \nabla_y \log \varphi (y|\theta) | Y = y).
    \end{align*}
    Thus, the monotonicity of $\nabla_y \log \varphi (y|\theta,\sigma^2)$ implies they all yield identical rankings.
\end{proof}

\vspace{3mm}
\begin{proof}[Proposition \ref{prop: homo}]
The Bayes rule for the non-randomized selections can be characterized as, 
	\[
	\delta_i^* = \begin{cases}
	1  , & \text{if  } v_\alpha(y_i) \geq  \tau_1^* ( 1- v_\alpha(y_i) - \gamma) + \tau_2^*  \\
	0  , &\text{if }v_\alpha(y_i) < \tau_1^* ( 1- v_\alpha(y_i) - \gamma) + \tau_2^*
	\end{cases}
	\]
with Karush-Kuhn-Tucker conditions, 
%\textcolor{cyan}{is KKT condition still suitable for discrete $\delta$?}
\begin{eqnarray}
\tau_1^* \Big ( \mathbb{E}\Big[ \sum_{i=1}^n \Big \{(1- v_\alpha(y_i)) \delta_i^* - \gamma \delta_i^* \Big\}\Big] \Big) &= 0 \label{c2}\\ 
\tau_2^* \Big(\mathbb{E}\Big[ \sum_{i=1}^n \delta_i^*\Big] - \alpha n\Big)&= 0 \label{c3}\\
\mathbb{E}\Big[ \sum_{i=1}^n \Big \{(1- v_\alpha(y_i)) \delta_i^* - \gamma \delta_i^* \Big\}\Big] &\leq0 \label{c4}\\
\frac{\11}{n}\mathbb{E}\Big[ \sum_{i=1}^n \delta_i^* \Big] - \alpha & \leq 0 \label{c5}\\
\tau_1^* & \geq 0\\
\tau_2^* & \geq 0
\end{eqnarray}

The Bayes rule takes the form of thresholding on the posterior tail probability $v_\alpha(y)$ 
and since $v_\alpha(y)$ is monotone in $y$ as shown in Lemma \ref{lem: mon1}, 
it is therefore a thresholding rule on $Y$, $\delta_i^* = \11\{y_i \geq  t^*\}$ with 
cutoff $t^*$ depending on the values of $(\tau_1^*, \tau_2^*, \alpha, \gamma)$. 
Condition (\ref{c4}) is equivalent to the condition that the marginal false discovery rate, mFDR, since it requires 
\[
\mathbb{E}\Big [\sum_{i=1}^n \{(1-v_\alpha(y_i) \delta_i^*\}\Big] / \mathbb{E}\Big[ \sum_{i=1}^n \delta_i^*\Big] \leq \gamma 
\]
and we can show that the left hand side quantity is precisely the mFDR since 
\begin{align*}
\text{mFDR}(t^*) 
= &  \mathbb{P}(\delta_i^* = 1, \theta_i \leq \theta_\alpha)/\mathbb{P}(\delta_i^*=1)\\ 
= & (1-\alpha)  \int \11 \{ y \geq t^*\} f_0(y) dy / \int \11 \{y \geq  t^*\} f(y) dy\\
= & \int \11 \{y \geq  t^*\} (1 - v_\alpha(y) ) f(y) dy /\int \11\{y \geq  t^*\} f(y) dy\\
= & \mathbb{E}\Big[\sum_{i=1}^n \{(1-v_\alpha(y_i))\delta_i^*\}\Big]/\mathbb{E}\Big[\sum_{i=1}^n \delta_i^*\Big]\\ 
= & \frac{\int_{-\infty}^{\theta_\alpha} \tilde{\Phi}((t^* - \theta)/\sigma) dG(\theta)} 
{\int_{-\infty}^{+\infty} \tilde{\Phi}((t^* - \theta)/\sigma)dG(\theta)} \leq \gamma. 
\end{align*}
For any mixing distribution $G$, as $t^*$ increases, it becomes less likely for condition (\ref{c4}) to bind. 
And as $t^*$ approaches $-\infty$, left side of (\ref{c4})
approaches $1-\alpha- \gamma$, and hence we've restricted $\gamma < 1-\alpha$ to avoid cases where the 
condition (\ref{c4}) never binds.  On the other hand, condition (\ref{c5}) is equivalent to 
\[
\mathbb{P}(\delta_i^* = 1) - \alpha = \int_{-\infty}^{+\infty} \tilde{\Phi}((t^* - \theta)/\sigma)dG(\theta) - \alpha 
\leq 0.
\] 
As $t^*$ increases, it also becomes less likely that condition (\ref{c5}) binds. 
Therefore, we can define, 
	\begin{align*}
	t_1^* &= \min\{t: \text{mFDR}(t)- \gamma \leq0 \}\\
	t_2^* & = \min \{t:  \int_{-\infty}^{+\infty} \tilde{\Phi}((t - \theta)/\sigma)dG(\theta)-\alpha \leq 0 \}
	\end{align*}
When $t_1^* < t_2^*$, the feasible region for $Y$ defined by inequality (\ref{c5})  is a strict subset of that 
defined by inequality (\ref{c4}). When $t_1^* > t_2^*$, then the feasible region defined by inequality (\ref{c4}) 
is a strict subset of that defined by inequality (\ref{c5}). When $t_1^* = t_2^*$, the feasible regions coincide. 
This case occurs when $\text{mFDR}(t_1^*) = \gamma$ and $\mathbb{P}(y \geq t_2^*) = \alpha$, so
$\mathbb{E}[v_\alpha(Y) \11\{v_\alpha(Y) \geq \lambda^*\}] = 
\alpha - \alpha\gamma$ with $v_\alpha(t_1^*) = \lambda^*(\alpha,\gamma)$. 
Again, the strict thresholding enforced by the statement of the proposition can be relaxed slightly by randomizing
the selection probability of the last unit so that the active constraint is satisfied exactly.
\end{proof}

\vspace{3mm}
\begin{proof}[Proposition \ref{prop: rule_ysigma}]
In the proof we will suppress the dependence of $\lambda^*$ on the $(\alpha, \gamma)$. 
The argument is very similar to the proof for Proposition \ref{prop: homo}, except that now the feasible region 
defined by constraint (\ref{c4}) and (\ref{c5}) is a two-dimensional region for $(y_i, \sigma_i)$. 
Since the posterior tail probability $v_\alpha(y, \sigma)$ is monotone in $y$ for any fixed 
$\sigma$ as a result of Lemma \ref{lem: mon1}, the optimal rule can be reformulated as a 
thresholding rule on $Y$ again, $\delta_i^* = \11\{y_i > t_{\alpha}(\lambda^*,\sigma_i)\}$ 
except now the threshold value also depends on $\sigma_i$. 
	
Now consider the constraint (\ref{c4}) and (\ref{c5}). Condition (\ref{c4}) is equivalent to the condition 
that, 
\[
\frac{\mathbb{E}\Big[ \sum_{i=1}^n \{(1-v_\alpha(y_i,\sigma_i))\delta_i^*\}\Big]} 
{\mathbb{E}\Big[ \sum_{i=1}^n \delta_i^*\Big]} - \gamma = 
\frac{\int \int_{-\infty}^{\theta_\alpha}\tilde \Phi((t_\alpha(\lambda^*,\sigma) - \theta)/ \sigma)dG(\theta)dH(\sigma)}
{\int \int_{-\infty}^{+\infty }\tilde \Phi((t_\alpha(\lambda^*,\sigma) - \theta)/\sigma)dG(\theta)dH(\sigma)}-\gamma \leq 0. 
\]
For any marginal distributions $G$ and $H$ of $(\theta,\sigma)$ and for a fixed pair of $(\alpha,\gamma)$, as $\lambda^*$ increases, 
$t_\alpha(\lambda^*,\sigma)$ also increases for any $\sigma > 0$ and therefore it is less likely for condition (\ref{c4}) to bind. 
On the other hand, condition (\ref{c5}) is equivalent to, 
\[ 
\mathbb{P}(\delta_i^* = 1) - \alpha = \int \int \tilde \Phi((t_\alpha(\lambda^*,\sigma) - \theta)/\sigma) dG(\theta) dH(\sigma) - 
\alpha \leq 0. 
\]
So as $\lambda^*$ increases, it is also less likely for condition (\ref{c5}) to bind. 
Thus, when $\lambda_1^* < \lambda_2^*$, the feasible region on $(Y, \sigma)$ defined by inequality constraint (\ref{c5}) 
is a strict subset of that defined by inequality (\ref{c4}). 
When $\lambda_1^* > \lambda_2^*$, then the feasible region defined by inequality (\ref{c4}) is a strict subset 
of that defined by inequality (\ref{c5}). When $\lambda_1^* = \lambda_2^*$, the feasible regions coincide;
this case occurs when 
\[
\mathbb{E}\Big[ v_\alpha(Y, \sigma) \11\{v_\alpha(Y, \sigma) \geq \lambda^*\}\Big] = \alpha - \alpha \gamma
\]
where the expectation is taken with respect to the joint distribution of $(Y, \sigma)$. 

Finally regarding the existence of $\lambda^*$, note that existence of  a solution for $\lambda_2^*$ 
for any $\alpha \in (0,1)$ follows from the fact that for
any fixed $\alpha$, $f_2(\alpha, \lambda) = \mathbb{P}(v_\alpha(y,\sigma) > \lambda) - \alpha$, 
is a decreasing function in $\lambda$ and $\lambda_2^*(\alpha)$ is defined as the zero-crossing point 
of  $f_2(\alpha, \lambda)$. Note that $f_2(\alpha,0) = 1-\alpha$ and $f_2(\alpha,1) = -\alpha$. 
Therefore for any $\alpha\in (0,1)$, we can always find a $\lambda_2^*(\alpha) \in (0,1)$ 
such that $f_2(\alpha, \lambda_2^*(\alpha)) = 0$.  Now consider $f_1(\alpha,\gamma,\lambda) = 
\mathbb{E}[(1-v_\alpha(y,\sigma) - \gamma) \11\{v_\alpha(y,\sigma) > \lambda\}]$. 
For a fixed pair of $(\alpha,\gamma)$, $\lambda_1^*(\alpha,\gamma)$ is defined as the zero crossing 
point of $f_1(\alpha,\gamma,\lambda)$. Note that $f_1(\alpha,\gamma,\lambda)$ decreases first and 
then increases in $\lambda$ with its minimum  achieved at $\lambda = 1-\gamma$. We also know 
that $f_1(\alpha,\gamma,0) = 1-\gamma - \mathbb{E}[v_\alpha(y,\sigma)] = 1- \gamma - \alpha$ and 
$f_1(\alpha,\gamma,1) = 0$. Hence as long as $\gamma < 1-\alpha$, the zero-crossing 
$\lambda_1^*(\alpha,\gamma)$ exists. The condition $\gamma < 1-\alpha$ is imposed to rule out 
cases where FDR constraint never binds.
\end{proof}

\vspace{3mm}
\begin{proof}[Lemma \ref{lem: nest1}]
	%In the proof, we may abbreviate $v_\alpha(y, \sigma)$ simply by $v$ wherever there is no risk of confusion. 
	Note that for any cutoff value $\lambda$, the mFDR can be expressed as, 
	\[
	\text{mFDR}(\alpha,\lambda) = \mathbb{E}[(1-v_{\alpha}(y_i,\sigma_i)) 
	\11\{v_{\alpha}(y_i,\sigma_i)\geq  \lambda\}]/\mathbb{P}[v_{\alpha}(y_i,\sigma_i) \geq  \lambda].
	\]
	Thus, mFDR depends both on the cutoff value $\lambda$ and on $\alpha$ since $v_\alpha$, 
	%which is defined as $\mathbb{P}(\theta > \theta_\alpha | y, \sigma)$, 
	is a function of $\alpha$, and consequently its density function is also indexed by $\alpha$. 
	
	First, we will show that $\nabla_{\lambda} \text{ mFDR}(\alpha,\lambda ) \leq  0$ for all $\alpha \in (0,1)$. 
	Differentiating with respect to $\lambda$ gives, 
	\begin{align*}
		\nabla_\lambda \frac{\int_{\lambda}^1  (1-v) f_{ v}(v; \alpha ) dv}{\int_{\lambda}^{1} f_{ v}(v; \alpha)dv}& = \frac{-(1-\lambda) f_{ v}(\lambda; \alpha) \int _\lambda^1 f_{ v}(v; \alpha) dv + \int_\lambda^1 (1-v)f_{ v}(v;\alpha )dv f_{ v}(\lambda; \alpha )}{(\int_\lambda^1 f_{ v}(v; \alpha )dv)^2}\\
		& = \frac{f_{ v}(\lambda; \alpha)}{(\int_\lambda^1 f_{ v}(v; \alpha)dv)^2} \Big ( \int_\lambda^1 (1-v) f_{ v}(v; \alpha)dv - \int_\lambda^1 (1-\lambda) f_{ v}(v; \alpha ) dv\Big) \\
		& \leq  0.
	\end{align*}
	Next, to establish that $\nabla_\alpha \text{ mFDR}(\alpha,\lambda) \leq 0$, 
	%provided that  $\nabla_{\alpha} \log f_{ v}(v;\alpha)$ is non-decreasing in $v$, 
	we differentiate with respect to $\alpha$, to obtain,
	\begin{align*}
		&\nabla_{\alpha}  \frac{\int_\lambda^1  (1-v) f_{ v}(v; \alpha ) dv}{\int_\lambda^1 f_{ v}(v; \alpha )dv}\\
		&  =  \frac{\int_\lambda^1 (1-v) \nabla_{\alpha} \log f_{ v}(v;\alpha ) f_{ v}(v; \alpha )dv}{\int _\lambda^1 f_{ v}(v; \alpha ) dv} - \frac{\int_\lambda^1 (1-v) f_{ v}(v; \alpha ) dv}{\int_\lambda^1 f_{ v}(v; \alpha )dv} \frac{\int_\lambda^1 \nabla_{\alpha} \log f_{ v}(v;\alpha ) f_{ v}(v; \alpha ) dv}{\int_\lambda^1 f_{ v}(v; \alpha )dv}\\
		& = \frac{\mathbb{E}[(1- v) \nabla_{\alpha} \log f_{ v}( v; \alpha ) \11\{ v \geq  \lambda\}]}{\mathbb{P}( v \geq \lambda )} - \frac{\mathbb{E}[ (1-v) \11\{ v\geq \lambda \}]}{\mathbb{P}( v \geq  \lambda)} \frac{\mathbb{E}[\nabla_{\alpha} \log f_{ v}( v; \alpha )\11\{ v \geq \lambda\}]}{\mathbb{P}( v\geq \lambda )}\\
		& = \mathbb{E}[(1- v) \nabla_{\alpha} \log f_{ v}( v;\alpha ) |  v \geq \lambda] - \mathbb{E}[1-v |  v\geq \lambda ] \mathbb{E}[\nabla_{\alpha}\log f_{ v}( v; \alpha )| v\geq \lambda ]\\
		& = \text{cov}[1- v, \nabla_{\alpha} \log f_{ v} ( v; \alpha )| v \geq \lambda ] \leq 0.
	\end{align*}
	where the last inequality holds because $\nabla_{\alpha}\log f_{ v}(v;\alpha )$ is non-decreasing in $v$. 
	
	Now suppose we have the cutoff value $\lambda_1^*(\alpha_2,\gamma)$ such that, 
	\[
	\mathbb{E}[(1-v_{\alpha_2}(y_i,\sigma_i)) \11\{v_{\alpha_2}(y_i,\sigma_i) > 
	\lambda_1^*(\alpha_2,\gamma)\}]/\mathbb{P}(v_{\alpha_2}(y_i,\sigma_i) > \lambda_1^*(\alpha_2,\gamma)]=\gamma . 
	\]
	If we maintain the same cutoff value for $v_{\alpha_1}(y_i,\sigma_i)$ with $\alpha_1 > \alpha_2$, 
	given the second property of $\text{mFDR}$, we know, 
	\[
	\mathbb{E}[(1-v_{\alpha_1}(y_i,\sigma_i))\11\{v_{\alpha_1}(y_i,\sigma_i) \geq  \lambda_1^*(\alpha_2,\gamma)\}]/\mathbb{P}(v_{\alpha_1}(y_i,\sigma_i)\geq  \lambda_1^*(\alpha_2,\gamma)) \leq \gamma. 
	\]
	If equality holds, then by definition we have $\lambda_1^*(\alpha_2, \gamma) = \lambda_1^*(\alpha_1,\gamma)$, 
	if strict inequality holds, then by the first property of $\text{mFDR}$, in order to increase 
	$\text{mFDR}$ level to be equal to $\gamma$, we must have $\lambda_1^*(\alpha_1, \gamma) < \lambda_1^*(\alpha_2,\gamma)$. 
\end{proof}

\vspace{3mm}
\begin{proof}[Corollary \ref{cor: nest1}]
	For any $\alpha_1 > \alpha_2$, we have $v_{\alpha_1}(y,\sigma) \geq  v_{\alpha_2}(y,\sigma)$ for all pair of $(y, \sigma) \in \mathbb{R} \times \mathbb{R}_{+}$. When the condition in Lemma \ref{lem: nest1} holds, then $v_{\alpha_1}(y,\sigma) \geq  v_{\alpha_2}(y,\sigma) > \lambda_1^*(\alpha_2, \gamma) \geq  \lambda_1^*(\alpha_1,\gamma)$, which implies $\Omega_{\alpha_2 ,\gamma}^{FDR} \subseteq \Omega_{\alpha_1,\gamma}^{FDR}$. 
\end{proof}

\vspace{3mm}
\begin{proof}[Lemma \ref{lem: nest2}]
	Since $t_{\alpha}(\lambda^*_2(\alpha),\sigma)$ defines the boundary of the selection region under the capacity constraint for a fixed level $\alpha$. The condition imposed that as $\alpha$ increases, for each fixed $\sigma$, the thresholding value for $Y$ decreases, hence nestedness of the selection region. 
\end{proof}

\vspace{3mm}
\begin{proof}[Lemma \ref{lem: nest3}]
	Based on results in Lemma \ref{lem: nest1} and Lemma \ref{lem: nest2} and the fact that $\Omega_{\alpha,\gamma} = \Omega_{\alpha,\gamma}^{FDR} \cap \Omega_{\alpha}^C$, we have nestedness of the selection set. 
\end{proof}

\vspace{3mm}
\begin{proof}[Proposition \ref{prop: PMnest}]
	The capacity constraint requires that 
	\[
	\alpha = \mathbb{P}(M(y, \sigma) \geq  C_2^*(\alpha)) = \int \int \11\{M(y, \sigma) \geq  C_2^*(\alpha)\} f(y |\theta,\sigma) dG(\theta) dH(\sigma)
	\]
	For any $\alpha_1 > \alpha_2$, it is then clear that $C_2^*(\alpha_1) \leq C_2^*(\alpha_2)$. Given the monotonity of $M(y, \sigma)$ for each fixed $\sigma$ established in Lemma \ref{lem: mon1}, the selection set based on capacity constraint is nested. 
	For FDR constraint, we require 
	\begin{equation}
	\gamma = \frac{\int \int_{-\infty}^{\theta_\alpha}\11\{M(y,\sigma) \geq  C_1^*(\alpha,\gamma)\} f(y|\theta,\sigma) dG(\theta) dH(\sigma)}{\int \int \11\{M(y,\sigma) \geq  C_1^*(\alpha,\gamma)\}f(y|\theta,\sigma) dG(\theta)dH(\sigma)} \label{eq: PMFDR}
	\end{equation}
Fix $\gamma$, it suffices to show that if $\alpha_1 > \alpha_2$, then $C_1^*(\alpha_1, \gamma) \leq C_1^*(\alpha_2, \gamma)$. First, solve for $C_1^*(\alpha_2, \gamma)$ from equation (\ref{eq: PMFDR}). Now suppose we use this same thresholding value when we increase capacity to $\alpha_1 > \alpha_2$, we evaluate the right hand side of equation (\ref{eq: PMFDR}). Since $\theta_{\alpha_1}\leq \theta_{\alpha_2}$, then the numerator decreases, 
\begin{align*}
&\int \int_{-\infty}^{\theta_{\alpha_1}}\11\{M(y,\sigma) \geq  C_1^*(\alpha_2,\gamma)\} f(y|\theta,\sigma) dG(\theta) dH(\sigma)\\
&\leq \int \int_{-\infty}^{\theta_{\alpha_2}}\11\{M(y,\sigma) \geq  C_1^*(\alpha_2,\gamma)\} f(y|\theta,\sigma) dG(\theta) dH(\sigma),
\end{align*}
while the denominator does not change. The only way to satisfy the equality (\ref{eq: PMFDR}) again is to decrease the thresholding value, therefore $C_1^*(\alpha_1, \gamma) \leq C_1^*(\alpha_2, \gamma)$. The result in the Proposition is then reached since the selection set is the intersection of the selection set under capacity constraint and that under FDR constraint. 
\end{proof}

\vspace{3mm}

\begin{proof}[Lemma \ref{lem: nonmon}]
The logarithm of the Gamma density of $S_i$ takes the form, 
	\[
	\log \Gamma(S_i |r_i, \sigma_i^2) =  r_i \log (r_i /\sigma_i^2) - \log (\Gamma(r_i)) + (r_i - 1) \log S_i - S_i \frac{r_i}{\sigma_i^2},
	\]
hence 
$\nabla_s \Gamma(s|r, \sigma^2) = \Gamma(s|r, \sigma^2) \Big( \frac{r-1}{s} - \frac{r}{\sigma^2}\Big).$  
Fixing $y$ and differentiating with respect to $s$, we have, 
\begin{align*}
\nabla_s v_\alpha & (y,s) = \frac{\int \int _{\theta_\alpha}^{+\infty} f(y|\theta,\sigma^2) \Gamma(s|r,\sigma^2) \Big[ \frac{r-1}{s}-\frac{r}{\sigma^2}\Big] dG(\theta,\sigma^2)}{\int \int f(y|\theta,\sigma^2)\Gamma(s|r,\sigma^2) dG(\theta,\sigma^2)}\\
		& - \frac{\int\int_{\theta_\alpha}^{+\infty} f(y|\theta,\sigma^2) \Gamma(s|r,\sigma^2) dG(\theta,\sigma^2) }{\int \int f(y|\theta,\sigma^2)\Gamma(s|r,\sigma^2)dG(\theta,\sigma^2)} \frac{\int \int f(y|\theta,\sigma^2) \Gamma(s|r,\sigma^2) \Big[ \frac{r-1}{s}-\frac{r}{\sigma^2}\Big] dG(\theta,\sigma^2)}{\int \int f(y|\theta,\sigma^2)\Gamma(s|r,\sigma^2) dG(\theta,\sigma^2)}\\
& = \mathbb{E}\Big [\11\{\theta\geq \theta_\alpha\} \Big(\frac{r-1}{s}-\frac{r}{\sigma^2} \Big) |Y = y,S = s\Big] - \mathbb{E}\Big[ \11\{\theta\geq \theta_\alpha\} |Y = y, S = s\Big] \mathbb{E}\Big[ \frac{r-1}{s}-\frac{r}{\sigma^2}|Y = y,S = s\Big] \\
& = - \text{Cov}\Big[ \11\{\theta\geq \theta_\alpha\}, \frac{r}{\sigma^2}|Y = y, S = s\Big] 
\end{align*}
The covariance term can take either sign since we do not restrict the distribution $G$, 
so $v_\alpha(Y, S)$ need not be monotone in $S$.	
On the other hand, if we fix $s$ and differentiate with respect to $y$, 
\begin{align*}
\nabla_y v_\alpha(y, s) & = \mathbb{E}\Big[ \11\{\theta \geq \theta_\alpha\} \Big[ -\frac{y - \theta}{\sigma^2/T}\Big]|Y = y,S = s\Big] - \mathbb{E}\Big[ \11\{\theta \geq \theta_\alpha\}|Y=y,S =s\Big] \mathbb{E}\Big[ - \frac{y-\theta}{\sigma^2/T}|Y=y,S=s\Big] \\
& = \text{Cov}\Big[ \11\{\theta\geq \theta_\alpha\}, \frac{\theta - y}{\sigma^2/T}|Y=y,S= s\Big]. 
\end{align*}
Again, the covariance term can take either sign, depending on the correlation of $\theta$ and $\sigma^2$ 
conditional on $(Y,S)$. Therefore, fixing $S$, $v_\alpha(Y,S)$ need not be a monotone function of $Y$. 
\end{proof}

\vspace{3mm}
\begin{proof}[Proposition \ref{prop: nonmon}]
The proof is very similar to that of Proposition \ref{prop: rule_ysigma}, 
the only difference is that we can no longer formulate the decision rule by
simply thresholding on $Y$ because the transformation $v_\alpha(Y,S)$ need not be monotone in 
$Y$ for fixed values of $S$ as shown in Lemma \ref{lem: nonmon}, hence $\lambda_1^*(\alpha,\gamma)$ and 
$\lambda_2^*(\alpha)$ must now be defined directly through the random variable $v_\alpha$. The first constraint states that 
\[
\mathbb{E}\Big[ \sum_{i=1}^n \{(1-v_\alpha(y_i,s_i)) \delta_i^* \Big] / \mathbb{E}\Big[ \sum_{i=1}^n \delta_i^*\Big] \leq \gamma 
\]
with $\delta_i^* = 1\{ v_\alpha(y_i, s_i) \geq \lambda^*\}$
For each fixed $\alpha$, let the density function for $v_\alpha(Y,S)$ be denoted as $f_v(\cdot; \alpha) $, then the constraints can be formulated as 
\[
\int_{\lambda^*}^1 (1- v) f_v(v; \alpha) dv / \int _{\lambda^*}^{1} f_v(v; \alpha) dv 
\]
which is non-increasing in $\lambda^*$, hence the constraint becomes less likely to bind as $\lambda^*$ increases. On the other hand the second constraint states that, 
\[
\mathbb{P}(\delta_i^* = 1) - \alpha = \int_{\lambda^*}^{1} f_v(v; \alpha) dv - \alpha 
\] 
For each fixed $\alpha \in (0,1)$, this constraint also becomes less likely to bind as $\lambda^*$ increases. 
\end{proof} 

\vspace{3mm}

\begin{proof} [Theorem \ref{thm:oracle}] 
To prove Theorem \ref{thm:oracle}, we first introduce some additional notation and prove several
lemmas. Let 
	\begin{align*}
	H_{n,0}(t) &= 1 - H_n(t) = \frac{1}{n} \sum_{i=1}^n 1\{v_{\alpha,i} \geq  t\}\\
	H_{n,1}(t) & = \frac{1}{n} \sum_{i=1}^n (1-v_{\alpha,i}) 1\{v_{\alpha, i} \geq t\}\\
	Q_n(t) & = H_{n,1}(t)/H_{n,0}(t) \\
V_n(t) &=\frac{1}{n} \sum_{i=1}^n 1\{v_{\alpha,i} \geq t \} 1\{\theta_i \leq \theta_\alpha\}\\
	H_0(t) & = 1- H(t) = \mathbb{P}(v_{\alpha,i} \geq t)\\
	H_1(t) & = \mathbb{E}\Big[ (1- v_{\alpha,i})1\{v_{\alpha,i} \geq t\}\Big] \\
Q(t) & = H_1(t)/H_0(t)
\end{align*}

\begin{lemma} \label{lemma: A1}
	Under Assumption \ref{ass1}, as $n \to \infty$, 
	\begin{align*}
		&\underset{t \in [0,1]}{\sup} | H_{n,0}(t) - H_0(t)| \overset{p}{\to} 0\\
		& \underset{t \in [0,1]}{\sup} |H_{n,1}(t) - H_1(t)| \overset{p}{\to} 0
		\end{align*}
	\end{lemma}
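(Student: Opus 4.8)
The plan is to reduce both uniform convergence statements to the classical one--dimensional Glivenko--Cantelli theorem applied to the i.i.d.\ scalars $v_{\alpha,i}$. First I would record that, under Assumption \ref{ass1}, the observable data vectors --- $\{Y_i,\sigma_i^2\}$ when variances are observed, or $\{Y_i,S_i\}$ otherwise --- are i.i.d., and that in the Oracle procedure $v_{\alpha,i}$ is a fixed measurable function of this vector (the Oracle knows $G$, hence $\theta_\alpha$, so no estimated quantity enters). Therefore $\{v_{\alpha,i}\}_{i=1}^n$ is an i.i.d.\ sample from a distribution on $[0,1]$ with survival function $H_0$.

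Next I would handle $H_{n,0}$ directly: it equals one minus the empirical distribution function of the $v_{\alpha,i}$'s evaluated at $t^-$, so the Glivenko--Cantelli theorem yields $\sup_{t\in[0,1]}|H_{n,0}(t)-H_0(t)|\to 0$ almost surely, and a fortiori in probability; note that no continuity assumption on $H_0$ is needed. For $H_{n,1}$ I would avoid a fresh empirical-process argument and instead reduce it to $H_{n,0}$. Writing $H_{n,1}(t)=H_{n,0}(t)-\tfrac1n\sum_{i=1}^n v_{\alpha,i}\,1\{v_{\alpha,i}\geq t\}$ and using the layer-cake identity $v\,1\{v\geq t\}=\int_0^1 1\{v\geq\max(s,t)\}\,ds$, valid for $v,t\in[0,1]$, Fubini gives
\[
\tfrac1n\sum_{i=1}^n v_{\alpha,i}\,1\{v_{\alpha,i}\geq t\}=\int_0^1 H_{n,0}(\max(s,t))\,ds,\qquad \mathbb{E}\big[v_\alpha 1\{v_\alpha\geq t\}\big]=\int_0^1 H_0(\max(s,t))\,ds,
\]
so the uniform-in-$t$ discrepancy of the left-hand sides is at most $\sup_{u\in[0,1]}|H_{n,0}(u)-H_0(u)|$, whence $\sup_{t\in[0,1]}|H_{n,1}(t)-H_1(t)|\leq 2\sup_{u\in[0,1]}|H_{n,0}(u)-H_0(u)|\overset{p}{\to}0$.

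I do not anticipate a substantive obstacle: the only step requiring a moment's care is the i.i.d./measurability claim for $v_{\alpha,i}$ in the Oracle setting --- and, in the unknown-variance case, checking that $v_\alpha(\cdot,\cdot)$ is a bona fide measurable transformation of $(Y_i,S_i)$ --- after which everything is the classical Glivenko--Cantelli theorem together with the elementary Fubini computation above. As an alternative to the layer-cake reduction one could instead observe that $\{v\mapsto(1-v)1\{v\geq t\}:t\in[0,1]\}$ is a universal Glivenko--Cantelli class, being the product of a fixed bounded monotone function with the indicator of a half-line, but reducing to $H_{n,0}$ keeps the argument self-contained.
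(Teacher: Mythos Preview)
Your argument is correct. For $H_{n,0}$ you and the paper both invoke Glivenko--Cantelli directly. For $H_{n,1}$ the paper proceeds differently: it first records pointwise convergence by the weak law of large numbers and then upgrades to uniform convergence via a standard $\epsilon$-grid argument exploiting the monotonicity of $H_1$ and $H_{n,1}$ in $t$, essentially reproving a Glivenko--Cantelli statement for the class $\{(1-v)1\{v\geq t\}:t\in[0,1]\}$. Your layer-cake reduction is more economical: it yields the explicit inequality $\sup_t|H_{n,1}(t)-H_1(t)|\leq 2\sup_u|H_{n,0}(u)-H_0(u)|$, avoids the grid construction entirely, and does not rely on the continuity of the distribution of $v_\alpha$ that the paper's partition step implicitly uses to manufacture the grid. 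The paper's monotonicity-plus-grid template, on the other hand, is the pattern that gets reused later (e.g., in the proof of Theorem~\ref{thm:adaptive}) for quantities that are not as cleanly expressible as integrals of $H_{n,0}$, so there is some expository value in seeing it here first.
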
 

\begin{proof}[Proof of Lemma \ref{lemma: A1}] 
Under Assumption \ref{ass1} and the fact that $v_{\alpha,i} \in [0,1]$, the weak law of large numbers implies that we have for any $t \in [0,1]$, as $n \to \infty$, 
\begin{align*}
&H_{n,0}(t) \overset{p}{\to} H_0(t)\\
&H_{n,1}(t) \overset{p}{\to} H_1(t)
\end{align*}
By the Glivenko-Cantalli theorem, the first result is immediate. To prove the second result, it suffices to 
show that for any $\epsilon>0$, as $n \to \infty$, 
\[
\mathbb{P}\Big( \underset{t\in [0,1]}{\sup} |H_{n,1}(t) - H_1(t)| > \epsilon\Big) \to 0.
\]
It is clear that since $v_{\alpha,i}$ has a continuous distribution, $H_1(t)$ is a monotonically decreasing and bounded function in $t$ with $H_1(0) = 1- \alpha$ and $H_1(1) = 0$. It is also clear that the function $H_{n,1}(t)$ is monotonically decreasing in $t$, so we can find $m_{\epsilon} < \infty$ points such that 
$0 = t_{0} < t_1< \dots < t_{m_{\epsilon}} = 1$,
and for any $j \in \{1,2,\dots, m_\epsilon\}$, we have
$H_1(x_j) - H_1(x_{j-1}) \leq \epsilon/2$.
For any $t \in [0,1]$, there exists a $j$ such that $x_{j-1} \leq t \leq x_j$ and 
\begin{align*}
H_{n,1}(t) - H_1(t) &\leq H_{n,1}(t_{j-1}) - H_1(t_{j})\\
& = (H_{n,1}(t_{j-1}) - H_1(t_{j-1})) + (H_1(t_{j-1}) - H_1(t_j))\\
& \leq | H_{n,1}(x_{j-1}) - H_1(x_{j-1})| + \epsilon/2 \leq \underset{j}{\max} | H_{n,1}(t_{j-1}) - H_1(t_{j-1})| + \epsilon/2
\end{align*}
Likewise we can show that $H_{n,1}(t) - H_1(t) \geq - \underset{j}{\max}|H_{n,1}(t_j) - H_1(t_j)| - \epsilon/2$, hence 
\[
\underset{t \in [0,1]}{\sup} |H_{n,1}(t) - H_1(t)| \leq  \underset{j}{\max}|H_{n,1}(t_j) - H_1(t_j)|  + \epsilon/2. 
\]
Since $m_\epsilon$ is finite then for any $\delta > 0$, there exists $N$ such that for all $n \geq N$, 
\[
\mathbb{P}\Big( \underset{j}{\max} | H_{n,1}(t_j) - H_1(t_j)| \geq \epsilon/2\Big) \leq \delta
\]
which then implies that 
\begin{align*}
\mathbb{P}\Big( \underset{t \in [0,1]}{\sup} |H_{n,1}(t) - H_1(t)| \geq \epsilon\Big)  & \leq \mathbb{P}\Big(\frac{\epsilon}{2} + \underset{j}{\max} | H_{n,1}(t_j) - H_1(t_j)| \geq \epsilon\Big) \\
& = \mathbb{P}\Big( \underset{j}{\max} | H_{n,1}(t_j) - H_1(t_j)| \geq \epsilon/2\Big)  \to 0
\end{align*}
\end{proof} 

%\begin{lemma} \label{lemma: monotonemFDR}
%	The marginal FDR rate of the decision rule $\delta_i = 1\{v_{\alpha,i} \geq t\}$, defined as $Q(t)$, is monotonically decreasing in $t$. 
%	\end{lemma} 

%\begin{proof}[Proof of Lemma \ref{lemma: monotonemFDR}]
%We can write $Q(t) 	= \int_t^1 (1- v) f(v) dv / \int_t^1 f(v) dv$ where $f(v)$ is the density function of $v_{\alpha,i}$ which exists under Assumption \ref{ass1}. Taking derivative with respect to $t$ shows that $Q(t)$ shows it is monotonically decreasing. 
%\end{proof} 

\begin{lemma} \label{lemma: A2}
	Under Assumption \ref{ass1} and $\alpha < 1- \gamma$, $Q(1-\gamma) < \gamma$. 
\end{lemma}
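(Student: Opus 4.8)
The plan is to work directly from the definition $Q(t)=H_1(t)/H_0(t)$ with $H_0(t)=\mathbb{P}(v_{\alpha,i}\ge t)$ and $H_1(t)=\mathbb{E}[(1-v_{\alpha,i})1\{v_{\alpha,i}\ge t\}]$, where $v_{\alpha,i}=\mathbb{P}(\theta_i\ge\theta_\alpha\mid D_i)$ and $D_i$ is the data available for unit $i$ (the pair $(Y_i,\sigma_i)$ when variances are observed, $(Y_i,S_i)$ otherwise). I would establish three things, in order: (i) $\mathbb{P}(v_{\alpha,i}>1-\gamma)>0$, which simultaneously guarantees that $Q(1-\gamma)$ is well defined ($H_0(1-\gamma)>0$) and supplies the slack needed for a strict inequality; (ii) the elementary pointwise bound $1-v_{\alpha,i}\le\gamma$ on the event $\{v_{\alpha,i}\ge 1-\gamma\}$, which is strict on $\{v_{\alpha,i}>1-\gamma\}$; and (iii) the assembly of (i)--(ii) into $Q(1-\gamma)<\gamma$.

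For step (i) the claim is that the essential supremum of $v_{\alpha,i}$ equals $1$ under Assumption \ref{ass1}. Since $\alpha\in(0,1)$, the cutoff $\theta_\alpha=G^{-1}(1-\alpha)$ lies strictly below the essential supremum of the support of the $\theta$-marginal of $G$, so there is positive $G$-mass on an interval $(\theta_\alpha,\theta_\alpha+\epsilon)$. Holding the nuisance component of $D_i$ in a bounded set and letting $Y_i\to+\infty$, a Laplace-type estimate for the Gaussian likelihood — which is increasing in $\theta$ once $y$ exceeds the relevant support — shows that the posterior mass $v_\alpha(D_i)$ tends to $1$; since by Assumption \ref{ass1} the data $D_i$ has a density that is positive on a region in which $Y_i$ can be arbitrarily large, the event $\{v_{\alpha,i}>1-\epsilon\}$ has positive probability for every $\epsilon>0$. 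Taking $\epsilon=\gamma$ gives $\mathbb{P}(v_{\alpha,i}>1-\gamma)>0$ and hence $H_0(1-\gamma)>0$. I expect this to be the only real obstacle, as the concentration argument must be checked across the three modelling environments: the heterogeneous known-variance case is the Laplace estimate just described, the homogeneous case is its specialization, and the panel case $(Y_i,S_i)$ is analogous with $S_i$ restricted to a compact neighbourhood of a typical value while $Y_i\to\infty$.

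For steps (ii)--(iii), write $A=\{v_{\alpha,i}\ge 1-\gamma\}$ and $B=\{v_{\alpha,i}>1-\gamma\}\subseteq A$. On $A$ one has $1-v_{\alpha,i}\le\gamma$ and on $B$ strictly $1-v_{\alpha,i}<\gamma$, so
\[
H_1(1-\gamma)=\mathbb{E}\big[(1-v_{\alpha,i})1_A\big]=\gamma\,\mathbb{P}(A\setminus B)+\mathbb{E}\big[(1-v_{\alpha,i})1_B\big]<\gamma\,\mathbb{P}(A\setminus B)+\gamma\,\mathbb{P}(B)=\gamma\,H_0(1-\gamma),
\]
the strict inequality using $\mathbb{P}(B)>0$ from step (i). Dividing by $H_0(1-\gamma)>0$ gives $Q(1-\gamma)<\gamma$. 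Finally I would note where the standing hypothesis $\alpha<1-\gamma$ is used in the surrounding argument: by the tower property $\mathbb{E}[v_{\alpha,i}]=\mathbb{P}(\theta_i\ge\theta_\alpha)=\alpha$, so $Q(0)=\mathbb{E}[1-v_{\alpha,i}]=1-\alpha>\gamma$, which means the threshold $\lambda_2^*=\inf\{t\in(0,1):Q(t)\le\gamma\}$ is strictly positive; the present lemma then shows $\lambda_2^*\le 1-\gamma$, and combined with the monotonicity of $Q$ established earlier this places $\lambda_2^*$ in $(0,1-\gamma]$, which is exactly what the proof of Theorem \ref{thm:oracle} requires.
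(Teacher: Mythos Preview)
Your proof is correct and, if anything, more transparent than the paper's. The paper introduces $\bar Q(t)=\mathbb{E}[(1-v_{\alpha,i}-\gamma)1\{v_{\alpha,i}\ge t\}]=H_1(t)-\gamma H_0(t)$, differentiates to show $\bar Q$ is decreasing on $(0,1-\gamma)$ and increasing on $(1-\gamma,1)$, and then concludes from $\bar Q(1)=0$ that the minimum value $\bar Q(1-\gamma)$ is strictly negative. You instead write $Q(1-\gamma)$ as the conditional expectation $\mathbb{E}[1-v_{\alpha,i}\mid v_{\alpha,i}\ge 1-\gamma]$ and bound it pointwise: $1-v\le\gamma$ on the conditioning event, strictly on a positive-probability sub-event. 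Both arguments ultimately rest on the same fact, $\mathbb{P}(v_{\alpha,i}>1-\gamma)>0$, which the paper uses implicitly (strict monotonicity of $\bar Q$ on $(1-\gamma,1)$ requires the density of $v_{\alpha,i}$ to be positive there) and which you address explicitly via the Laplace-type concentration argument. Your route avoids differentiation and the reference to the previously established monotonicity of $Q$; the paper's route yields the additional structural information that $\bar Q$ has its global minimum at $1-\gamma$. You are also right that the hypothesis $\alpha<1-\gamma$ is not needed for the bare inequality $Q(1-\gamma)<\gamma$ in either argument---in the paper it appears only to record $\bar Q(0)=1-\alpha-\gamma>0$, and its real role, as you note, is to make $\lambda_2^*$ strictly positive so that the lemma delivers the two-sided localization $\lambda_2^*\in(0,1-\gamma]$ used in Theorem~\ref{thm:oracle}.
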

\begin{proof}[Proof of Lemma \ref{lemma: A2}]
	Define $\bar Q(t) = \mathbb{E}\Big[ (1- v_{\alpha,i} - \gamma) 1\{v_{\alpha,i} \geq t\}\Big]$, then $Q(t) = \gamma$ implies $\bar Q(t) = 0$. Since $Q(t)$ is monotonically decreasing in $t$ as shown in the proof of Lemma \ref{lem: nest1} , it suffices to prove that $\bar Q(1-\gamma) < 0$. To this end, note that $\nabla_t \bar Q(t) < 0$ for $t < 1- \gamma$ and $\nabla_t \bar Q(t) >  0$ for $t > 1- \gamma$, hence $\bar Q(t)$ obtains its minimun value at $t = 1- \gamma$. Note that $\bar Q(0) = 1- \gamma - \alpha$ and $\bar Q(1) = 0$, thus $\bar Q(1- \gamma) < 0$. 
\end{proof} 

\bigskip 
\noindent Now we are ready to prove Theorem \ref{thm:oracle}. We prove the first statement, as the second statement can be shown with a similar argument. First we show that, $\underset{t \leq 1- \gamma}{\sup} \Big| Q_n(t) - Q(t)\Big|  \overset{p}{\to} 0$, since, 
\begin{align*}
\Big| Q_n(t) - Q(t)\Big| & = \Big|  \frac{H_0(t) H_{n,1}(t) - H_1(t) H_{n,0}(t)}{H_{n,0}(t) H_0(t)}  \Big | = \Big| \frac{H_0(t) (H_{n,1}(t) - H_1(t)) - H_1(t) (H_{n,0}(t) - H_0(t))}{H_{n,0}(t) H_0(t)}\Big| \\
& \leq \frac{ H_0(0) \underset{t}{\sup} |H_{n,1}(t) - H_1(t)| + H_1(0) \underset{t}{\sup} | H_{n,0}(t) - H_0(t)|}{H_0(1-\gamma) \Big( H_0(1-\gamma) - \underset{t}{\sup} |H_{n,0}(t) - H_0(t)| \Big)} \overset{p}{\to} 0
\end{align*}
uniformly for any $t \leq 1- \gamma$. The last inequality holds because $\underset{t\leq 1- \gamma}{\min} H_0(t) = H_0(1-\gamma)$ by monotonicity of $H_0(t)$. With a similar argument, we can also show that $\underset{t \leq 1- \gamma}{\sup} \Big| \frac{V_n(t)}{H_{n,0}(t)} - Q(t)\Big| \overset{p}{\to} 0$. Using this result and the fact that $Q(1-\gamma) < \gamma$ by Lemma \ref{lemma: A2}, we have $\mathbb{P}\Big (|Q_n(1- \gamma) - Q(1-\gamma)| < \frac{\gamma - Q(1-\gamma)}{2}\Big ) \to 1$ and therefore, $\mathbb{P}(Q_n(1-\gamma) < \gamma ) \to 1$ and $\mathbb{P}(\lambda_{2n} \leq  1- \gamma) \to 1$ by the definition of $\lambda_{2n}$. Since $\lambda_n \leq \lambda_{2n}$ by definition, we also have $\mathbb{P}(\lambda_n \leq 1- \gamma) \to 1$. On the other hand, 
\[
Q_n(\lambda_{n}) - \frac{V_n(\lambda_{n})}{H_{n,0}(\lambda_{n})}  \geq \underset{t \leq  1- \gamma}{\inf}  \Big( Q_n(t) - Q(t) + Q(t) - V_n(t)/H_{n,0}(t)\Big) = o_p(1)
\]
Since $Q_n(\lambda_{n}) \leq Q_n(\lambda_{2n})  \leq \gamma$, it follows that 
\[
\frac{V_n(\lambda_{n}) }{H_{n,0}(\lambda_{n}) \bigvee 1} \leq \frac{V_n(\lambda_{n}) }{H_{n,0}(\lambda_n)} \leq \gamma + o_p(1).
\]
Since $\frac{V_n(\lambda_{n}) }{H_{n,0}(\lambda_{n}) \bigvee 1}$ is upper bounded by 1, by Fatou's lemma, we have 
\[
\underset{n \to \infty}{\limsup} \mathbb{E}\Big[ \frac{V_n(\lambda_{n}) }{H_{n,0}(\lambda_{n}) \bigvee 1}\Big]  \leq \gamma 
\]

\end{proof} 

\vspace{3mm} 

\begin{lemma} \label{lemma: A3} 
	Under Assumption \ref{ass1} and \ref{ass2}, as $n \to \infty$, 
	\[
	\hat \theta_\alpha \to \theta_\alpha \quad a.s.
	\]
\end{lemma}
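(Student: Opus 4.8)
The plan is to deduce convergence of the $1-\alpha$ quantiles from the strong consistency of $\hat G_n$ asserted in Assumption \ref{ass2}, exploiting the fact that under Assumption \ref{ass1} the distribution $G$ of $\theta_i$ admits a strictly positive density on the compact set $\Theta$, so that $G$ is continuous on $\mathbb{R}$ and strictly increasing in a neighbourhood of $\theta_\alpha = G^{-1}(1-\alpha)$; in particular $G^{-1}$ is continuous at $1-\alpha$. Throughout, recall $\hat\theta_\alpha = \hat G_n^{-1}(1-\alpha) = \inf\{x : \hat G_n(x) \geq 1-\alpha\}$ and that $\alpha$ is held fixed.

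First I would upgrade the pointwise almost sure convergence in Assumption \ref{ass2} to almost sure weak convergence of the whole sequence. Fix a countable dense set $D \subset \mathbb{R}$; since $G$ is continuous, every element of $D$ is a continuity point of $G$, so by Assumption \ref{ass2} and countable subadditivity there is an event $\Omega_0$ with $\mathbb{P}(\Omega_0) = 1$ on which $\hat G_n(q) \to G(q)$ for every $q \in D$ simultaneously. On $\Omega_0$, monotonicity of each $\hat G_n$ together with the continuity of $G$ yields $\hat G_n(x) \to G(x)$ for every $x \in \mathbb{R}$ (squeeze $x$ between nearby points of $D$); by P\'olya's theorem the convergence is in fact uniform, although pointwise convergence everywhere already gives $\hat G_n \Rightarrow G$.

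Second, I would run the standard quantile-sandwich argument on $\Omega_0$. Fix $\omega \in \Omega_0$ and $\epsilon > 0$ small enough that $[\theta_\alpha - \epsilon, \theta_\alpha + \epsilon]$ lies in the interior of $\Theta$. Because the density of $\theta_i$ is strictly positive there, $G(\theta_\alpha - \epsilon) < 1 - \alpha < G(\theta_\alpha + \epsilon)$ with strict inequalities. Since $\hat G_n(\theta_\alpha - \epsilon) \to G(\theta_\alpha - \epsilon) < 1-\alpha$, we have $\hat G_n(\theta_\alpha - \epsilon) < 1-\alpha$ for all $n$ large, hence $\hat\theta_\alpha \geq \theta_\alpha - \epsilon$; and since $\hat G_n(\theta_\alpha + \epsilon) \to G(\theta_\alpha + \epsilon) > 1-\alpha$, we have $\hat G_n(\theta_\alpha + \epsilon) > 1-\alpha$ for all $n$ large, hence $\hat\theta_\alpha \leq \theta_\alpha + \epsilon$. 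Thus $\limsup_n |\hat\theta_\alpha - \theta_\alpha| \leq \epsilon$ on $\Omega_0$, and letting $\epsilon \downarrow 0$ gives $\hat\theta_\alpha \to \theta_\alpha$ on $\Omega_0$, i.e. almost surely.

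The argument is essentially routine; the only points that require care are (i) passing from the ``for each continuity point, a.s.'' form of Assumption \ref{ass2} to a single probability-one event on which convergence holds simultaneously at all relevant points, which is handled by a countable dense set and the continuity of $G$, and (ii) verifying that $G$ is strictly increasing near $\theta_\alpha$ so that the quantile inversion is stable. The latter is exactly where the positive-density hypothesis of Assumption \ref{ass1} enters, and without it the quantile $\theta_\alpha$ need not even be identified. The atomic nature of the NPMLE $\hat G_n$ plays no role here, since only its values as a distribution function are used.
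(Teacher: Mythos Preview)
Your argument is correct and is essentially the approach the paper takes: the paper's entire proof is a one-line citation to Lemma 21.2 of \citeasnoun{vdv}, which states that weak convergence $F_n \Rightarrow F$ implies $F_n^{-1}(p) \to F^{-1}(p)$ at every continuity point $p$ of $F^{-1}$. You have simply written out a self-contained proof of that lemma in this setting, using Assumption~\ref{ass2} to obtain almost-sure weak convergence (via the countable-dense-set trick) and Assumption~\ref{ass1} to guarantee continuity of $G^{-1}$ at $1-\alpha$, followed by the standard quantile sandwich.
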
 
\begin{proof}[Proof of Lemma \ref{lemma: A3}] 
	See Lemma 21.2 in \citeasnoun{vdv}. 
\end{proof}
	
\begin{lemma} \label{lemma: A4} 
	Under Assumption \ref{ass1} and \ref{ass2}, as $n \to \infty$, 
	\[
	\sup_{i } |\hat v_{\alpha,i} - v_{\alpha,i}| \to 0 \quad a.s.
	\]
\end{lemma}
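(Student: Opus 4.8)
The plan is to prove the stronger statement that $\sup_{(y,\sigma)\in\RR\times[\underline{\sigma},\bar{\sigma}]}|\hat v_\alpha(y,\sigma)-v_\alpha(y,\sigma)|\to 0$ almost surely, where $\underline{\sigma}:=\sqrt{\underline{\sigma}^2}$, $\bar{\sigma}:=\sqrt{\bar{\sigma}^2}$ are the bounds of Assumption \ref{ass1}; since every $\sigma_i$ lies in $[\underline{\sigma},\bar{\sigma}]$ this immediately gives the claimed $\sup_i$ bound. Only two ingredients are needed: the weak-convergence part of Assumption \ref{ass2} (which, since $G$ is continuous, says $\hat G_n\Rightarrow G$ a.s.) and Lemma \ref{lemma: A3} ($\hat\theta_\alpha\to\theta_\alpha$ a.s.). Write $v_\alpha(y,\sigma)=N_G(y,\sigma)/D_G(y,\sigma)$ with $N_G(y,\sigma)=\int_{\theta_\alpha}^{\infty}\varphi(y|\theta,\sigma^2)\,dG(\theta)$ and $D_G(y,\sigma)=\int\varphi(y|\theta,\sigma^2)\,dG(\theta)$, and $\hat v_\alpha=N_{\hat G}/D_{\hat G}$ with $\theta_\alpha$ replaced by $\hat\theta_\alpha$ and $G$ by $\hat G_n$. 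I would split $\RR$ into a compact core $\{|y|\le M\}$ and its complement. (The variances-unobserved case of Assumption \ref{ass1} is handled identically, with $\varphi$ replaced by the Gaussian--Gamma product density in $(y,s)$ and the localization below supplied by the fact that, with $T\ge4$, $\max_{i\le n}\|(Y_i,S_i)\|$ grows slower than any power of $n$ a.s.)

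For the tails: compactness of $\operatorname{supp}(G)$ and positivity of the density of $\theta$ place $\theta_\alpha$ in the interior of $\operatorname{supp}(G)$, so one may fix continuity points $\theta_\alpha<b<b'<\sup\operatorname{supp}(G)$ with $G([b,b'])=:2\beta>0$. For $y$ beyond $\operatorname{supp}(G)$, $\theta\mapsto\varphi(y|\theta,\sigma^2)$ is monotone, so $1-v_\alpha(y,\sigma)\le \varphi(y|\theta_\alpha,\sigma^2)/(2\beta\,\varphi(y|b,\sigma^2))$, an explicit Gaussian quantity tending to $0$ as $y\to+\infty$ uniformly in $\sigma\in[\underline{\sigma},\bar{\sigma}]$; symmetrically $v_\alpha(y,\sigma)\to0$ as $y\to-\infty$. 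The same estimate applies to $\hat v_\alpha$ once $\hat\theta_\alpha<b$ and $\hat G_n([b,b'])\ge\beta$, both of which hold for all large $n$ a.s. by Lemma \ref{lemma: A3} and $\hat G_n\Rightarrow G$ respectively. Hence for each $\epsilon>0$ there is $M_\epsilon$ with $\limsup_n\sup_{|y|>M_\epsilon,\,\sigma}|\hat v_\alpha-v_\alpha|\le\epsilon$ a.s.

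On the core $[-M,M]\times[\underline{\sigma},\bar{\sigma}]$ the denominators are bounded below: $D_G\ge c_M>0$ by continuity and compactness, and $D_{\hat G}(y,\sigma)\ge\varphi(y|b,\sigma^2)\,\hat G_n([b,b'])\ge c_M'\beta>0$ for $n$ large. From $\hat v_\alpha-v_\alpha=(N_{\hat G}-N_G)/D_{\hat G}-v_\alpha(D_{\hat G}-D_G)/D_{\hat G}$ it then suffices to show $\sup_{[-M,M]\times[\underline{\sigma},\bar{\sigma}]}|D_{\hat G}-D_G|\to0$ and $\sup|N_{\hat G}-N_G|\to0$ a.s. The class $\{\theta\mapsto\varphi(y|\theta,\sigma^2):(y,\sigma)\in[-M,M]\times[\underline{\sigma},\bar{\sigma}]\}$ is uniformly bounded and equicontinuous in $\theta$, so weak convergence $\hat G_n\Rightarrow G$ upgrades to uniform convergence of the integrals (pointwise convergence on the compact parameter set plus equicontinuity, by an Arzel\`a--Ascoli / subsequence argument); this handles $D$. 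For $N$, decompose $N_{\hat G}-N_G$ into $\int\varphi(y|\theta,\sigma^2)\,\11\{\theta\in[\hat\theta_\alpha\wedge\theta_\alpha,\hat\theta_\alpha\vee\theta_\alpha)\}\,d\hat G_n(\theta)$ (up to sign) plus $\int\varphi(y|\theta,\sigma^2)\,\11\{\theta\ge\theta_\alpha\}\,(d\hat G_n-dG)(\theta)$; the first term is at most $(\sqrt{2\pi}\,\underline{\sigma})^{-1}\hat G_n([\hat\theta_\alpha\wedge\theta_\alpha,\hat\theta_\alpha\vee\theta_\alpha])\to0$ a.s. by Lemma \ref{lemma: A3} and the absence of a $G$-atom at $\theta_\alpha$, uniformly in $(y,\sigma)$, and the second goes to $0$ uniformly by the same equicontinuity argument, the integrand's only discontinuity at $\theta_\alpha$ carrying no $G$-mass.

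Combining, given $\epsilon$ I fix $M=M_\epsilon$; the core estimate gives $\sup_{|y|\le M,\,\sigma}|\hat v_\alpha-v_\alpha|\to0$ a.s., so $\limsup_n\sup_{(y,\sigma)}|\hat v_\alpha-v_\alpha|\le\epsilon$ a.s., and letting $\epsilon\downarrow0$ along a countable sequence finishes the proof. The main obstacle is precisely the passage from the pointwise, weak-convergence consequences of Assumption \ref{ass2} and Lemma \ref{lemma: A3} to \emph{uniform} statements over the unbounded $y$-range --- the explicit tail bound and the equicontinuity upgrade --- carried out while simultaneously absorbing the mismatch between $\hat\theta_\alpha$ and $\theta_\alpha$ and the fact that $\operatorname{supp}(\hat G_n)$ need not be contained in $\operatorname{supp}(G)$, which is why weak convergence is invoked to pin a fixed amount of $\hat G_n$-mass into the interval $[b,b']$.
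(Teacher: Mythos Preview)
Your proof is correct and takes a genuinely different route from the paper's. The paper exploits the Hellinger half of Assumption~\ref{ass2}: Hellinger convergence of the estimated to the true mixture density implies $L_1$ convergence, and Lipschitz continuity of both densities in the data argument upgrades $L_1$ to uniform convergence of the denominators via a short contradiction argument; the numerator is then handled through the bounded-Lipschitz characterization of weak convergence after smoothing the indicator $\11\{\theta\ge\theta_\alpha\}$ by a piecewise-linear ramp. You instead use only the weak-convergence half of Assumption~\ref{ass2} together with Lemma~\ref{lemma: A3}, splitting the $y$-axis into a compact core --- where equicontinuity of the Gaussian family gives uniform convergence of the integrals and a fixed amount of $\hat G_n$-mass pinned in $[b,b']$ bounds the denominators away from zero --- and tails, where an explicit Gaussian likelihood-ratio bound forces both $v_\alpha$ and $\hat v_\alpha$ simultaneously near $0$ or $1$. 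Your approach is more hands-on but has the virtue of making explicit why the \emph{ratio} $\hat N/\hat D$ converges uniformly even though $\hat D$ is not bounded below on $\RR$, a step the paper's write-up leaves implicit after establishing uniform convergence of numerator and denominator separately. The paper's approach is shorter and makes fuller use of the Hellinger hypothesis. One caution: your parenthetical for the variances-unobserved case is thin --- Assumption~\ref{ass1}(2) does not impose compact support on $(\theta,\sigma^2)$, so the tail step there needs real modification, and ``localization via $\max_i\|(Y_i,S_i)\|$'' would require quantitative rates you have not supplied.
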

\begin{proof}[Proof of Lemma \ref{lemma: A3}]
Since 
\[
\hat v_{\alpha,i} = \frac{\int _{\hat \theta_\alpha}^{+\infty} f(D_i| \theta) d\hat G_n(\theta)}{\int_{-\infty} ^{+\infty} f(D_i|\theta) d\hat G_n(\theta)} 
\]
where we denote $D_i$ as data with a density function $f(D_i|\theta)$. When variances are known, then $D_i = \{y_i, \sigma_i\}$ and $f(D_i|\theta) = \frac{1}{\sigma_i} \varphi((y_i - \theta)/\sigma_i)$ and when variances are unknown, then $D_i = \{y_i, s_i\}$ and $f(D_i|\theta) =   \frac{1}{\sqrt{\sigma^2/T}} \varphi((y_i - \theta)/\sqrt{\sigma^2/T}) \Gamma(s_i |r, \sigma^2/r)$ with $r = (T-1)/2$ with $\varphi(\cdot)$ and $\Gamma(\cdot|\cdot, \cdot)$ being the standard normal and gamma density function, respectively.  

We first analyze the denominator and prove 
\begin{equation} \label{eq: denom}
\underset{x}{\sup } \left | \int_{-\infty}^{+\infty} f(x|\theta) d\hat G_n(\theta) - \int_{-\infty}^{+\infty}  f(x|\theta) dG(\theta) \right |\to 0 \quad a.s.
\end{equation} 
Let $f_n (x) = \int f(x | \theta) d\hat G_n(\theta)$ and $f(x) = \int f(x|\theta) dG(\theta)$. Under Assumption \ref{ass2}, we have $\frac{1}{2} \int \Big( \sqrt{f_n(x)} - \sqrt{f(x)}\Big)^2 d\mu(x) \to 0$ almost surely, which implies that $\int |f_n(x) - f(x)| dx \to 0$ almost surely. If $f_n(x)$ and $f(x)$ are Lipschitz continuous, we proceed by contradiction. Suppose \eqref{eq: denom} doesn't hold, then there exists $\epsilon > 0$ and a sequence $\{x_n\}_{n \geq 1}$ such that $|f_n(x_n) - f(x_n)| \geq \epsilon$ for all $n$. By Lipschitz continuity of $f_n$ and $f$, there exists $C$ such that 
\begin{align*}
	&|f_n(x_n + \delta) - f_n(x_n)| \leq C \| \delta\|\\
	&|f(x_n + \delta) - f(x_n)| \leq C \|\delta\|
	\end{align*} 
And therefore there exists $\eta > 0$ and for all $\|y - x_n\| \leq \eta$, $|f_n(y) - f(y)| \geq \epsilon/2$, which then implies 
\[
\int |f_n(x) - f(x)| dx \geq \int 1\{\|y - x_n\| \leq \eta\} |f_n(y) - f(y)| dy \geq \frac{\epsilon}{2} \int  1\{\|y - x_n\| \leq \eta\} dy 
\]
which contradicts $\int |f_n(x) - f(x) | dx \to 0$ almost surely. 
To prove that the functions $f_n$ and $f$ are Lipschitz continuous. Note that it suffices to prove that for each fixed parameter $\theta$, $|f(x|\theta) - f(y|\theta)| \leq C_\theta \|x - y\|$ and $\sup_\theta C_\theta < \infty$. This clearly holds for the Gaussian density since the Gaussian density is everywhere differentiable and has bounded first derivative. Under Assumption \ref{ass1} with $T \geq 4$, the Gamma density is also everywhere differentiable and has bounded first derivative, and thus is Lipschitz continuous. 

We next analyze the numerator and show 
\begin{equation}\label{numer}
\underset{x}{\sup} \left | \int_{\hat \theta_\alpha}^{+\infty} f(x|\theta) d\hat G_n(\theta) - \int _{\theta_\alpha}^{+\infty} f(x|\theta) dG(\theta) \right | \to 0 \quad a.s.
\end{equation} 
Note that
\begin{align*}
& \left 	|\int_{\hat \theta_\alpha} ^{+\infty} f(x |\theta) d\hat G_n(\theta) - \int_{\theta_\alpha}^{+\infty} f(x|\theta) dG(\theta) \right | \\
& \leq \left| \int _{\hat \theta_\alpha}^{+\infty} f(x|\theta) d\hat G_n(\theta) - \int_{\theta_\alpha}^{+\infty} f(x|\theta) d\hat G_n(\theta)\right | + \left |\int_{\theta_\alpha}^{+\infty} f(x|\theta) d \hat G_n(\theta) - \int_{\theta_\alpha}^{+\infty} f(x|\theta) dG(\theta) \right |. 
\end{align*}
The first term converges to 0 uniformly due to Lemma \ref{lemma: A3}. To show the second term also converges to zero uniformly, we make use of the result that if $\hat G_n$ weakly converges to $G$, which holds under Assumption \ref{ass2}, then $\underset{g \in \mathcal{BL}}{\sup} |\int g d\hat G_n - \int g dG| \to 0$ where $\mathcal{BL}$ is the class of bounded Lipschitz continuous functions. Note that $f(x|\theta) 1\{\theta \geq \theta_\alpha\}$ is bounded and continuous except at $\theta = \theta_\alpha$. So we construct a smoothed version of $f(x|\theta) 1\{\theta \geq \theta_\alpha\}$, denoted as $g(x|\theta)$, by replacing $1\{\theta \geq \theta_\alpha\}$ by a piecewise linear function taking value zero for $\theta < \theta_\alpha$ and value 1 for $\theta \geq \theta_\alpha + \epsilon$ and taking the form $-\theta_\alpha/\epsilon + \theta/\epsilon$ for $\theta \in [\theta_\alpha, \theta_\alpha + \epsilon]$, then $g \in \mathcal{BL}$. The result \eqref{numer} then holds by showing that 
\[
\underset{x}{\sup}  \left |\int_{\theta_\alpha}^{\theta_\alpha + \epsilon} f(x|\theta) d\hat G_n(\theta) + \int_{\theta_\alpha}^{\theta_\alpha + \epsilon} f(x|\theta) d G(\theta) \right |  \to 0 \quad a.s. 
\]
which holds by Assumptions \ref{ass1} and \ref{ass2}. 
	\end{proof} 

\vspace{3mm}

\begin{proof}[Proof of Theorem \ref{thm:adaptive}] 
	Define analogously 
	\begin{align*}
		\hat H_{n,0}(t) & = \frac{1}{n} \sum_{i=1}^n 1\{\hat v_{\alpha_i}\geq t\}\\
				\hat H_{n,1}(t) & = \frac{1}{n} \sum_{i=1}^n (1- \hat v_{\alpha,i}) 1\{\hat v_{\alpha,i} \geq t\}\\
				\hat Q_n(t) & = \hat H_{n,1}(t)/ \hat H_{n,0}(t)\\
	%			\hat V_n(t) & = \frac{1}{n} \sum_{i=1}^n 1\{\hat v_{\alpha,i}\geq t\} 1\{\theta_i \leq \theta_\alpha\}
		\end{align*} 
	We first show 
	\begin{align*}
		&\underset{t\in [0,1] } {\sup} | \hat H_{n,0}(t) - H_0(t)|\overset{p}{\to} 0\\
		& \underset{t\in [0,1] } {\sup} | \hat H_{n,1}(t) - H_1(t)|\overset{p}{\to} 0
		\end{align*} 
	We will prove the second statement now, and the first can be proved using a similar argument. To prove the second statement, it suffices to show that 
	\[
	\underset{t \in [0,1]}{\sup} \left | \hat H_{n,1}(t) - H_{n,1}(t)\right| \overset{p}{\to} 0
	\]
To this end, note 
\begin{align*}
&	\underset{t\in [0,1]}{\sup} \left |\frac{1}{n} \sum_i (1- \hat v_{\alpha,i}) 1\{\hat v_{\alpha_i} \geq t\} - \frac{1}{n} \sum_i (1-v_{\alpha,i}) 1\{v_{\alpha,i}\geq t\}\right| \\
	&  = \underset{t\in [0,1]}{\sup}  \left | \frac{1}{n} \sum_i (1- \hat v_{\alpha,i}) 1\{\hat v_{\alpha_i} \geq t\} - \frac{1}{n} \sum_i (1- v_{\alpha,i}) 1\{\hat v_{\alpha,i} \geq t\}\right| \\
	& + \underset{t \in [0,1]}{\sup} \left | \frac{1}{n} \sum_i (1- v_{\alpha,i}) 1\{\hat v_{\alpha,i} \geq t\} - \frac{1}{n} \sum_i (1- v_{\alpha,i}) 1\{v_{\alpha,i} \geq t\}\right| \\
	& \leq \frac{1}{n} \sum_i |\hat v_{\alpha,i}- v_{\alpha,i}| + \underset{t\in [0,1]}{\sup} \frac{1}{n} \sum_i \left | 1\{\hat v_{\alpha,i} \geq t\} - 1\{v_{\alpha,i} \geq t\}\right| 
\end{align*}
The first term is implied by the result in Lemma \ref{lemma: A4}. The second term can be written as 
\begin{align*} 
& \underset{t \in [0,1]}{\sup} 	\frac{1}{n}\sum_i \Big  |1\{\hat v_{\alpha,i} \geq t\} - 1\{v_{\alpha,i} \geq t\}\Big  | \\
	& =\underset{t \in [0,1]}{\sup}  \frac{1}{n} \sum_i \Big  [ 1\{\hat v_{\alpha,i} \geq t, v_{\alpha_i} < t\} + 1\{\hat v_{\alpha,i} < t, v_{\alpha,i} \geq t\}\Big ]\\
	& = \underset{t \in [0,1]}{\sup}  \frac{1}{n} \sum_i \Big[ 1\{\hat v_{\alpha,i} \geq t, t - e < v_{\alpha_i} < t\} + 1\{\hat v_{\alpha,i} < t, t \leq v_{\alpha,i} < t + e\} \Big] \\
	& + \frac{1}{n} \sum_i \Big[ 1\{\hat v_{\alpha, i} \geq t, v_{\alpha,i} < t - e\} + 1\{\hat v_{\alpha,i} < t, v_{\alpha,i} \geq t + e\}\Big]  \\
	& \leq\underset{t \in [0,1]}{\sup}  \frac{1}{n} \sum_i 1\{ t - e \leq v_{\alpha,i} \leq t + e\} + \frac{1}{ne} \sum_i |\hat v_{\alpha,i} - v_{\alpha,i}| \\
	& \leq \underset{t \in [0,1]}{\sup} | H_0(t+e) - H_0(t-e)| + 2 \underset{t \in [0,1]}{\sup} |H_{n,0}(t) - H_0(t)| +  \frac{1}{ne} \sum_i |\hat v_{\alpha,i} - v_{\alpha,i}| 
\end{align*} 
for some $e> 0$ arbitrarily small and bounded away from zero, the right hand side converges to zero in probability by results in Lemma \ref{lemma: A1} and the uniform continuity of $H_0$. Using similar arguments in the proof of Theorem \ref{thm:oracle}, we can establish that $\underset{t \leq 1- \gamma} {\sup} \Big| \hat Q_n(t) - Q(t) \Big| \overset{p}{\to} 0$ and $\underset{t\leq 1- \gamma}{\sup}  \Big| \frac{\hat V_n(t)}{\hat H_{n,0}(t)} - Q(t) \Big| \overset{p}{\to} 0$ with $\hat Q_n(t)  = \hat H_{n,1}(t) / \hat H_{n,0}(t)$ and $\hat V_n(t)= \frac{1}{n} \sum_{i=1}^n 1\{\hat v_{\alpha,i} \geq t\} 1\{ \theta_i \leq \theta_\alpha\}$ and consequently $\underset{n \to \infty} {\limsup} \mathbb{E}\Big[ \frac{\hat V_n(\hat \lambda_n)}{\hat H_{n,0}(\hat \lambda_n) \bigvee 1}\Big] \leq \gamma$. 
	\end{proof} 

\vspace{3mm} 
\begin{proof}[Proof of Theorem \ref{thm:adaptivepower}]

	We first show that $\hat \lambda_{1n}\overset{p}{\to} \lambda_{1}^*$ and $\hat \lambda_{2n} \overset{p}{\to} \lambda_2^*$, then by the continuous mapping theorem, we have $\hat \lambda_n = \max\{\hat \lambda_{1n}, \hat \lambda_{2n}\} \overset{p}{\to} \max\{\lambda_1^*, \lambda_2^*\} = \lambda^*$. The second statement follows from Lemma \ref{lemma: A3}. The first statement holds because by the argument for Theorem \ref{thm:adaptive}, we have 
	\begin{equation} \label{eq: FDRc}
	\underset{t \geq 1- \gamma}{\sup} \Big| \hat Q_{n}(t) - Q(t) \Big| \overset{p}{\to} 0
	\end{equation}
	And therefore for any $\epsilon>0$ not too large, we have $\underset{t \leq \lambda^* - \epsilon}{\inf} Q(t) > \gamma$ and $Q(\lambda^* + \epsilon) > \gamma$ by monotonicity of $Q(t)$. Combined with \eqref{eq: FDRc}, we have $\hat \lambda_{1n} \overset{p}{\to} \lambda_1^*$.

	Now define
	\begin{align*}
				H_{n,2} & = \frac{1}{n} \sum_{i=1}^n v_{\alpha,i} \11 \{ v_{\alpha,i} \geq t\}\\
		\hat  H_{n,2}(t)& = \frac{1}{n}\sum_{i=1}^n \hat v_{\alpha,i} \11\{\hat v_{\alpha,i} \geq t\}\\
		\hat U_{n}(t) & = \frac{1}{n} \sum_{i=1}^n \11\{\theta_i \geq \theta_\alpha, \hat v_{\alpha,i} \geq t\} \\
		U_n(t) & = \frac{1}{n} \sum_{i=1}^n \11\{\theta_i \geq \theta_\alpha, v_{\alpha,i} \geq t\}\\
		H_2(t) & = \mathbb{P}(\theta_i \geq \theta_\alpha, v_{\alpha,i} \geq t) = \alpha \beta(t) 
		\end{align*}
		It suffices to prove that $\frac{1}{n} \sum_{i=1}^n \11\{\theta_i \geq \theta_\alpha, \hat v_{\alpha,i} \geq \hat \lambda_n\} \overset{p}{\to} H_2(\lambda^*)$. 
	Using a similar argument as for Theorem \ref{thm:adaptive}, we can show 
\begin{align*}
& \underset{t \in [0,1]}{\sup}  \Big |\hat H_{n,2}(t) - H_2(t)\Big| \overset{p}{\to} 0\\
& \underset{t \in [0,1]}{\sup}  \Big |\hat U_{n}(t) - H_2(t)\Big| \overset{p}{\to} 0
\end{align*}
Combining this  with the result that $\hat \lambda_n \overset{p}{\to} \lambda^*$, we have $\frac{1}{n} \sum_{i=1}^n \11\{\theta_i \geq \theta_\alpha, \hat v_{\alpha,i} \geq \hat \lambda_n\} \overset{p}{\to} H_2(\lambda^*)$ by continuity of $H_2$. 
\end{proof} 
}}

\section{A Discrete Bivariate Example}
\label{sec:DiscreteEx}
{\footnotesize{

In this appendix we consider a case where $G$ is a discrete distribution joint distribution
in  the pairs, $\theta,\sigma)$, in particular, 
$G(\theta,\sigma) = 0.85 \delta_{(-1, 6)} + 0.1 \delta_{(4, 2)} + 0.05 \delta_{(5, 4)}$. 
In contrast to the discrete example in Section \ref{eg: discreteS3}, 
the unobserved variance $\sigma^2$ is now clearly informative about
$\theta$ for this distribution $G$. 
	
We will focus on the capacity constraint $\alpha = 0.05$ so $\theta_\alpha = 5$. 
For $T = 9$, the level curves for tail probability and posterior mean are shown in 
Figure \ref{fig.ndpanel} and the selection set comparison for one sample realization in Figure \ref{fig.ndpanelc}. The right panel of the figure plots the selection
boundaries for the two ranking criteria for $\gamma = 10\%$.  The
non-monotonicity of $v_\alpha(y,s)$ in both $y$ and $s$ is apparent. The posterior mean criteria 
based on $\mathbb{E}(\theta|Y,S)$, prefers individuals with smaller variances 
compared to the rule based on the tail probability. 
Since sample variances $S$ are informative about $\theta$, when the sample variance 
is small and selection is based on posterior tail probability, the oracle is aware that 
such a small sample variance is only likely when $\theta=4$, hence will only make a selection when 
we observe a very large $y$. As a result, the Oracle sets a higher selection 
threshold on $y$ to avoid selecting individuals with true effect $\theta = 4$. On the other hand, 
the posterior mean criterion also tries to use information from the sample variance, 
but not as effectively for our selection objective.  This can be seen in the level curves in the 
middle panel of Figure \ref{fig.ndpanel}. When the sample variance is small, 
the posterior mean shrinks very aggressively towards 4, thereby sacrificing valuable information from $y$. 
For a wide range of values for the sample mean $y$, the posterior mean delivers a value 
close to 4, thus failing to distinguish between those with $\theta = 5$ and those with $\theta = 4$. 
Consequently, the posterior mean rule sets a lower thresholding value on $y$ for the selection region when 
sample variance is small resulting in inferior power performance,
as shown in Table \ref{tab.ndpanel}.  

\begin{figure}[h!]
	\begin{center}
		\resizebox{.95\textwidth}{!}{\includegraphics{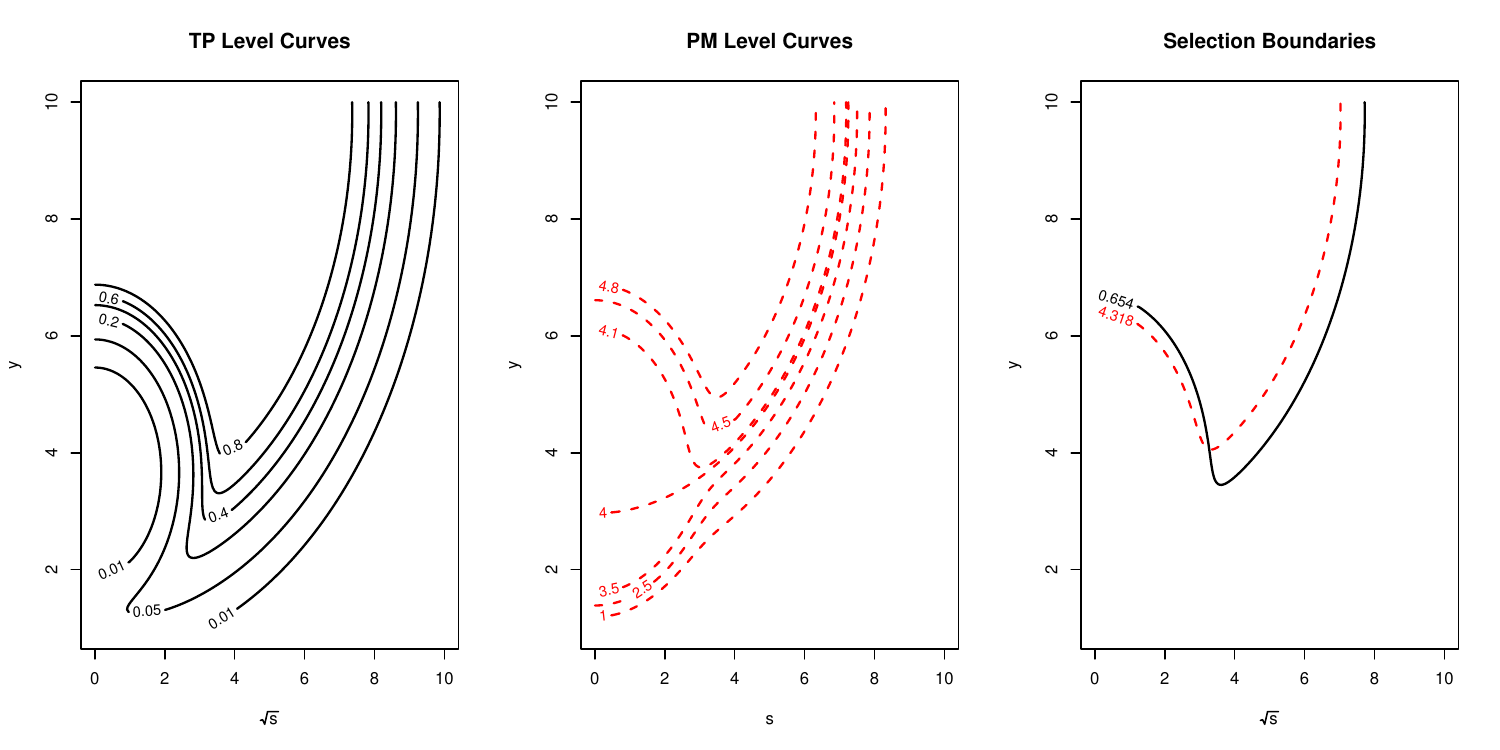}}
	\end{center}
	\caption{\small The left panel plots the level curves for the posterior tail probability 
	criterion and the middle panel depicts the level curves for posterior mean criterion. 
	The right panel plots the selection boundary based on posterior mean ranking (shown 
	as the red dashed lines) and the posterior tail probability ranking (shown as 
	the black solid lines) for $\alpha = 5\%$ and $\gamma = 10\%$ with $G(\theta,\sigma^2)$ 
	follows a three points discrete distribution.}
	\label{fig.ndpanel}
\end{figure}

\begin{figure}[h!]
	\begin{center}
		\resizebox{.95\textwidth}{!}{\includegraphics{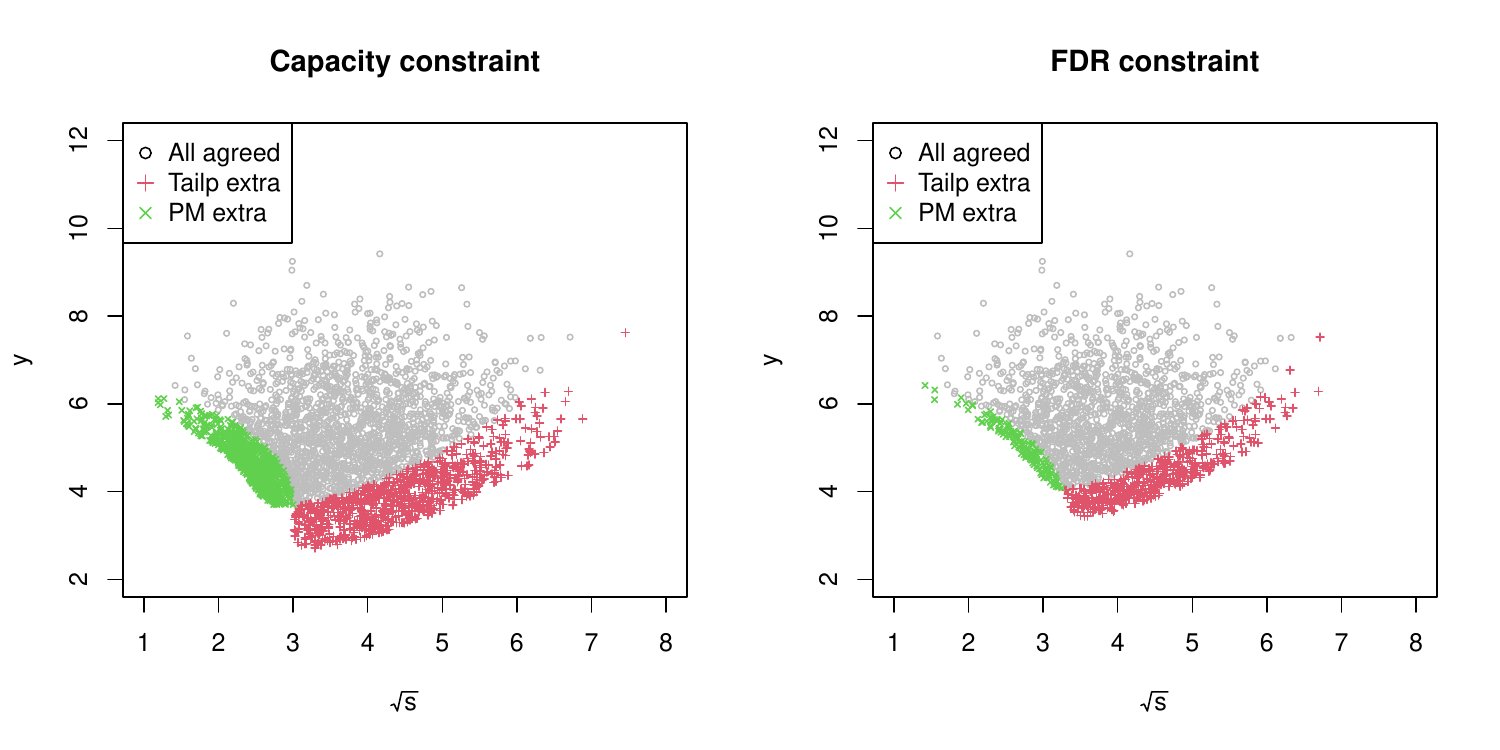}}
	\end{center}
	\caption{\small Selection set comparison for one sample realization from the three points discrete distribution model: The left panel shows in black circles the agreed selected elements by both the posterior mean and posterior tail probability criteria under the capacity constraint, extra elements selected by the posterior mean are marked in green and extra elements selected by the posterior tail probability rule are marked in red. The right panel shows the comparison of the selected sets under both the capacity and FDR constraint with $\alpha = 5\%$ and $\gamma = 10\%$.}
	\label{fig.ndpanelc}
\end{figure}

Table \ref{tab.ndpanel} reports several performance measures over 200 simulation repetitions with 
$n = 50,000$. There, we consider four additional methods for ranking: 
\begin{itemize}
    \item  MLE:  ranking of the maximum likelihood estimators, $Y_i$  for each of the $\theta_i$, 
    \item  P-values: ranking of the P-values of the conventional one-sided test of the null hypothesis 
	$H_0: \theta < \theta_\alpha$ , 
    \item  PM-NIX:  ranking of the posterior means based on the normal-inverse-chi-square (NIX) prior distribution,
    \item  TP-NIX:  ranking of the posterior tail probability based on NIX prior distribution
\end{itemize}

{\tiny{
%latex.default(round(D, digits = 3), file = "normal_discrete_panela1.tex",     rowlabel = "", cgroup = gammas, caption.loc = "bottom", caption = cap,     label = "tab.ndpanel")%
\begin{table}[!tbp]
\begin{center}
\begin{tabular}{lrrrcrrrcrrr}
\hline\hline
\multicolumn{1}{l}{\bfseries }&\multicolumn{3}{c}{\bfseries $\gamma=1\%$}&\multicolumn{1}{c}{\bfseries }&\multicolumn{3}{c}{\bfseries $\gamma=5\%$}&\multicolumn{1}{c}{\bfseries }&\multicolumn{3}{c}{\bfseries $\gamma=10\%$}\tabularnewline
\cline{2-4} \cline{6-8} \cline{10-12}
\multicolumn{1}{l}{}&\multicolumn{1}{c}{Power}&\multicolumn{1}{c}{FDR}&\multicolumn{1}{c}{SelProp}&\multicolumn{1}{c}{}&\multicolumn{1}{c}{Power}&\multicolumn{1}{c}{FDR}&\multicolumn{1}{c}{SelProp}&\multicolumn{1}{c}{}&\multicolumn{1}{c}{Power}&\multicolumn{1}{c}{FDR}&\multicolumn{1}{c}{SelProp}\tabularnewline
\hline
PM&$0.217$&$0.010$&$0.011$&&$0.482$&$0.050$&$0.025$&&$0.580$&$0.100$&$0.032$\tabularnewline
TP&$0.252$&$0.010$&$0.013$&&$0.561$&$0.050$&$0.030$&&$0.697$&$0.100$&$0.039$\tabularnewline
P-value&$0.651$&$0.349$&$0.050$&&$0.651$&$0.350$&$0.050$&&$0.651$&$0.349$&$0.050$\tabularnewline
MLE&$0.611$&$0.390$&$0.050$&&$0.611$&$0.390$&$0.050$&&$0.610$&$0.390$&$0.050$\tabularnewline
PM-NIX&$0.611$&$0.390$&$0.050$&&$0.611$&$0.390$&$0.050$&&$0.610$&$0.390$&$0.050$\tabularnewline
TP-NIX&$0.619$&$0.382$&$0.050$&&$0.619$&$0.382$&$0.050$&&$0.618$&$0.382$&$0.050$\tabularnewline
\hline
\end{tabular}
\caption{Performance comparison for ranking procedures based on posterior mean,  
posterior tail probability, the P-value and the MLE of $\theta_i$.  
All results are based on 200 simulation repetitions with n = 50,000 for 
G following the three point discrete distribution and T = 9 or when G is assumed 
to follow a normal-inverse-chi-square distribution. For the first two rows, number reported in the table correspond to performance when both capacity and FDR constraints are in place. For the last four rows, only capacity constraint is in place.\label{tab.ndpanel}}\end{center}
\end{table}

}}

The first two of these selection rules ignore the compound decision perspective of the problem entirely. 
The other two ranking criteria we consider are based on posterior mean and tail probability assuming 
$G$ follows a normal-inverse-chi-square (NIX) distribution (denoted as PM-NIX and TP-NIX in 
Table \ref{tab.ndpanel}). The parameters of the NIX distribution are estimated from the data, 
hence these rules can be viewed as generalization of James-Stein estimator for homogeneous variances 
and the Efron-Morris shrinkage estimator for known heterogeneous variances case. We refer the details 
of the NIX distribution and the posterior distribution of $(\theta, \sigma^2)$ to 
Example \ref{eg: normaNIX}. 

%Since we also calculate the threshold $\theta_\alpha$ based on the NIX distribution, it is based on a misspecified distribution of $(\theta,\sigma^2)$, so the FDR may not be controlled. 

We report the power and false discovery rates for the posterior mean (denoted as PM) and posterior tail probability (denoted as TP) selection as well as 
the proportion of selected observations for $\alpha = 5\%$ and for several different $\gamma$ under both capacity and FDR control. For all other four selection rules we only impose the capacity constraint, as how they are usually implemented in current practise. 
For the PM and TP rules, from the proportion selected observations we can infer whether the FDR constraint or the capacity 
constraint is binding in each configuration.  Ranking based on the posterior tail probability 
clearly has better power performance for each of configurations 
 when compared to the posterior mean ranking. When selecting as few as 5\%, FDR constraints are binding for both PM and TP rule 
for all ranges of $\gamma \in \{1\%,5\%,10\%\}$. Among all the other rules, we see that the false discovery rate is around 40\% and PM-NIX has identical performance 
as ranking based on the MLE for $\theta$; this can be understood by noting that the posterior mean of 
$\theta$ under the NIX prior is simply linear shrinkage of the MLE of $\theta$, hence it does not alter 
individual rankings between the two methods. TP-NIX behaves similarly to PM-NIX, with slightly better power and slightly lower false discovery rate.

%, indicating that the task is 
%challenging. PM-NIX has identical performance 
%as ranking based on the MLE for $\theta$; this can be understood by noting that the posterior mean of 
%$\theta$ under the NIX prior is simply linear shrinkage of the MLE of $\theta$, hence it does not alter 
%individual rankings between the two methods.  TP-NIX behaves similarly to PM-NIX for most of 
%the cases, and this is reflected in the selection region plotted in Figure \ref{fig.ndpanel}. 
%For $\alpha = 5\%$ and $\gamma = 10\%$, the selection boundary is almost a 
%linear function of the sample variances, which completely misses the dependence structure between 
%$\theta$ and $\sigma^2$. The power performances of PM-NIX and TP-NIX suggest that not 
%only the compound decision perspective is important, but misspecifying $G$ can also be costly. 

%When our goal is to select the top 10\% or top 15\%, the problem becomes a bit easier. 
%For $\gamma = 5\%$ or $10\%$, the FDR constraints no longer binds and the posterior mean and 
%tail probability criteria performs similarly, and are 
%both considerably better than the other four selection rules. But we also see that 
%the posterior mean selection rule makes more false discoveries than those made by the 
%posterior tail probability rule. 
%}}

\section{A Counterexample:  Non-nested Selection Regions} \label{sec: S3eg}
{\footnotesize{

Thus far we have stressed conditions under which selection regions are
nested with respect to $\alpha$, that is, for $\alpha_1 < \alpha_2 < \cdots < \alpha_m$, we 
have $ \Omega_{\alpha_1} \subseteq \Omega_{\alpha_2}  \subseteq \cdots \subseteq \Omega_{\alpha_m}$, 
for the selection regions.  However, this need not hold when there is variance heterogeneity
and when nesting fails we can have seemingly anomalous situations in which units are
selected by the tail probability rule at some stringent, low $\alpha$, but are then 
rejected for some less stringent, larger $\alpha$'s.
To illustrate this phenomenon we will neglect the FDR constraint and focus on our
discrete mixing distribution, $G = 0.85 \delta_{-1} + 0.10 \delta_{2} + 0.05 \delta_{5}$,
with $\sigma \sim U[1/2,4]$.  The selection regions are depicted
in Figure \ref{fig.Ceg12} for $\alpha \in \{0.04, 0.05, 0.06, 0.08\}$.  Units are selected
when their observed pair, $(y_i , \sigma_i)$ lies above these curves for various $\alpha$'s.
When $\sigma$ is small we see, as expected, that selection is nested: if a unit is selected
at low $\alpha$ it stays selected at larger $\alpha$'s.  However, when $\sigma = 3$, we
see that there are units selected at $\alpha = 0.05$ and even $\alpha = 0.04$ and yet
they are rejected for $\alpha = 0.06$.  How can this be?

\begin{figure}[h!]
    \begin{center}
    \resizebox{.8\textwidth}{!}{\includegraphics{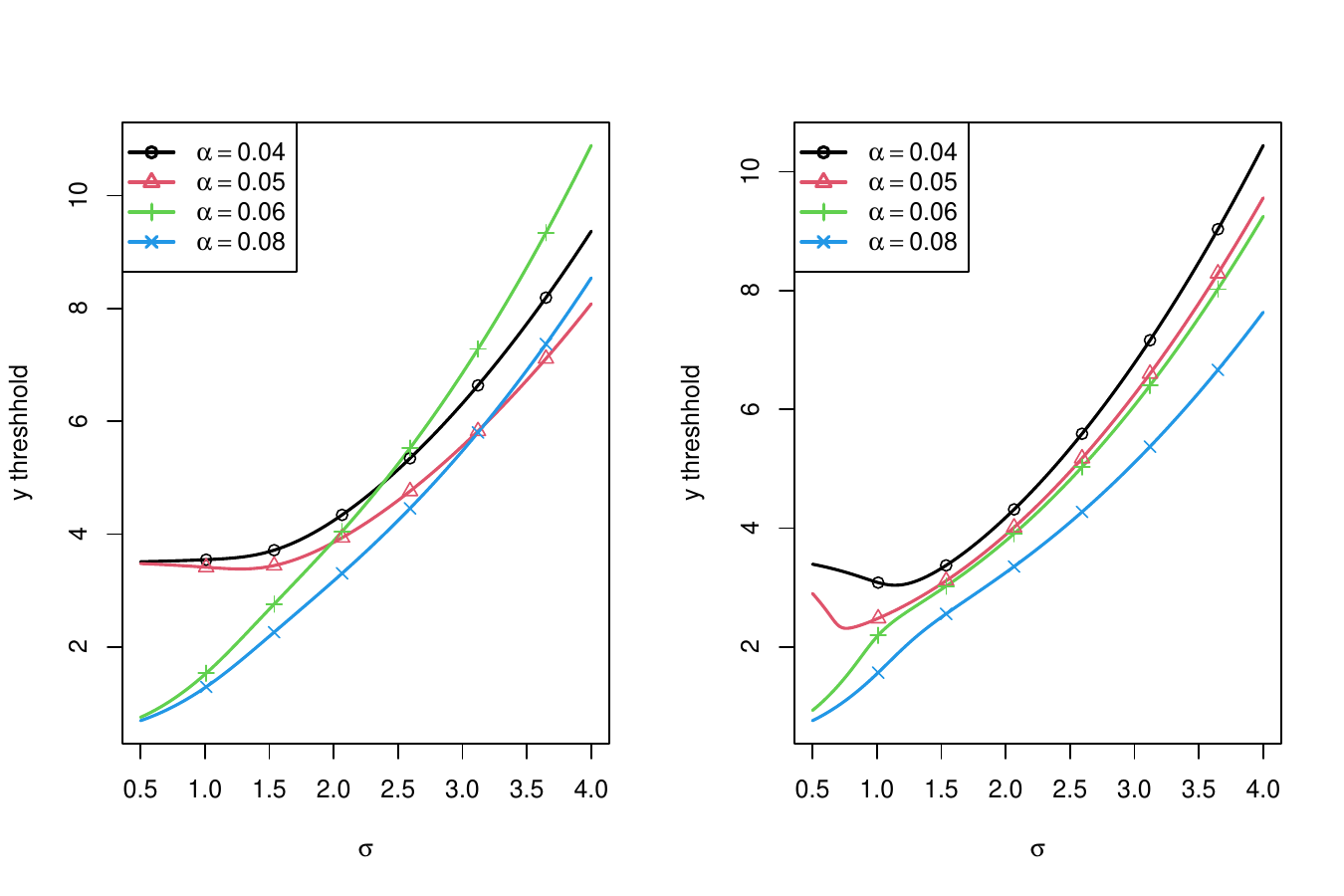}}
    \end{center}
    \caption{\small Oracle selection boundaries (with just capacity constraint) for several $\alpha$ levels 
    for the tail probability criterion (left panel) and posterior mean criterion (right panel) 
    with a discrete example with $G = 0.85 \delta_{-1} + 0.10 \delta_{2} + 0.05 \delta_{5}$,
    and $\sigma \sim U[1/2,4]$.  Crossing of the boundaries implies that the selection
    regions are non-nested as explained in the text.}
    \label{fig.Ceg12}
\end{figure}

Imagine you are the oracle, so you know $G$, and when you decide to select with $\alpha =0.06$
you know that you will have to select a few $\theta = 2$ types, since there are only
5 percent of the $\theta = 5$ types.  Your main worry at that point is to try to avoid
selecting any $\theta = -1$ types;  this can be accomplished but only by avoiding the
high $\sigma$ types.  In contrast when $\alpha = 0.05$ so we are trying to vacuum up
all of the $\theta = 5$ types it is worth taking more of a risk with high $\sigma$ types
as long as their $y_i$ is reasonably large.

The crossing of the selection boundaries and non-nestedness of the selection regions is
closely tied up with the tail probability criterion and the $\alpha$ dependent feature of the
hypothesis.  If we repeat our
exercise with the same $G$, and $\sigma$ distribution, but select according to posterior
means, we get the nested selection boundaries illustrated in Figure \ref{fig.Ceg12}.  
Proposition \ref{prop: PMnest} establishes this to be a general phenomenon for any distributution $G$.  

It should be noted that the crossing of selection boundaries we have illustrated
seems to have been anticipated in \citeasnoun{HN}, who consider similar tail criteria.
They propose a ranking scheme that assigns rank equal to the smallest $\alpha$ for
which a unit would be selected as a way to resolve the ambiguities generated by crossing.
We don't see a compelling decision theoretic rationale for this revised ranking rule,
instead we prefer to maintain some separation between the ranking and selection
problems and focus on risk assessment as a way to reconcile them.

\begin{figure}[h!]
	\begin{center}
		\resizebox{.5\textwidth}{!}{\includegraphics{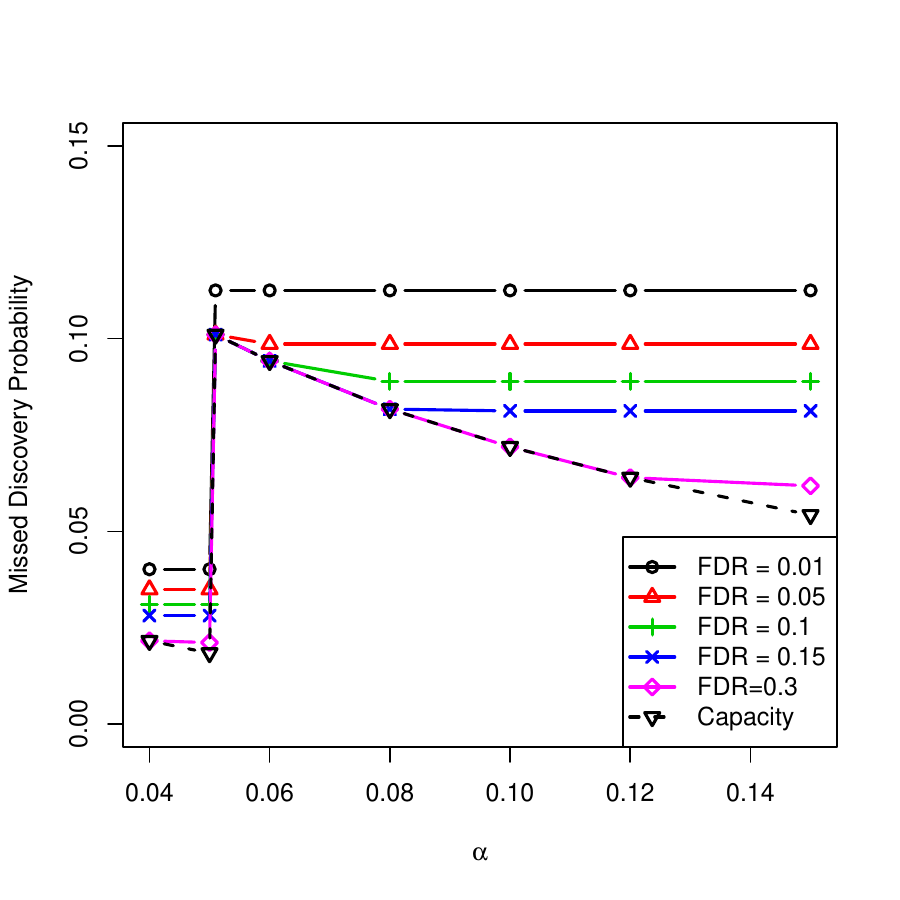}}
	\end{center}
	\caption{\small Oracle risk evaluation for several $\alpha$ levels 
		for the tail probability criterion and a discrete example with 
		$G = 0.85 \delta_{-1} + 0.10 \delta_{2} + 0.05 \delta_{5}$,
		and $\sigma \sim U[1/2,4]$. The solid lines correspond to the 
		evaluation of the loss function specified in (\ref{loss}) with 
		both capacity and FDR control constraints. The dotted line 
		corresponds to oracle risk when only the capacity constraint is imposed. }
	\label{fig.Ceg1_obj}
\end{figure}

The risk based on the loss function defined in (\ref{loss}) clearly depends on $\alpha$ and $\gamma$. More specifically it consists of three pieces, the leading term has the interpretation of ``missed discovery'' probability, which we try to minimize and the second and third pieces correspond to the FDR and capacity constraints respectively each weighted by a Lagrangian multiplier. Focusing on the first term, we have 
\begin{align*}
\mathbb{E}[ H_i (1- \delta_i)] & = \mathbb{E}[\11\{\theta_i \geq \theta_\alpha \} (1-\delta_i)]\\
& = \mathbb{P}[\theta_i \geq \theta_\alpha] - \mathbb{E}[\delta_i \11\{\theta_i \geq \theta_\alpha\}] \\
& = \mathbb{P}[\theta_i \geq \theta_\alpha]  - \int \int_{\theta_\alpha}^{+\infty} \Big[1- \Phi((t_\alpha(\lambda,\sigma) - \theta)/\sigma) \Big]dG(\theta) dH(\sigma)
\end{align*}
where $\lambda$, depends on $\alpha$ and $\gamma$, is determined by either the false discovery rate control or the capacity constraint, whichever binds. 

A feature of discrete mixing distributions, $G$, is that the first term in the loss, 
$\mathbb{P}(\theta_i \geq \theta_\alpha)$, is piece-wise constant, with jumps occurring 
only at discontinuity points of $G$, while the second term depends on both $\theta_\alpha$ 
and the cut-off values $t_\alpha(\lambda,\sigma)$. When the capacity constraint binds 
there exist  ranges of $\alpha$ such that $\theta_\alpha$ remains constant, 
while $t_\alpha(\lambda,\sigma)$ decreases for each $\sigma$, hence the risk with just capacity 
constraint binding is a decreasing function for $\alpha$ in the interval $(0.05, 0.15)$. 
On the other hand, when the FDR constraint binds, it can be shown that the cut-off value 
$t_\alpha(\lambda,\sigma)$ is constant.  
To see this recall that the cutoff $\lambda$ determined by the FDR constraint is defined as 
$\mathbb{E}[(1-v_\alpha(Y,\sigma)) \11\{v_\alpha(Y,\sigma) \geq \lambda\}] = 
\gamma \mathbb{P}(v_\alpha(Y,\sigma) \geq  \lambda)$, so when $\theta_\alpha$ is constant
over a range of $\alpha$ the distribution of $v_{\alpha}(Y,\sigma)$ does not change, 
and consequently the value $\lambda$ is constant over that range of $\alpha$.  
%When $\theta_\alpha$ stays constant for a fixed $\gamma$
%the risk is a piece-wise constant function of $\alpha$. 

Figure \ref{fig.Ceg1_obj} evaluates risk based on the optimal selection rule for
various $\alpha$ and FDR levels, $\gamma \in \{0.01, 0.05, 0.1, 0.15, 0.3\}$. 
The solid curves correspond to risk evaluated at the optimal Bayes rule defined in 
Proposition \ref{prop: rule_ysigma}. The dotted line corresponds to the risk evaluated 
at the Bayes rule when only the capacity constraint is imposed. As $\gamma$ increases, 
the risk decreases as expected. For FDR levels as stringent as $\gamma = 0.01$, 
the FDR constraint binds and the risk is piece-wise constant.  As the FDR level is relaxed, 
there are range of $\alpha$ such that capacity constraint becomes binding, and 
the risk decreases after the initial jump at $\alpha = 0.05$. 

The evaluation of the risk for this particular example indicates that it is easier to select the top 5\% individuals, 
those with $\theta = 5$. As we intend to select more in the right tail, we are facing more uncertainty. 
This also motivates a more systematic choice of $(\alpha,\gamma)$.  Although selection based on the tail
probability criterion can lead to non-nested selection regions, we conclude this sub-section by demonstrating
that posterior mean selection is necessarily nested.

\begin{proposition}\label{prop: PMnest}
Let the density function of $y$ conditional on $\theta$ and $\sigma$ be denoted as $f(y|\theta,\sigma)$. 
If selection is based on the posterior mean, $\delta_i = \{M(y, \sigma) \geq c(\alpha, \gamma)\}$ with 
\[
M(y, \sigma) := \frac{\int \theta f(y|\theta,\sigma) dG(\theta)}{\int f(y|\theta,\sigma)dG(\theta)}
\]
and $c(\alpha, \gamma)$ is chosen to satisfy both the capacity constraint at level $\alpha$ and the FDR constraint at 
level $\gamma$, then the selection regions, defined as 
$\Lambda_{\alpha,\gamma}= \{(y, \sigma): M(y, \sigma) \geq c(\alpha, \gamma)\}$, 
are nested, that is, for any $\alpha_1 > \alpha_2$, $\Lambda_{\alpha_2 , \gamma} \subseteq \Lambda_{\alpha_1, \gamma}$. 
\end{proposition}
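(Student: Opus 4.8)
The plan is to exploit the key structural fact that the posterior mean $M(y,\sigma)$ does \emph{not} depend on $\alpha$ or $\gamma$ — only the threshold $c(\alpha,\gamma)$ does. This is the crucial difference from the tail-probability criterion, whose ranking function $v_\alpha$ itself shifts with $\alpha$ through the moving null $\theta_\alpha$. Consequently, all the selection regions $\Lambda_{\alpha,\gamma}$ are sublevel sets (on the complement, superlevel sets) of one fixed function, and nestedness reduces entirely to showing that the threshold $c$ is monotone: $c(\alpha_1,\gamma) \le c(\alpha_2,\gamma)$ whenever $\alpha_1 > \alpha_2$. Indeed, if this holds, then $M(y,\sigma) \ge c(\alpha_2,\gamma)$ immediately implies $M(y,\sigma) \ge c(\alpha_1,\gamma)$, i.e.\ $\Lambda_{\alpha_2,\gamma} \subseteq \Lambda_{\alpha_1,\gamma}$.

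So the argument splits according to which constraint is binding at $(\alpha_2,\gamma)$, mirroring the (currently commented-out) proof of Lemma \ref{lem: nest3}. Write $c(\alpha,\gamma) = \max\{c_1(\alpha,\gamma), c_2(\alpha)\}$ where $c_2$ enforces the capacity constraint $\mathbb{P}(M(Y,\sigma) \ge c_2(\alpha)) = \alpha$ and $c_1$ enforces the FDR constraint. For the capacity part: the map $\lambda \mapsto \mathbb{P}(M(Y,\sigma) \ge \lambda) - \alpha$ is non-increasing in $\lambda$ and strictly decreasing in $\alpha$, so its root $c_2(\alpha)$ is non-increasing in $\alpha$; hence $c_2(\alpha_1) \le c_2(\alpha_2)$. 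The FDR part is the delicate one: here $\theta_\alpha = G^{-1}(1-\alpha)$ enters the \emph{definition} of a false discovery, $H_{0i}:\theta_i < \theta_\alpha$, so the event $\{\theta < \theta_\alpha\}$ grows as $\alpha$ increases. I would show that the FDR of the rule $\{M(Y,\sigma)\ge\lambda\}$, namely $\mathbb{P}(\theta < \theta_\alpha, M(Y,\sigma)\ge\lambda)/\mathbb{P}(M(Y,\sigma)\ge\lambda)$, is non-decreasing in $\alpha$ for each fixed $\lambda$ (since enlarging the null region can only increase the numerator while leaving the denominator fixed), and is non-increasing in $\lambda$ (a monotonicity one gets because $M$ is the posterior mean, so larger $M$ is positively associated with larger $\theta$ — this is where a covariance/association inequality enters). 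Combining: to keep the FDR pinned at $\gamma$ when $\alpha$ increases, the threshold $c_1$ must not decrease — wait, it must \emph{increase} or stay constant, giving $c_1(\alpha_1,\gamma) \ge c_1(\alpha_2,\gamma)$, which goes the \emph{wrong} way.

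The resolution — and what I expect to be the main obstacle to state cleanly — is that nestedness does not follow from monotonicity of $c_1$ alone; rather it follows from the fact that the posterior mean ranking is \emph{invariant} to $\alpha$, so that \emph{whatever} subset of size $\alpha_1 n$ the rule would select, it is obtained by lowering the single fixed threshold, and the FDR-optimal choice at $(\alpha_1,\gamma)$ selects at least the FDR-optimal set at $(\alpha_2,\gamma)$ because the latter already satisfies $\mathrm{FDR}\le\gamma$ \emph{as a subset of the candidate pool for $\alpha_1$} — one must check that a set which is FDR-$\gamma$-feasible under the null $\theta_\alpha$ remains FDR-$\gamma$-feasible under the enlarged null $\theta_{\alpha_1}$, which is false in general, OR argue instead directly on the $(y,\sigma)$ level that the optimal rule at $\alpha_1$ is a threshold rule on $M$ at a level no higher than $c(\alpha_2,\gamma)$ by a power-maximization exchange argument: any unit in $\Lambda_{\alpha_2,\gamma}$ has $M$-value among the top-ranked, and dropping it from the $\alpha_1$-selection in favor of a lower-$M$ unit would strictly decrease power (by Remark \ref{remark: power}) without relaxing either constraint. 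I would therefore carry out the proof via this power-exchange route: suppose for contradiction some $(y_0,\sigma_0) \in \Lambda_{\alpha_2,\gamma} \setminus \Lambda_{\alpha_1,\gamma}$; since $(y_0,\sigma_0)$ has $M(y_0,\sigma_0) \ge c(\alpha_2,\gamma)$ yet is excluded at the looser $\alpha_1$, the $\alpha_1$-optimal rule must be selecting some positive-mass region of lower $M$-value instead, and swapping that region for the $M \ge c(\alpha_2,\gamma)$ region weakly tightens the capacity constraint, weakly relaxes (or holds) the FDR constraint by the posterior-mean association property, and strictly raises the missed-discovery objective's complement — contradicting optimality. The two loose ends to nail down are (i) the association inequality linking high $M(Y,\sigma)$ to low $\mathbb{P}(\theta<\theta_\alpha \mid Y,\sigma)$, which follows since $M$ is itself the posterior mean of $\theta$, and (ii) handling the boundary/randomization of the last selected unit, which is routine.
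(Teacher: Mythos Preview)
Your overall strategy is exactly right and matches the paper's: since $M(y,\sigma)$ does not depend on $\alpha$, nestedness reduces to showing $c(\alpha_1,\gamma)\le c(\alpha_2,\gamma)$ for $\alpha_1>\alpha_2$, and you handle the capacity threshold $c_2$ correctly. The problem is a sign error in the FDR step that sends you off on an unnecessary detour.

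You write that ``the event $\{\theta<\theta_\alpha\}$ grows as $\alpha$ increases.'' It shrinks. Since $\theta_\alpha=G^{-1}(1-\alpha)$, a larger $\alpha$ means a \emph{smaller} $\theta_\alpha$, so $\{\theta<\theta_{\alpha_1}\}\subseteq\{\theta<\theta_{\alpha_2}\}$ when $\alpha_1>\alpha_2$. Consequently, at any fixed threshold $\lambda$, the FDR numerator $\PP(\theta<\theta_\alpha,\,M(Y,\sigma)\ge\lambda)$ \emph{decreases} in $\alpha$ while the denominator $\PP(M(Y,\sigma)\ge\lambda)$ is unchanged. Thus the FDR at the old threshold $c_1(\alpha_2,\gamma)$, evaluated under the $\alpha_1$-null, is at most $\gamma$; to bring it back up to $\gamma$ (using the monotonicity of FDR in $\lambda$ that you correctly flag) one must \emph{lower} the threshold, giving $c_1(\alpha_1,\gamma)\le c_1(\alpha_2,\gamma)$. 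Taking the maximum with $c_2$ preserves the inequality, and nestedness follows immediately. This is precisely the paper's argument: fix the $\alpha_2$-threshold, observe the numerator drops because $\theta_{\alpha_1}\le\theta_{\alpha_2}$, conclude the threshold must fall. Your power-exchange argument is not needed; once the sign is corrected, the direct route you abandoned works.
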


\bibliography{Ranking}

\end{document}